\numberwithin{equation}{section}
\numberwithin{figure}{section}
\theoremstyle{plain}
\newtheorem{thm}{\protect\theoremname}[section]
\theoremstyle{plain}
\newtheorem{fact}[thm]{\protect\factname}
\theoremstyle{plain}
\newtheorem{lem}[thm]{\protect\lemmaname}
\theoremstyle{remark}
\newtheorem{claim}[thm]{\protect\claimname}
\theoremstyle{remark}
\newtheorem{rem}[thm]{\protect\remarkname}
\theoremstyle{plain}
\newtheorem{prop}[thm]{\protect\propositionname}
\theoremstyle{plain}
\newtheorem{cor}[thm]{\protect\corollaryname}
\newcommand\thankssymb[1]{\textsuperscript{\@fnsymbol{#1}}}
\setlist[itemize,1]{leftmargin=\dimexpr 26pt-.1in}
\setlist[enumerate,1]{leftmargin=\dimexpr 26pt-.1in}
\numberwithin{equation}{section}
\def\@seccntformat#1{%
  \protect\textup{%
    \protect\@secnumfont
    \expandafter\protect\csname format#1\endcsname % <--- added
    \csname the#1\endcsname
    \protect\@secnumpunct
  }%
}
\newcommand{\SparseDotfill}{\leavevmode \leaders \hb@xt@ .7em{\hss .\hss }\hfill \kern \z@}
\leaders\hbox{\,.\,}\hfil}
\newtheorem{theorem}{Theorem}[section]
\newcommand{\cube}[2]{\mathrm{cube}_{#1,#2}}
\providecommand{\claimname}{Claim}
\providecommand{\corollaryname}{Corollary}
\providecommand{\factname}{Fact}
\providecommand{\lemmaname}{Lemma}
\providecommand{\propositionname}{Proposition}
\providecommand{\remarkname}{Remark}
\providecommand{\theoremname}{Theorem}
\begin{document}
\title{An Optimal ``It Ain't Over Till It's Over'' Theorem}
\author{Ronen Eldan\thankssymb{1}}
\thanks{\thankssymb{1}Microsoft Research. Partially supported by NSF grant
CCF-1900460. \texttt{E-mail:roneneldan@microsoft.com}}
\author{Avi Wigderson\thankssymb{2}}
\thanks{\thankssymb{2}Institute for Advanced Study, Princeton, NJ 08540.
Supported by NSF grant CCF-1900460. \texttt{E-mail:avi@ias.edu}}
\author{Pei Wu\thankssymb{3}}
\thanks{\thankssymb{3}Institute for Advanced Study, Princeton, NJ 08540.
Supported by NSF grant CCF-1900460. \texttt{E-mail:pwu@ias.edu}}
\begin{abstract}
We study the probability of Boolean functions with small max influence
to become constant under random restrictions. Let $f$ be a Boolean
function such that the variance of $f$ is $\Omega(1)$ and all its
individual influences are bounded by $\tau$. We show that when restricting
all but a $\rho=\tilde{\Omega}((\log1/\tau)^{-1})$ fraction of
the coordinates, the restricted function remains nonconstant with
overwhelming probability. This bound is essentially optimal, as witnessed
by the tribes function $\TR=\AND_{n/C\log n}\circ\OR_{C\log n}$.

We extend it to an anti-concentration result, showing that the restricted
function has nontrivial variance with probability $1-o(1)$. This
gives a sharp version of the ``it ain't over till it's over'' theorem
due to Mossel, O'Donnell, and Oleszkiewicz. Our proof is discrete,
and avoids the use of the invariance principle.

We also show two consequences of our above result:
(i) As a corollary, we prove that for a uniformly random input $x$, the
block sensitivity of $f$ at $x$ is $\tilde{\Omega}(\log1/\tau)$
with probability $1-o(1)$. This should be compared with the implication
of Kahn, Kalai and Linial's result, which implies that the average block
sensitivity of $f$ is $\Omega(\log1/\tau)$. (ii) Combining our proof with
a well-known result due to O'Donnell, Saks, Schramm and Servedio, 
one can also conclude that: 
Restricting all but a $\rho=\tilde\Omega(1/\sqrt{\log (1/\tau) })$ fraction of the coordinates of a monotone
function $f$, then the restricted function
has decision tree complexity $\Omega(\tau^{-\Theta(\rho)})$ with probability $\Omega(1)$. 

\end{abstract}

\maketitle
\thispagestyle{empty} \newpage{}

\section{Introduction}

For any Boolean function $f:\{-1,1\}^{n}\to\{0,1\}$, the individual
\emph{influence} of the $i$th coordinate is the probability of flipping
the value of $f$ by flipping $x_{i}$ on a random input $x$. Let $x \oplus (-1)^{e_i}$
denote the string obtained by flipping the $i$th coordinate of $x$, then
\[
\INF_{i}(f):=\PP_{x\in\DC}[f(x)\not= f(x\oplus (-1)^{e_i})].
\]
In this paper, we study Boolean functions with small influences, hence
functions satisfying 
\[
\INF_{\infty}(f):=\max_{i\in[n]}\INF_{i}(f)=o(1).\footnotemark
\]
\footnotetext{For the rest of the paper, we consider the function
$f$ as a family of functions. Thus here by $o(\cdot)$, we
mean ``as $n$ goes to infinity.'' The bound $o(1)$ on the influences is
worse than needed. For illustration, this is good enough as many examples
we are interested in this paper satisfy that their influences 
are $o(1)$.}

Let $\cR_{p}$ denote a $p$-\emph{random} \emph{restriction}, namely,
a randomly-chosen subcube where for each coordinate, one flips a coin
and, with probability $p$ one fixes the value of the coordinate
to $-1$ or $1$ (with equal probabilities) and, with probability
$1-p$, the coordinate is left undetermined (alive). Then $f|_{\cR_{p}}$
is a random sub-function given by restricting $f$ to the subcube.

In this paper, we study Boolean functions with small influences under
random restrictions. Our main goal is to prove a lower bound for the
probability of the function to remain nonconstant under the restriction.
We prove the following near-optimal result: 
\begin{thm}[A simplified version of Theorem~\ref{thm:main}]
\label{thm:mainintro} Given $f:\{-1,1\}^{n}\to\{0,1\}$ such that
the variance of $f$ is $\Omega(1),$ and $\tau:=\INF_{\infty}(f)=o(1).$
Let $\cR_{1-\rho}$ be a random restriction where 
\[
\rho=\Omega\left(\frac{\log\log(1/\tau)}{\log(1/\tau)}\right).
\]
Then for any $p\ge \mINF(f)^{\Theta(\rho)}$,
\[
\PP[\Var[f|_{\cR_{1-\rho}}]\le p^{\tilde{\Theta}\left(\frac{1}{\rho}\right)}]\le p.
\]
\end{thm}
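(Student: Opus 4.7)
Write $V:=\Var[f|_{\cR_{1-\rho}}]$, a random variable depending on the choice of restriction. The plan is to establish the tail bound in three stages: a first-moment lower bound via the level-$k$ inequality, a base-case anticoncentration via Paley--Zygmund, and an iterative boosting to reach the stated tail.

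For the first moment, Parseval gives $\mathbb{E}[V]=\sum_{\emptyset\neq S}\rho^{|S|}\hat{f}(S)^{2}$. The KKL theorem together with the Bonami--Beckner level-$k$ inequality imply that, for a Boolean $f$ with $\INF_{\infty}(f)\le\tau$, the Fourier spectrum of $f$ is essentially concentrated on levels $|S|\gtrsim\log(1/\tau)$ (up to lower-order corrections). The choice $\rho=\Omega(\log\log(1/\tau)/\log(1/\tau))$ is calibrated so that $\rho^{|S|}$ remains a polynomially small (in $\tau$) quantity on those bulk levels, yielding $\mathbb{E}[V]\gtrsim\tau^{O(1)}\Var(f)$. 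A companion second-moment upper bound $\mathbb{E}[V^{2}]\le C(\mathbb{E}[V])^{2}$, obtained by expanding $V^{2}$ as a four-fold Fourier sum and invoking hypercontractivity under the small-influence hypothesis, then gives a constant-probability anticoncentration $\Pr[V\ge c\,\mathbb{E}[V]]\ge c'$ via Paley--Zygmund.

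To amplify this constant-probability bound into the stated tail, I would decompose $\cR_{1-\rho}$ as a composition of $k\approx 1/\rho$ independent restrictions $\cR_{1-\sigma_{j}}$ with $\prod_{j}\sigma_{j}=\rho$, each $\sigma_{j}$ close to $1$. Conditioning on the outcome of the first $j-1$ layers, I would re-apply the base case to bound the probability that the $j$th layer shrinks the conditional variance by more than a multiplicative factor of $\tau^{O(1)}$. Chaining $k$ such nearly independent layers and taking a union bound drives the total failure probability down to $p$, while the variance threshold compresses multiplicatively across layers to the claimed $p^{\tilde{\Theta}(1/\rho)}$.

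The main obstacle is that the hypothesis $\INF_{\infty}(f)\le\tau$ does not persist under restriction: once some coordinates are fixed, $f|_{\cR}$ may well have a variable of constant influence (consider a tribes clause with all but one variable resolved), so the base case cannot be reapplied verbatim at later layers. The resolution I would pursue is to replace $\INF_{\infty}$ in the inductive step by a proxy that is (in expectation) monotone under random restriction --- for instance, a bucket decomposition tracking the spectral mass at various scales, or a weighted Fourier seminorm --- that is still strong enough to drive the level-$k$ inequality without requiring the literal small-influence hypothesis on the restricted function. A second delicate point is matching the $\log\log/\log$ factor in $\rho$ to the tribes lower bound, which will require applying the level-$k$ inequality in its sharpest form and carefully tracking constants across the $\approx 1/\rho$ layers of the iteration.
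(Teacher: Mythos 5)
Your proposal does not reach the theorem, and it diverges from the paper's route in a way that leaves the central difficulty unresolved. The paper does not argue via first/second moments of $V=\Var[f|_{\cR}]$ at all. Instead it takes a control-theoretic, pathwise view: it reveals the coordinates one at a time, treats $M(t)=f(X(t))$ as a martingale, and lets a player override an $\epsilon$-fraction of the revealed bits so as to steer $M$ to the value $1$ (by mimicking the conditioned process $Y(t)$, which is a Doob $h$-transform of $X(t)$). The feasibility of this steering is governed by the step sizes $\max_i|\partial_i f(X(t))|$, and the key technical input is a hypercontractive inequality for random restrictions (Theorem~\ref{thm:HC-ineq}) which shows these first-order coefficients stay small along the path. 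The quantitative tail on the variance then comes from concentration of the KL-divergence between the controlled and uniform processes (Lemma~\ref{lem:Q-KL-bound}), together with the Level-1 inequality and Proposition~\ref{fact:KL-to-mean}. None of this resembles a Paley--Zygmund / multi-layer boosting argument.

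Concretely, your plan has three gaps. First, the first-moment formula is off: $\mathbb{E}[V]=\sum_{S\neq\emptyset}\bigl(1-(1-\rho)^{|S|}\bigr)\hat f(S)^2$, not $\sum_S\rho^{|S|}\hat f(S)^2$; so $\mathbb{E}[V]=\Omega(\rho\Var f)$, and the level-$k$ spectral concentration plays no role in the first moment. Second, the claimed second-moment bound $\mathbb{E}[V^2]\le C(\mathbb{E}[V])^2$ is not true at the relevant scale: since $V\le 1/4$ deterministically and $\mathbb{E}[V]=\Theta(\rho)$, the ratio $\mathbb{E}[V^2]/(\mathbb{E}[V])^2$ is as large as $\Theta(1/\rho)$, and for tribes with $\rho=\Theta(1/\log(1/\tau))$ a constant fraction of restrictions have $V=0$, so Paley--Zygmund yields at best a tiny constant-probability event, not the stated tail $\PP[V\le p^{\tilde\Theta(1/\rho)}]\le p$ for all small $p$. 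Third, and most importantly, your boosting scheme is circular where the paper does new work: each of the $k\approx 1/\rho$ layers would have to be applied to a restricted function whose max influence can already be constant, and the "monotone proxy" you invoke to sidestep this is exactly the missing ingredient. The paper's hypercontractive inequality for restrictions, combined with Doob's inequality, is precisely what substitutes for such a proxy: it directly shows $\sup_{t\le(1-\epsilon)n}\max_i|\partial_i f(X(t))|$ stays polynomially small in $\mINF(f)^{\epsilon}$, pathwise. Also note the union-bound arithmetic in your chaining goes in the wrong direction: combining $k$ constant-probability failure events gives failure probability $\Theta(k)$, not $p$; to drive the probability down to $p$ you would need each layer to fail with probability $\ll p/k$, which the constant-probability base case does not supply.
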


The bound on the variance is near-optimal by the majority function.
In particular, if $f$ is the majority function, then 
\[
\PP[\Var[f|_{\cR_{1-\rho}}]\le p^{\Theta\left(\frac{1}{\rho}\right)}]\le p.
\]
Furthermore, our bound on $\rho$ is optimal up to a $\log\log$ factor.
Because randomly restricting the tribes function with $\rho=O(1/\log(1/\tau)),$
we get a constant function with probability $\Omega(1)$. Previously,
Mossel et al. proved a similar result for $\rho=\Omega(1/\sqrt{\log(1/\tau)})$
using completely different techniques~\cite{mossel2010noise}. Prior to
Mossel et al.'s work, the related conjecture, with a very suggestive name ``it
ain't over till it's over'' conjecture, was proposed by Kalai and
Friedgut in studying social indeterminacy~\cite{kalai02arrow,kalai04social}.
It implies a quantitative version
of the Arrow's Theorem. We refer the interested readers to~\cite{mossel2010noise} for
more discussions.

Next, we discuss a corollary of this theorem to block sensitivity
of functions with small influences. The \emph{sensitivity }of an input
$x$ with respect to Boolean function $f$, denoted $\sen_{f}(x):=\sum_{i\in[n]}[f(x)\not=f(x\oplus (-1)^{e_i})]$, is the number
of the Hamming neighbors of $x$ which have a different function value.
An inequality by Kahn, Kalai and Linial \cite{KKL88} asserts that
\[
\EE_{x}[s_{f}(x)]=\Omega\left(\log\frac{\Var[f]}{\INF_{\infty}(f)}\right),
\]
which naturally leads to the question of whether it is also true that
$s_{f}(x)=\Omega\left(\log\frac{1}{\INF_{\infty}(f)}\right)$
for a \emph{typical} point $x$. This is clearly not the case, as
witnessed by the majority function. However, a corollary to our theorem
is that such an estimate does indeed hold true for most points $x$,
if sensitivity is replaced by the related notion of \emph{block sensitivity}.

The block sensitivity of an input $x$ with respect to function $f$,
denoted $\bs_{f}(x)$ is the maximum number of disjoint sets $S_{1},S_{2},\ldots,S_{m}\subseteq[n]$,
such that for $i\in[m]$, one has $f(x)\not=f(x\oplus(-1)^{\1_{S_{i}}})$,
by $x\oplus(-1)^{\1_{S_{i}}}$ we mean flipping the sign of variables
in $S_{i}$. Clearly,\footnote{By taking singleton sets above.} one has $\bs_{f}(x)\geq s_{f}(x)$ for all $f,x$.
Our second result shows that for functions with small influences,
the block sensitivity is large on almost all points $x$: 
\begin{thm}
\label{thm:blockintro} For any function $f:\{-1,1\}^{n}\to\{0,1\}$
such that its variance is $\Omega(1),$ and $\tau:=\INF_{\infty}(f)=o(1).$
Then 
\begin{align*}
 & \PP_{x}[\mathbf{\bs}_{f}(x)\ge\tilde{\Omega}(\log1/\tau)]=1-o(1).
\end{align*}
\end{thm}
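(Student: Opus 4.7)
The plan is to derive Theorem~\ref{thm:blockintro} as a direct corollary of Theorem~\ref{thm:mainintro} via a random partition argument. Set $\rho := \Theta(\log\log(1/\tau)/\log(1/\tau))$ as in Theorem~\ref{thm:mainintro} and $k := \lceil 1/\rho \rceil$. Sample $x \in \{-1,1\}^n$ uniformly and, independently, assign each coordinate to one of $k$ buckets $B_1,\ldots,B_k$ uniformly at random. For each $j$, let $\sigma_j$ denote the restriction that keeps $B_j$ alive and fixes the coordinates outside $B_j$ according to $x$. The key observation is that, marginally in $j$, the pair $(B_j, x|_{[n]\setminus B_j})$ has exactly the law of a $(1-\rho)$-random restriction $\cR_{1-\rho}$, so Theorem~\ref{thm:mainintro} applies bucket by bucket.

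Call bucket $j$ \emph{good} if $f|_{\sigma_j}$ is nonconstant. Whenever bucket $j$ is good, pick any $y^{(j)}$ that agrees with $x$ outside $B_j$ and satisfies $f(y^{(j)}) \neq f(x)$, and let $T_j := \{i \in B_j : y^{(j)}_i \neq x_i\}$; then flipping $T_j$ on $x$ flips the value of $f$, and the sets $T_j$ are pairwise disjoint because $T_j \subseteq B_j$. Consequently $\bs_f(x)$ is bounded below by the number of good buckets, so it suffices to show that nearly all of the $k$ buckets are good with probability $1 - o(1)$.

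To finish, choose $p = 1/(\log(1/\tau))^{\Theta(1)}$, which sits comfortably above the admissibility threshold $\tau^{\Theta(\rho)}$ of Theorem~\ref{thm:mainintro} under our choice of $\rho$ (since $\tau^{\rho} = 2^{-\log\log(1/\tau)} = 1/\log(1/\tau)$). The theorem then yields $\PP[f|_{\sigma_j}\text{ is constant}] \leq p$ for every $j$, so the expected number of bad buckets is at most $pk$, and Markov's inequality implies that the number of bad buckets is at most $\sqrt{p}\,k = o(k)$ with probability $1 - \sqrt{p} = 1 - o(1)$. Hence the number of good buckets is $(1-o(1))k = \tilde\Omega(\log(1/\tau))$ with probability $1 - o(1)$ over the joint randomness of $(x,\text{partition})$. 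Since $\bs_f(x)$ does not depend on the partition, this immediately gives $\PP_x[\bs_f(x) \ge \tilde\Omega(\log(1/\tau))] \ge 1 - o(1)$, as required.

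I expect no genuine obstacle here: the technical content is fully absorbed into Theorem~\ref{thm:mainintro}, and what remains is parameter arithmetic together with linearity of expectation plus Markov. In particular, no higher-order concentration across buckets is needed because the conclusion only demands $(1-o(1))k$ good buckets rather than all $k$ of them, so the nontrivial dependence between buckets (induced by the shared partition and shared $x$) never has to be controlled beyond a first-moment bound.
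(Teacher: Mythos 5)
Your proposal is correct and is essentially the paper's proof: partition $[n]$ into $\Theta(1/\rho)$ blocks, observe that each block induces a random restriction to which Theorem~\ref{thm:mainintro} applies, note that a nonconstant restriction contributes a disjoint sensitive block at $x$, and finish with a first-moment bound. The only cosmetic difference is in the last step, where you apply Markov to the count of bad buckets while the paper double-counts over a uniformly random bucket (if $\bs_f(x)<M/2$ then at least half the buckets are bad), but both are equivalent one-moment arguments.
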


Finally, if the function $f$ is monotone in addition to having small influences,
our analysis to Theorem~\ref{thm:mainintro} implies an upper bound on the influences
of $f$ under random restrictions. In the work due to O'Donnell et al.~\cite{osss2005dt-maxinf},
it is proved that every shallow decision tree must have an influential
variable. Combining these facts, one can also conclude that, for 
monotone function $f$, the restricted function will have large decision
tree complexity.  In particular, let $\DT(f)$ denote the decision tree complexity of $f$.
Then,
\begin{theorem}
\label{thm:dt-intro}
For any monotone function $f:\DC\to\{0,1\}$ with $\Omega(1)$ variance, 
and $\tau=\mINF(f)=o(1)$. Then for any $\rho=\tilde\Omega(\sqrt{1/\log (1/\tau) })$,
\begin{equation*}
    \PP[\DT(f|_{\cR_{1-\rho}})=\tau^{-\Theta(\rho)}]\ge \frac{1}{2}.
\end{equation*}
\end{theorem}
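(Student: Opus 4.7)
The plan is to combine Theorem~\ref{thm:mainintro} with the inequality of O'Donnell, Saks, Schramm, and Servedio, which asserts that any monotone $g:\{-1,1\}^{m}\to\{0,1\}$ satisfies $\Var(g)\le \DT(g)\cdot \INF_{\infty}(g)$, equivalently $\DT(g)\ge \Var(g)/\INF_{\infty}(g)$. Since monotonicity is preserved under any restriction, the OSSS bound applies to $g:=f|_{\cR_{1-\rho}}$ pointwise in the restriction. Thus it suffices to exhibit an event of probability at least $1/2$ on which simultaneously $\Var(g)\ge V$ and $\INF_{\infty}(g)\le I$ with $V/I\ge \tau^{-\Theta(\rho)}$.

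For the variance lower bound I would apply Theorem~\ref{thm:mainintro} with the prescribed $\rho=\tilde{\Omega}(1/\sqrt{\log(1/\tau)})$, which is comfortably within the admissible regime $\rho=\Omega(\log\log(1/\tau)/\log(1/\tau))$, and with $p$ equal to a small absolute constant, say $p=1/4$. This gives
\[
\Var(g)\;\ge\;(1/4)^{\tilde{\Theta}(1/\rho)}\;=\;\exp\!\bigl(-\tilde{\Theta}(\sqrt{\log(1/\tau)})\bigr)
\]
with probability at least $3/4$.

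The main obstacle is the complementary upper bound
\[
\INF_{\infty}(g)\;\le\;\tau^{\Theta(\rho)}\;=\;\exp\!\bigl(-\Theta(\sqrt{\log(1/\tau)})\bigr)
\]
on an event of probability at least $3/4$. Monotonicity of $f$ is essential here: for a monotone Boolean function the individual influences agree, up to a constant factor, with the level-one Fourier coefficients, so $\INF_{i}(g)\propto \hat g(\{i\})$ for every surviving coordinate~$i$. A direct Fourier computation over the restriction gives
\[
\EE_{\cR}\bigl[\hat g(\{i\})^{2}\,\big|\,i\text{ alive}\bigr]\;=\;\sum_{S\ni i}\hat f(S)^{2}(1-\rho)^{|S|-1}\;\le\;\INF_{i}(f)\;\le\;\tau,
\]
so each individual level-one coefficient has second moment at most $\tau$ over the restriction. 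The real work is upgrading this expectation bound into a high-probability, uniform-in-$i$ control of the target shape $\tau^{\Theta(\rho)}$; this is precisely the ``upper bound on the influences of $f$ under random restrictions'' that the introduction says is extracted from the analysis of Theorem~\ref{thm:mainintro}, and one expects it to fall out by applying the same hypercontractive/moment machinery used in the proof of Theorem~\ref{thm:mainintro} to each polynomial $\hat g(\{i\})$ viewed as a function of the restriction variables.

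Finally, on the intersection of the two good events (probability $\ge 1/2$ by a union bound), OSSS yields
\[
\DT(g)\;\ge\;\frac{\Var(g)}{\INF_{\infty}(g)}\;\ge\;\frac{\exp(-\tilde{\Theta}(\sqrt{\log(1/\tau)}))}{\exp(-\Theta(\sqrt{\log(1/\tau)}))}\;\ge\;\tau^{-\Theta(\rho)},
\]
where the last inequality uses that the $\tilde{\Omega}$ in the hypothesis $\rho=\tilde{\Omega}(1/\sqrt{\log(1/\tau)})$ is chosen wide enough to absorb the polylogarithmic slack coming from Theorem~\ref{thm:mainintro}.
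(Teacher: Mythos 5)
Your proposal follows the paper's argument: OSSS gives $\DT(g)\ge\Var(g)/\INF_{\infty}(g)$, the variance lower bound comes from Theorem~\ref{thm:mainintro}, and the complementary max-influence upper bound uses monotonicity to identify $\INF_i(f|_{\cR})$ with the first-order derivatives $|\partial_i f(X((1-\rho)n))|$, which the hypercontractive argument controls. The one step you leave as an expectation ``one expects to fall out'' is precisely Lemma~\ref{lem:influence-remain-small-P}, which the paper has already established in the course of proving Theorem~\ref{thm:mainintro}; with that lemma invoked directly, your union-bound assembly gives the paper's proof.
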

The above theorem is, in a sense, a reverse statement to the H{\aa}stad switching
lemma, which states that applying the $(1-O(1/\log n))$-random restriction  
to any polynomial-size DNF/CNF (or in general any $\mathrm{AC}^0$ circuits), 
one gets a shallow decision tree with high probability. Our result, on the contrary,
states that random restrictions with
alive probability $\tilde\Omega(1/\log(\mINF(f)))$
cannot simplify $f$ to a too shallow decision tree for monotone functions $f$ with low influences. 

\subsection*{Context and related works}

The notion of influences studied in this paper is first introduced
by Ben-Or and Linial~\cite{BL85} in the context of \emph{collective
coin flipping.} It coincides with the ``Banzaf index'' studied in
game theory. The class of Boolean functions with small influences
have been widely studied. There are several motivations to study such
functions. First, they arise naturally in social choice theory~\cite{kalai02arrow,kalai04social}.
For example, in a voting system of two candidates and $n$ voters,
each bit $x_{i}$ represents the individual preference of each voter
between the two candidates. When aggregating the social preference,
it is natural to use a function $f$ where the potential of any given
individual to determine the final outcome is limited. Second, from
an algorithmic perspective, suppose that we have access to the input
via a limited number of queries. Then, it is natural to query a variable
when its individual influence is large. In many cases, such variables
can be found iteratively and this process leads to a good approximation
of $f$ with a small number of queries. This observation has been
applied in different settings~\cite{Friedgut98,AA14}. In computational
complexity, to distinguish the dictatorship function v.s.
functions with small individual influences is a key component of proving
optimal NP-hardness for approximations~\cite{BGS95,has96,has97,KGMO07UG}.
From an analytic perspective, it has been observed that functions
with small influences exhibit improved concentration inequalities
(e.g., \cite{Talagrand94}) and often tend to exhibit Gaussian-like
behavior~\cite{mossel2010noise}.

Applying random restrictions and studying the properties of the restricted
functions has been widely studied and has led to breakthroughs in
a variety of areas. For example, it is the key idea of the exponential
lower bounds in circuit complexity~\cite{hastad1987} and the dramatic
improvements of the sunflower lemma in combinatorics~\cite{ALWZ21sunflower}.

The problem of determining whether a function with small influences
becomes constant under random restrictions has attracted some attention
in the context of hardness amplification within NP for circuits~\cite{ODONNELL04NPHardness}.

A sub-optimal version of Theorem \ref{thm:mainintro} follows
from the ``It ain't over till it's over'' theorem proven by Mossel,
O'Donnell, and Oleszkiewicz in \cite{mossel2010noise}. Their approach
uses the \emph{invariance principle,} which at a high level asserts
that when feeding a ``smooth''\footnote{By ``smooth'' here, we mean $f$
has low degree.  With additional work, the invariance principle applies
to $f$ that has its Fourier mass concentrated in low degrees.} 
function $f$ with independent random
inputs $X_{1},X_{2},\ldots,X_{n}$ from a product space such that
each $X_{i}$ has zero mean, unit second moment and bounded third
moment, then the output distribution is ``invariant'' to the actual
distribution of the inputs. This approach usually studies a related
problem, then translates the result of the related problem to the
Boolean cube. This translation suffers from two drawbacks. First,
it obscures what is actually happening in Boolean cube. Second, the
requirement of $f$ being ``smooth'' normally requires additional
technical treatment, and becomes the main obstacle for obtaining an
optimal result.

\subsection*{Our approach}

Our approach relies on a control-theory point of view to the problem combined with ideas from ``pathwise-analysis,'' using arguments which are somewhat inspired by~\cite{EG18talagrand-ineq}. We assume that the coordinates are revealed in a random order and are randomly assigned values $\pm1$ one by one. For each coordinate being revealed,
we assume that with probability $\rho$ a player gets an opportunity
to ``override'' the value that has been assigned to that coordinate.\footnotemark
If the player has the capability of deciding, with high probability,
the value of the function, this implies that restricting all but a
$\rho$-fraction of coordinates leaves the restricted function nonconstant.
\footnotetext{We note that Lichtenstein-Linial-Saks~\cite{LLS89control} also study a
control theoretic problem, which in the surface may seem similar. The main difference
is that in their model the player picks which coordinates to influence,
whereas in our model these coordinates are picked randomly, as we
will see momentarily. The two models differ drastically in their nature.}

To this end, assume that $X(t)\in\{-1,0,1\}^{n}$ is the process where
at each step another coordinate is being revealed, where coordinates
whose value is not determined are set to $0$. We view our Boolean
function $f$ as a function $f:\RR^{n}\to\RR$ by considering its
multilinear extension. If the player does not override any coordinate,
then the process $M(t):=f(X(t))$ is a martingale (where, for $x\in\{-1,0,1\}^{n}$,
the expression $f(x)$ is the value of $f$ when taking expectation
over coordinates whose value is set to $0$). The player's ability to override coordinates effectively allows the
player to add a drift to $M(t)$, where the player's goal is to end
up with $M(n)$ being equal to $1$ (or, by the same argument, to
$0$ simply by replacing $f$ with $1-f$).

At this point, let us assume for simplicity that the increments of the martingale 
$M(t)$ have a fixed step size $\eta$ (in other words, assume that it is actually 
a random walk up to the time when it hits $\{0,1\}$). Moreover assume that $M(0)$ 
is bounded away from $\{0,1\}$. Suppose that the player has probability $\rho$ to 
override each step and is trying to force the process to end up at the value $1$ 
by overriding the increment with the value $+\eta$. In this case over $t$ steps 
the process accumulates a variance of $\eta^{2}t$ and a drift of size $t\rho\eta$. Since the
process eventually moves a distance of $\Omega(1)$, and hence accumulates a variance 
of constant order, we have $t\asymp\eta^{-2}$. It follows that in order for the 
effect of the drift to be more significant than that of the variance, one arrives 
at the condition $\rho\gg\eta$. In other words, the process
can be efficiently controlled (meaning that the player gets to determine its endpoint) 
as long as the step size is at most $\rho$. In fact, we will see that this heuristic is only correct when
$M(t)$ is not close to the edges, which will create an additional
technical complication.

The step size of the process is, in turn, is controlled by the $\ell_{\infty}$
norm of the first-order Fourier coefficients of restrictions of the
function $f$, or equivalently by the quantity $\max_{i}|\partial_{i}f(X(t))|$,
where $i$ is over the coordinates not fixed at time $t$.
We need to show that this quantity remains small along the process,
which is where the fact that the initial influences are small will be used.

The control of the first-order Fourier coefficients relies on a new
\emph{hypercontractive} \emph{inequality} for random restrictions.
We consider random restrictions $\cR{}_{p},\cR_{q}$, where $0\le p\le q\le1$
are the probabilities of a variable being fixed, and show that for
any multilinear function $f$ and any $0\le\epsilon\le q-p,$ 
\begin{equation}
(\EE[\mu(f|_{\cR_{p}})^{2+\epsilon}])^{\frac{1}{2+\epsilon}}\le(\EE[\mu(f|_{\cR_{q}})^{2}])^{\frac{1}{2}},
\end{equation}
where we use $\mu(f)$ to denote the expected value of $f$ over
the uniform measure on $\DC$.

The hypercontractive inequality will allow us to control the evolution
of the first-order coefficients under the original (namely, the uncontrolled)
martingale. However, we need to control those coefficients under the
``controlled'' process (where the player gets to override some coordinates).
This can be solved by assuming that the strategy taken by the player
tries to mimic yet another process obtained by \emph{conditioning} the original
martingale $M(t)$ to end up at the value $1$ ($0$, respectively).
This amounts to a change of measure over the space of paths of $X(t)$
which gives tractable formulas for the corresponding change of measure
of a single step. Equivalently, this is the strategy which ensures
ending up at the desired value under a change of measure which has
the minimal possible relative entropy to the uncontrolled process.

Finally, we explain how to strengthen the above result to give 
a quantitative bound on the variance of the restricted
function. We analyze the Kullback-Leibler divergence between, roughly
speaking, the string $Y(n)$ generated by the ``controlled'' process
given the restrictions $\cR_{1-\rho}$ determined by those coordinates
that is not controlled by the player, and a uniformly random string
$X\in\mopo^{n}$. With the Fourier-analytic tool of Level-1 inequality, one can show
that the expected KL-divergence over the random restrictions is about
$\tilde{O}(1/\rho)$. Somewhat surprisingly, the KL-divergence is,
in addition to being small in expectation, highly concentrated. Recall
that $Y(n)$ is sampled from $f^{-1}(1)$. All these imply that $\mu(f|_{\cR_{1-\rho}})\ge\exp(-\tilde{O}(1/\rho))$
with high probability. The variance bound then follows 
once we put together with the other direction that $\mu(f|_{\cR_{1-\rho}})\le1-\exp(-\tilde{O}(1/\rho))$
by replacing $f$ with $1-f$. 

\subsection*{Organization}
We present the necessary preliminaries in Section~\ref{sec:Preliminaries}.
Then in Section~\ref{sec:control}, we carefully define the uncontrolled and
controlled process discussed in the introduction and we study the properties
of these random processes. With this tool at our disposal, we prove our main
result Theorem~\ref{thm:mainintro} in Section~\ref{sec:appl}. Then we explain
the applications of this result to the block sensitivity and decision tree 
complexity. We leave the technical analysis to the final section, 
Section~\ref{sec:hypercontractive}, that the Fourier coefficients
of the first order remain small under random restrictions .

\section{Preliminaries\label{sec:Preliminaries}}

\subsection*{General}

We adopt the the shorthand notation $[n]$ for the set $\{1,2,\ldots,n\}.$
For a string $x\in\{-1,1\}^{n}$ and a set $S\subseteq\{1,2,\ldots,n\},$
we let $x|_{S}$ denote the restriction of $x$ to the indices in
$S.$ In other words, $x|_{S}=x_{i_{1}}x_{i_{2}}\ldots x_{i_{|S|}},$
where $i_{1}<i_{2}<\cdots<i_{|S|}$ are the elements of $S.$ Analogously,
for any function $f:\Omega\to\RR$ over an arbitrary domain $\Omega$.
Let $A\subseteq\Omega$, we adopt the notation $f|_{A}$ for the sub-function
of $f$ over the domain $A$. Namely, $f|_{A}(x)=f(x)$ for $x\in A.$
Given a set $S$, when the universe $U$ is clear from the context
we use $\bar{S}:=U\setminus S$ to denote the complement of $S$.
The \emph{characteristic function} of a set $S$ is given by 
\[
\1_{S}(i)=\begin{cases}
1 & \text{if }i\in S,\\
0 & \text{otherwise.}
\end{cases}
\]
For a permutation $\pi:U\to U$. Let $\pi S$ be the permuted set
of $S$, i.e., $\pi S=\{\pi(i):i\in S\}.$

The primary interest of this paper is Boolean functions $f:\{-1,1\}^{n}\to\{0,1\}.$
Note that we use $-1,1$ to denote ``true'' and ``false'' on the
domain of $f$, respectively. For example, the logic AND function
and the logic OR function are defined as below, 
\begin{align*}
\bigwedge_{i=1}^{n}x_{i}=\begin{cases}
1 & x_{i}=-1\,\,\forall i\in[n],\\
0 & \text{otherwise},
\end{cases}\qquad & \bigvee_{i=1}^{n}x_{i}=\begin{cases}
0 & x_{i}=1\,\,\forall i\in[n],\\
1 & \text{otherwise}.
\end{cases}
\end{align*}
We abuse the notation $x\oplus y$ to denote the entrywise XOR function
for $x,y\in\DC$. Thus $(x\oplus y)_{i}=x_{i}\cdot y_{i}.$ For any
univariate function $h:\RR\to\RR$ and $x\in\RR^{n},$ the application
of $h$ to $x$ means entrywise application, i.e., $h(x)$ denotes
the vector such that $(h(x))_{i}=h(x_{i})$. For any $f:\{-1,1\}^{n}\to\RR,$
$S\subseteq\{1,2,\ldots,n\}$ and $y\in\{-1,1\}^{n}$, let $\cube{S}{y}:=\{x\in\DC:x|_{S}=y|_{S}\}$
be the subcube of $\DC.$ We abbreviate $f|_{(S,y)}=f|_{\cube{S}{y}}.$
The same definition $f|_{(S,y)}$ extends to $y\in\RR^{T}$ for any
$T\supseteq S$ such that $y|_{S}\subseteq\{-1,1\}^{S}$. A random
$p$-restriction $\cR_{p}$ is a random tuple $(S,y)$ such that for each
$i\in[n],$ $i\in S$ with independent probability $p$, and $y$ is a uniformly
random element from $\{-1,1\}^{n}.$ 

For a logical condition $C,$ we use the Iverson bracket 
\[
\II\{C\}=\begin{cases}
1 & \text{if \ensuremath{C} holds,}\\
0 & \text{otherwise.}
\end{cases}
\]
Denote $|x|$ the length of $x$ for any vector $x\in\RR^{n}$, i.e.,
\[
|x|=\left(\sum_{i=1}^{n}x_{i}^{2}\right)^{1/2}.
\]
For two vectors $x,y\in\RR^{n},$ we adopt the following inner product
\[
\langle x,y\rangle=\sum_{i\in[n]}x_{i}y_{i}.
\]
The set $\{e_{1},e_{2},\ldots,e_{n}\}$ forms a standard basis, where
$e_{i}$ denotes the vector whose only nonzero coordinate $i$ is
1.

Given some discrete space $\Omega$ and a probability measure $\gamma$ over $\Omega$.
If the random variable $X$ is drawn from $\gamma$, we denote it
by $X\sim\gamma$. For any function $f:\Omega\to\RR$, we often abbreviate
the expectation of $f$ over $\gamma$ as $\gamma(f),$ namely,
\[
\gamma(f):=\sum_{x\in\Omega}f(x)\gamma(x).
\]
We let $\ln x$ and $\log x$ stand for the natural logarithm of $x$
and the logarithm of $x$ to base $2,$ respectively. For any distribution
$\gamma$ over some discrete space $\Omega$, the entropy function
\[
H(\gamma)=\EE_{x\in\Omega}\gamma(x)\log\frac{1}{\gamma(x)}.
\]
When $\Omega$ contains only two elements, we can think of the binary
entropy function $H\colon[0,1]\to[0,1]$ as given by
\[
H(x)=x\log\frac{1}{x}+(1-x)\log\frac{1}{1-x}.
\]
Basic calculus reveals that for $x\in[-1,1],$
\begin{equation}
1-H(x)\leq4\left(x-\frac{1}{2}\right)^{2}.\label{eq:entropy-upper-bound}
\end{equation}
Recall that the Kullback-Leibler divergence (KL-divergence) between
two distributions $\mu_{0},\mu_{1}$ over $\Omega$ is defined by
the following formula
\[
\KL{\mu_{0}}{\mu_{1}}=\sum_{x\in\Omega}\mu_{0}(x)\log\frac{\mu_{0}(x)}{\mu_{1}(x)}.
\]
The KL-divergence is convex. In particular, let $\mu_{0},\mu_{1}$,
$\gamma_{0},\gamma_{1}$ be distributions over the same space. Then
for any $\lambda\in[0,1],$
\begin{align*}
   & \KL{\lambda\mu_{0}+(1-\lambda\mu_{1})}{\lambda\gamma_{0}+(1-\lambda\gamma_{1}}\\
   & \qquad\qquad \le\lambda\KL{\mu_{0}}{\gamma_{0}}+(1-\lambda)\KL{\mu_{1}}{\gamma_{1}}.
\end{align*}
If two random variables $X_{0},X_{1}$ obey $\mu_{0}$ and $\mu_{1}$,
respectively, we also use $\KL{X_{0}}{X_{1}}$ to denote the KL-divergence
between the two distributions. The KL-divergence satisfies the following
chain rule:
\[
\KL{X_{0}Y_{0}}{X_{1}Y_{1}}=\KL{X_{0}}{X_{1}}+\EE_{x\sim X_{0}}\left[\KLfrac{Y_{0}\mid X_{0}=x}{Y_{1}\mid X_{1}=x}\right].\footnotemark\textsuperscript{,}\footnotemark
\]
\addtocounter{footnote}{-1} 
\footnotetext{
We refer the interested readers to~\cite{cover2006elements} for a complete treatment on these facts.
%In this paper, we only consider the KL-divergence between some arbitrary distribution $\gamma$ and the uniform distribution $\mu$ over some discrete space $\Omega$. In this case, $\KL{\gamma}{\mu}=\log|\Omega|-H(\gamma)$ where $H$ is the general entropy function. Then, the convexity and chain rule follows directly from the corresponding properties of the entropy function.
}
\addtocounter{footnote}{1} 
\footnotetext{Here, we use the fraction-like notation to also denote the
KL-divergence for aesthetics, as we are comparing two conditional distributions.
The numerator in the fraction-like notation corresponds to the first argument
in the standard notation.}

The following simple analytical fact will be useful for us. 
\begin{fact}
\label{fact:inequality-a} Given $x,p\in\RR,$ then 
\end{fact}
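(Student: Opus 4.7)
The plan is to treat this as a routine one-variable calculus inequality. Since the statement is an elementary relation between two real numbers $x$ and $p$, I would first rewrite it in the normalized form $F(x,p)\ge 0$ for an explicit function $F$ built from the two sides of the claim, and then reduce it to a single-variable problem by freezing one of the parameters.

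Specifically, I would fix $p$ and study the auxiliary function $g(x):=F(x,p)$. The natural steps are: first, identify the distinguished values of $x$ at which $g$ is known to vanish (typically $x=0$, $x=p$, or a natural extremum dictated by the context of how the fact is used later in the paper); second, compute $g'(x)$ and determine its sign on each side of those zeros; and third, deduce the claimed sign of $g$ by monotonicity. If the inequality turns out to be sharp at an interior point, then verifying $g''\ge 0$ there together with $g'=0$ at that point propagates $g\ge 0$ to the whole real line by convexity.

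In case a direct derivative argument turns out to be unwieldy, I would fall back on Taylor expansion with explicit remainder, writing the relevant expression as a power series in $x$ (or in the difference $x-p$); in many such bounds the resulting series is termwise nonnegative, which settles the matter immediately. This style of argument is consistent with the toolkit the paper already invokes for bounds like $1-H(x)\le 4(x-\tfrac12)^2$ in \eqref{eq:entropy-upper-bound}, and it is the kind of lemma that is invoked once, cleanly, as a black box in later Fourier and KL-divergence estimates.

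The main obstacle is likely to be bookkeeping at the boundary of the natural domain of the inequality: elementary bounds of this form are often tight there, so a crude estimate will lose a constant factor and must be replaced by an expansion to one higher order. Once the endpoint behavior is pinned down and matched against the leading-order term, the interior case follows from monotonicity or convexity without difficulty, so no deeper idea should be needed beyond calculus.
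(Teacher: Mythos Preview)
Your plan matches the paper's proof essentially line for line: the paper sets $g(x)=(1+x)^{p}-1-xp$, computes $g'(x)=p\bigl((1+x)^{p-1}-1\bigr)$, checks the sign of $g'$ on each side of the vanishing point $x=0$, and concludes by monotonicity in each of the two regimes $p\ge 1$ and $0\le p\le 1$. No Taylor expansion, convexity argument, or boundary analysis beyond $x=0$ is required.
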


\begin{enumerate}
\item $(1+x)^{p}\ge1+xp,$ for any $x>-1,$ and $p\ge1$. 
\item $(1+x)^{p}\le1+xp,$ for any $x>-1,$ and $0\le p\le1$. 
\end{enumerate}
\begin{proof}
Let $g=(1+x)^{p}-1-xp$ be a function on $x.$ Then 
\begin{equation}
g'=p(1+x)^{p-1}-p.\label{eq:partial-g-partial-a}
\end{equation}
When $p>1$, (\ref{eq:partial-g-partial-a}) is negative for $x\in(-1,0)$
and nonnegative for $x\ge0$. Thus, $g$ is decreasing in the interval
$x\in(-1,0)$ and increasing in $(0,\infty)$. Plug $x=0$ into $g$,
we get $0$. Therefore, $(1+x)^{p}\ge1+xp$. When $0<p<1,$ (\ref{eq:partial-g-partial-a})
is positive for $x\in(-1,0)$ and nonpositive for $x\ge0.$ Hence
$g$ attains its maxima within $(-1,\infty)$ at point $x=0$. 
\end{proof}

\subsection*{Discrete Fourier analysis}

Let $f:\DC\to\{0,1\}$ be any Boolean function. We would often treat
$f$ as a function $f:\CC\to[0,1]$ or $f:\RR^{n}\to\RR$ by considering
its multilinear extension, i.e., 
\[
f(x)=\sum_{S\subseteq[n]}\hat{f}(S)\chi_{S},
\]
here $\chi_{S}$ is the abbreviation of $\prod_{j\in S}x_{j}.$ An
important observation is that under this notation, 
\[
f(0)=\EE_{x\in\{-1,1\}^{n}}[f(x)].
\]
The set $\{\chi_{S}\}_{S\subseteq[n]}$ is a complete orthogonal basis
of the space $\RR^{\{-1,1\}^{n}}.$ Further, 
\[
2^{-n}\langle\chi_{S},\chi_{T}\rangle=\begin{cases}
1 & S=T,\\
0 & S\not=T.
\end{cases}
\]
Thus, for any $f,g:\{-1,1\}^{n}\to\RR,$ we have the following by
Parseval's identity and Plancherel Theorem, 
\begin{align}
 & 2^{-n}\langle f,g\rangle=\sum_{S\subseteq[n]}\hat{f}(S)\hat{g}(S).\label{eq:parseval}\\
 & \EE_{x\in\{-1,1\}^{n}}[f^{2}]=\sum_{S\subseteq[n]}\hat{f}(S)^{2}.\label{eq:plancherel}
\end{align}
Adopt the following notations for partial derivatives and the vector
differential operator, 
\begin{align*}
 & \partial_{i}f(x)=\sum_{S\ni i}\hat{f}(S)\chi_{S\setminus\{i\}},\\
 & \nabla f=(\partial_{1}f,\partial_{2}f,\ldots,\partial_{n}f).
\end{align*}
For functions on Boolean cubes, by considering their multilinear extensions
it's easy to see that the above definitions work exactly as expected:
For any $\delta\in\RR,$ 
\[
f(x+\delta e_{i})-f(x)=\delta\cdot\partial_{i}f(x).
\]
An important fact about the weight of the Fourier coefficients is
the following inequality, often referred to as the Level-1 inequality.
\begin{thm}[Level-1 inequality~\cite{talagrand1996much}]
\label{thm:level-1-ineq} Let $f:\CC\to\{0,1\}$ be the multilinear
extension of a Boolean function. Then for some absolute constant $C$,
we have
\[
|\nabla f(0)|^{2}\le Cf(0)^{2}\log\frac{e}{f(0)}.
\]
\end{thm}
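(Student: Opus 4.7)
The plan is to derive this classical bound via a standard hypercontractivity argument. Write $\mu := f(0) = \EE[f]$ and let $L(x) := \sum_{i=1}^{n} \hat{f}(\{i\})\, x_{i}$ denote the level-$1$ Fourier projection of $f$. Since $\partial_{i}f(0) = \hat{f}(\{i\})$, Plancherel gives $|\nabla f(0)|^{2} = \|L\|_{2}^{2}$, and the same identity yields $\EE[f\cdot L] = \sum_{i}\hat{f}(\{i\})^{2} = \|L\|_{2}^{2}$. Rewriting the quantity of interest as the correlation $\EE[fL]$ is the key initial move, since it opens the door to a duality-style argument.

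Next I would apply H\"older's inequality with a dual pair $(r, r')$ to be optimized later, getting $\|L\|_{2}^{2} = \EE[fL] \le \|f\|_{r}\,\|L\|_{r'}$. Because $f\in\{0,1\}$, one has $f^{r}=f$ and hence $\|f\|_{r} = \mu^{1/r}$, which is where the small-$\mu$ gain enters. To control the high moment $\|L\|_{r'}$, I would invoke the Bonami-Beckner hypercontractive inequality for degree-$1$ multilinear polynomials: for every $r'\ge 2$, $\|L\|_{r'} \le \sqrt{r'-1}\,\|L\|_{2}$. Dividing through by $\|L\|_{2}$ leaves $\|L\|_{2}^{2} \le \mu^{2/r}\,(r'-1)$.

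The final step is a one-parameter optimization. Choosing $r'-1$ of order $\log(e/\mu)$ balances the two factors: writing $\mu^{2/r} = \mu^{2}\cdot\mu^{-2(r-1)/r}$, the extra factor becomes $O(1)$ for this choice, so the bound collapses to the desired $O\!\left(\mu^{2}\log(e/\mu)\right)$. For $\mu$ bounded away from $0$ the argument degenerates, but there the trivial estimate $\|L\|_{2}^{2}\le \|f\|_{2}^{2}=\mu$ is already absorbed by $C\mu^{2}\log(e/\mu)$, so this case can be disposed of at the outset.

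I expect the only nontrivial ingredient, and thus the main obstacle, to be the Bonami-Beckner hypercontractive inequality for degree-$1$ polynomials; everything else is Plancherel, H\"older, and calculus. That inequality is itself a standard consequence of tensorizing the sharp two-point hypercontractive estimate of Bonami, and can be invoked as a black box. With it in hand, the plan above produces the stated bound with an explicit absolute constant $C$.
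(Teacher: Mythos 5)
The paper does not prove this statement; it is quoted as a known result and attributed to Talagrand~\cite{talagrand1996much}, so there is no in-paper argument to compare against. Your proof is a correct and self-contained rendering of the classical Level-1 inequality via the standard H\"older/hypercontractivity route, and it is essentially the textbook argument. To confirm the details: with $L=\sum_i \hat f(\{i\})x_i$ you correctly have $|\nabla f(0)|^2=\|L\|_2^2=\EE[fL]$; H\"older with conjugate exponents $r,r'$ gives $\|L\|_2^2\le\|f\|_r\|L\|_{r'}=\mu^{1/r}\|L\|_{r'}$ using $f^r=f$; degree-$1$ hypercontractivity gives $\|L\|_{r'}\le\sqrt{r'-1}\,\|L\|_2$; dividing by $\|L\|_2$ and squaring yields $\|L\|_2^2\le\mu^{2/r}(r'-1)$. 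With $r'-1=\ln(e/\mu)$ one has $\mu^{2/r}=\mu^2\mu^{-2/r'}$ and $\mu^{-2/r'}=\exp\!\bigl(2\ln(1/\mu)/(2+\ln(1/\mu))\bigr)\le e^2$, so $\|L\|_2^2\le e^2\mu^2\ln(e/\mu)$, as desired. A minor remark: the ``degenerate'' case you set aside does not actually arise, since $r'=1+\ln(e/\mu)\ge 2$ for all $\mu\in(0,1]$; the argument is uniform in $\mu$, though your fallback estimate $\|L\|_2^2\le\mu$ is of course also valid. The only nonelementary ingredient is, as you say, the $(2,r')$-hypercontractive bound for level-$1$ polynomials, which is a standard tensorization of Bonami's two-point inequality and legitimate to invoke as a black box.
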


We adopt the following standard definitions of the individual influence
and the max influence of function $f$: 
\begin{align*}
 & \INF_{i}(f)=\EE_{x\in\DC}[\partial_{i}f(x){}^{2}].\\
 & \mINF(f)=\max_{i\in[n]}\INF_{i}(f).
\end{align*}
By Plancherel Theorem, 
\[
\INF_{i}(f)=\sum_{S\subseteq[n]:\,i\in S}f(S)^{2}.
\]
The variance of $f$ is the following 
\[
\Var[f]=\EE_{x\in\DC}[f^{2}]-\EE_{x\in\DC}[f]^{2}.
\]
It is clear that 
\[
\Var[f]\le\sum_{i}^{n}\INF_{i}(f).
\]
Below is a straightforward corollary of the above inequality. 
\begin{fact}
\label{fact:mINF-bound}If $\Var[f]=2^{-o(n)},$ then
\[
\mINF(f)=2^{-o(n)}.
\]
\end{fact}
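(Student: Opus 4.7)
The plan is to derive Fact~\ref{fact:mINF-bound} as an immediate consequence of the inequality $\Var[f]\le\sum_{i=1}^{n}\INF_{i}(f)$ stated just above the fact. Since each $\INF_{i}(f)$ is at most $\mINF(f)$ by definition, we obtain the cruder estimate
\[
\Var[f]\le n\cdot\mINF(f),
\]
and therefore $\mINF(f)\ge \Var[f]/n$. This is the only quantitative input needed; everything else is asymptotic bookkeeping in the exponent.

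Writing $\Var[f]=2^{-o(n)}$ amounts to the existence of a function $g(n)=o(n)$ with $\Var[f]\ge 2^{-g(n)}$. Plugging this into the bound above gives
\[
\mINF(f)\ge \frac{2^{-g(n)}}{n}=2^{-g(n)-\log n}.
\]
Since $\log n=o(n)$ and $g(n)=o(n)$, their sum is still $o(n)$, which yields $\mINF(f)\ge 2^{-o(n)}$, as required. There is no genuine obstacle; the fact is essentially a bookkeeping corollary, recorded so that the hypothesis $\Var[f]\ge 2^{-o(n)}$ can be translated into the hypothesis $\mINF(f)\ge 2^{-o(n)}$ whenever convenient in later arguments.
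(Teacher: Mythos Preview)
Your proof is correct and matches the paper's intended argument: the fact is stated as a ``straightforward corollary'' of the inequality $\Var[f]\le\sum_i\INF_i(f)$, and the paper gives no further details. Your bound $\mINF(f)\ge \Var[f]/n$ together with the asymptotic bookkeeping is exactly what is meant.
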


\subsection*{Martingales}

Recall that a discrete-time martingale is a sequence of random variables
$X_{0},X_{1},X_{2},\ldots,$ that satisfies 
\begin{itemize}
\item For each $n=0,1,2,\ldots,$ $\EE[|X_{i}|]<\infty$. 
\item For any $m<n$, $\EE[X_{m}\mid X_{n}]=X_{n}.$ 
\end{itemize}
A continuous-time martingale is a stochastic process $(X_{t})_{t\ge0}$
such that 
\begin{itemize}
\item For any $t$, $\EE[|X_{t}|]<\infty$. 
\item For any $s<t$, $\EE[X_{t}\mid X_{s}]=X_{s}.$ 
\end{itemize}
A submartingale is a stochastic process with the second property from
the above definition replaced by 
\[
\EE[X_{t}\mid X_{s}]\ge X_{s}.
\]

\begin{fact}
\label{fact:martingales}Let $X_{t},Y_{t}$ be martingales. 
\begin{enumerate}
\item $aX_{t}+bY_{t}$ and $X_{t}\cdot Y_{t}$ are also martingales for
any constant $a,b$. Hence any multilinear function of martingales
is a martingale. 
\item If $f:\RR\to\RR$ is a convex function, then the process $f(X_{t})$
is a submartingale. 
\end{enumerate}
\end{fact}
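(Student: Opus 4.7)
The plan is to reduce everything to three standard tools: linearity of conditional expectation, the multiplicativity of expectation for independent random variables, and the conditional Jensen's inequality. All three are off-the-shelf facts from elementary martingale theory, so the "proof" is really just a matter of citing them in the right order and noting the implicit independence assumption under which the product statement is meant to be read.

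For the first clause of part (i), given $s<t$ and the relevant filtration $\mathcal{F}_s$, linearity of conditional expectation gives
\[
\EE[aX_t+bY_t\mid\mathcal{F}_s]=a\EE[X_t\mid\mathcal{F}_s]+b\EE[Y_t\mid\mathcal{F}_s]=aX_s+bY_s,
\]
and the integrability assumption $\EE[|aX_t+bY_t|]\le|a|\EE[|X_t|]+|b|\EE[|Y_t|]<\infty$ is immediate. For the product claim, the natural reading (and the only one that will be used later, since our martingales $M(t)$ built coordinate-by-coordinate from independent revelations will split along disjoint index sets) is that $X_t$ and $Y_t$ are generated by independent sources. Under that independence, $\EE[X_tY_t\mid\mathcal{F}_s]=\EE[X_t\mid\mathcal{F}_s]\cdot\EE[Y_t\mid\mathcal{F}_s]=X_sY_s$; the main thing worth spelling out is the factorization of the conditional expectation, which follows from independence of the two filtrations together with the tower property. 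The "hence any multilinear function of martingales is a martingale" is then immediate by induction on the number of factors, using the sum rule for the outer layer and the product rule for each monomial.

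For part (ii), conditional Jensen's inequality applied to the convex function $f$ yields, for $s<t$,
\[
\EE[f(X_t)\mid\mathcal{F}_s]\ge f(\EE[X_t\mid\mathcal{F}_s])=f(X_s),
\]
which is precisely the submartingale property. Integrability of $f(X_t)$ is the only nontrivial hypothesis and is understood to hold in the applications; when $f$ is, say, $x^2$ or $|x|^p$ this reduces to the corresponding moment bound on $X_t$.

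The main (and only) obstacle is really just the mild sleight-of-hand in the product statement: as worded, the product of two martingales on a common filtration is generally \emph{not} a martingale (for instance $X_t^2$ is typically a submartingale, not a martingale). So the honest plan is to state explicitly that in all uses of this fact in the paper the factors are built from disjoint blocks of coordinates and are therefore independent, after which the proof above goes through verbatim.
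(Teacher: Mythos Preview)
The paper states this as a standard background fact in the preliminaries and gives no proof, so there is nothing to compare against; your argument is the textbook one and is correct. Your caveat about the product clause is exactly right and worth keeping: as literally written, $X_t\cdot Y_t$ need not be a martingale on a common filtration (take $Y_t=X_t$), and the paper only ever applies this to monomials $\prod_{i\in S}X_i(t)$ built from independent coordinate processes, where your factorization of the conditional expectation is valid. One minor completion: under that independence, integrability of the product is also immediate from $\EE[|X_tY_t|]=\EE[|X_t|]\,\EE[|Y_t|]<\infty$.
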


The stopping time $\tau$ of a stochastic process is a random variable
such that the event $\{\tau\le t\}$ is completely determined by $X_{\le t}$.
Given two stopping times $\tau_{1},\tau_{2},$ let $\tau_{1}\land\tau_{2}$
denote the new stopping time $\min\{\tau_{1},\tau_{2}\}$. For martingales,
we have the optional stopping theorem. 
\begin{thm}[Stopping Theorem]
If $\tau$ is almost surely bounded, then 
\[
\EE[X_{\tau}]=\EE[X_{0}].
\]
\end{thm}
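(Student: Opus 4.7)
The plan is to establish the discrete-time version by a telescoping decomposition combined with the defining property of stopping times; the continuous-time case would reduce to the discrete one by dyadic approximation of $\tau$. Since $\tau$ is almost surely bounded, I would first fix an integer $N$ such that $\tau\le N$ almost surely, and then write the identity
\[
X_\tau \;=\; X_0 + \sum_{k=1}^{N}\bigl(X_k - X_{k-1}\bigr)\,\mathbf{1}\{\tau \ge k\}.
\]
Pointwise, if $\tau(\omega) = m \le N$, the indicator equals $1$ precisely for $k = 1,\ldots,m$, so the sum telescopes to $X_m - X_0$ and the identity holds. Each summand is integrable because $\EE|X_k| < \infty$ for every $k$ by Fact of martingales and the indicator is bounded by $1$, so I may pass expectation through the finite sum.

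The key structural input comes from the stopping-time property: the event $\{\tau \le k-1\}$ is determined by $X_0, X_1, \ldots, X_{k-1}$, and hence so is its complement $\{\tau \ge k\}$. Conditioning on the past and applying the tower property,
\[
\EE\bigl[(X_k - X_{k-1})\,\mathbf{1}\{\tau \ge k\}\bigr] \;=\; \EE\bigl[\mathbf{1}\{\tau \ge k\}\cdot\EE[X_k - X_{k-1} \mid X_{\le k-1}]\bigr] \;=\; 0,
\]
since the inner conditional expectation vanishes by the martingale property $\EE[X_k \mid X_{\le k-1}] = X_{k-1}$. Summing over $k = 1,\ldots,N$ gives $\EE[X_\tau] = \EE[X_0]$, which is the claim.

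I do not expect any serious obstacle in the discrete setting: the almost-sure boundedness of $\tau$ is precisely what makes the telescoping sum finite and termwise integrable, and everything else is bookkeeping. The only place where some care is needed is justifying pulling the indicator out of the inner conditional expectation, which relies on $\mathbf{1}\{\tau \ge k\}$ being a function of $X_{\le k-1}$, exactly the content of the stopping-time definition $\{\tau\le t\}\subseteq\sigma(X_{\le t})$ given in the excerpt. For a continuous-time extension one would approximate $\tau$ from above by stopping times taking finitely many dyadic values $\tau_n := 2^{-n}\lceil 2^n \tau\rceil$, apply the discrete result along the dyadic grid, and conclude by dominated convergence using right-continuous paths; this is the only step that would require hypotheses beyond what the theorem states.
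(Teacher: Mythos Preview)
Your argument is the standard and correct proof of the bounded optional stopping theorem via the telescoping decomposition $X_\tau = X_0 + \sum_{k=1}^N (X_k - X_{k-1})\mathbf{1}\{\tau\ge k\}$; there is no gap in the discrete case, and your remarks on the continuous case are accurate. The paper itself does not prove this statement at all: it is listed in the Preliminaries section as a standard fact about martingales with no accompanying proof, so there is nothing to compare against.
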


For submartingales, the equality is replaced by a greater-than inequality.
Finally, the following inequalities will be useful
for us.
\begin{thm}[Doob's martingale inequality]
\label{thm:Doob-ineq}Let $X$ be a submartingale taking real values.
Then for any constant $C\ge0,$ 
\[
\PP\left[\sup_{0\le t\le T}X_{t}\ge C\right]\le\frac{\EE[\max\{X_{T},0\}]}{C}.
\]
\end{thm}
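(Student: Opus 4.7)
The plan is to reduce the inequality to the optional stopping theorem via the classical stopping-time trick. Define the stopping time
\[
\tau = \min\left\{t\ge 0 : X_t \ge C\right\} \wedge T,
\]
with the convention that the minimum of an empty set is $+\infty$, so $\tau$ is bounded by $T$. Since $\{\tau\le t\}$ is determined by $X_{\le t}$, this is a legitimate stopping time, and because it is almost surely bounded, the optional stopping theorem (in its submartingale form) gives $\EE[X_\tau] \le \EE[X_T]$.

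Next I would split by the event $A = \{\sup_{0\le t\le T} X_t \ge C\}$. On $A$, the stopping time $\tau$ is the first time $X_t$ crosses $C$, so $X_\tau \ge C$. On $A^c$, the process never reaches $C$ in $[0,T]$, hence $\tau = T$ and $X_\tau = X_T$. Consequently,
\[
\EE[X_T] \;\ge\; \EE[X_\tau] \;=\; \EE[X_\tau \II\{A\}] + \EE[X_T \II\{A^c\}]
\;\ge\; C\,\PP[A] + \EE[X_T \II\{A^c\}].
\]
Rearranging yields $\EE[X_T \II\{A\}] \ge C\,\PP[A]$, and the trivial bound $X_T \II\{A\} \le \max\{X_T,0\}$ gives
\[
C\,\PP[A] \;\le\; \EE[X_T \II\{A\}] \;\le\; \EE[\max\{X_T,0\}],
\]
which is exactly the claimed inequality after dividing by $C$.

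I expect essentially no obstacle in the discrete-time setting the paper uses, since the only non-trivial input is the optional stopping theorem (which is stated in the preliminaries). The one subtlety worth flagging is that one should check that $\tau$ remains a valid stopping time when the supremum is interpreted over all of $[0,T]$ in the continuous-time version; for the discrete-time martingales relevant to the rest of this paper this reduces to taking a minimum over a finite index set, which is immediate. The other small check is that the optional stopping inequality for submartingales is applied in the correct direction, namely $\EE[X_\tau] \le \EE[X_T]$ with $\tau \le T$, which is where the submartingale (as opposed to supermartingale) hypothesis is used.
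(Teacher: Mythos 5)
Your proof is correct and is the standard textbook argument for Doob's (sub)martingale maximal inequality. The paper does not supply its own proof of Theorem~\ref{thm:Doob-ineq}; it is stated as a classical fact in the preliminaries with no argument given, so there is nothing to compare against. Your reduction via the stopping time $\tau = \min\{t : X_t \ge C\} \wedge T$, the application of optional stopping for submartingales to get $\EE[X_\tau] \le \EE[X_T]$, the split on $A = \{\sup_t X_t \ge C\}$, and the final bound $\EE[X_T \II\{A\}] \le \EE[\max\{X_T,0\}]$ are all correct, and your two flagged checks (validity of $\tau$ as a stopping time in discrete time, and the direction of the optional stopping inequality for submartingales) are exactly the right places to be careful.
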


\begin{thm}[Concentration inequality~{\cite[Theorem 2.21]{chung2006complex}}]
\label{thm:concentration}Let $X_{1},X_{2},\ldots,X_{n}$ be martingales
with filtration $\cF$, such that for $i=1,2,\ldots,n$
\begin{align*}
 & \Var[X_{i}\mid\cF_{i-1}]\le\sigma_{i}^{2},\\
 & X_{i}-X_{i-1}\le M.
\end{align*}
Then
\begin{align*}
\PP[X\ge\lambda] & \le\exp\left(-\frac{\lambda^{2}}{2\sum\sigma_{i}^{2}+2M\lambda/3}\right).
\end{align*}
\end{thm}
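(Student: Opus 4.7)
The plan is to establish this via the classical Chernoff/Cram\'er exponential moment method, specialized to martingale differences. I first pass to the increments $Y_i := X_i - X_{i-1}$, which satisfy $\EE[Y_i\mid\cF_{i-1}] = 0$, $Y_i \le M$ almost surely, and $\EE[Y_i^2\mid\cF_{i-1}] \le \sigma_i^2$; the goal becomes an upper-tail estimate on $\sum_{i=1}^{n} Y_i$.

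The heart of the argument is a one-sided conditional moment generating function bound: for every $t \in [0, 3/M)$,
\[
\EE\bigl[e^{tY_i}\bigm|\cF_{i-1}\bigr] \,\le\, \exp\!\left(\frac{t^{2}\sigma_i^{2}/2}{1 - tM/3}\right).
\]
To prove this I start from the pointwise inequality $e^{x} \le 1 + x + \frac{x^{2}/2}{1-x/3}$, valid for every $x < 3$, which follows by comparing the Taylor series of $e^x$ with a geometric series via $k! \ge 2\cdot 3^{k-2}$ for $k\ge 2$. Substituting $x = tY_i \le tM < 3$ and exploiting $Y_i \le M$ (so that $1-tY_i/3 \ge 1 - tM/3$) gives $e^{tY_i} \le 1 + tY_i + \frac{(tY_i)^{2}/2}{1 - tM/3}$. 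Taking conditional expectation eliminates the linear term and replaces $\EE[Y_i^2\mid\cF_{i-1}]$ by $\sigma_i^2$; the elementary bound $1 + u \le e^u$ then packages the result in the desired exponential form.

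Iterating via the tower property yields
\[
\EE\bigl[e^{t\sum_{i=1}^{n}Y_i}\bigr] \,\le\, \exp\!\left(\frac{t^{2}V/2}{1 - tM/3}\right), \qquad V := \sum_{i=1}^{n}\sigma_i^{2},
\]
so Markov's inequality gives, for any $t \in [0, 3/M)$,
\[
\PP\!\left[\sum_{i=1}^{n} Y_i \ge \lambda\right] \,\le\, \exp\!\left(-t\lambda + \frac{t^{2} V/2}{1-tM/3}\right).
\]
I finish by optimizing in $t$: the standard choice $t = \lambda/(V + M\lambda/3)$ lies in $[0, 3/M)$ and a short computation shows the exponent collapses to $-\lambda^{2}/(2V + 2M\lambda/3)$, which is exactly the claimed bound.

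There is no deep obstacle here. The only mildly delicate point is verifying the pointwise inequality $e^{x} \le 1 + x + \frac{x^{2}/2}{1-x/3}$ uniformly on $x < 3$---in particular on the negative half, where $e^{x}-1-x$ is an alternating series and one must confirm that term-by-term domination by the geometric majorant is not lost. Everything else---the tower property for the exponential supermartingale, Markov's inequality, and the choice of $t$---is routine bookkeeping.
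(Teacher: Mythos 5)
Your argument is correct and is the standard Bernstein--Freedman exponential-moment proof; the paper does not prove this statement itself but simply cites it from Chung--Lu, so there is no in-paper proof to compare against, and your reconstruction is faithful to the classical route that reference follows. Passing to increments, bounding the conditional MGF by $\exp\bigl(\tfrac{t^2\sigma_i^2/2}{1-tM/3}\bigr)$, iterating via the tower property, and optimizing $t=\lambda/(V+M\lambda/3)$ all check out, and your algebra collapsing the exponent to $-\lambda^2/(2V+2M\lambda/3)$ is right.

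The one place you wave your hands is exactly the one you flag: the pointwise bound $e^x\le 1+x+\tfrac{x^2/2}{1-x/3}$ on the negative half-line, where the Taylor series of $e^x-1-x$ alternates and term-by-term comparison with the geometric majorant does not go through. This is genuinely a gap as written, but it is easy to close. Multiplying through by $1-x/3>0$, it suffices to show $h(x):=\tfrac{x^2}{2}-(1-\tfrac{x}{3})(e^x-1-x)\ge 0$ for all $x<3$. One computes $h(0)=0$ and $h'(x)=\tfrac13\bigl[x+(x-2)e^x+2\bigr]$; setting $u(x)=x+(x-2)e^x+2$, one has $u(0)=u'(0)=0$ and $u''(x)=xe^x$, so $u$ (and hence $h'$) is nonnegative for $x\ge 0$ and nonpositive for $x\le 0$, which makes $x=0$ a global minimum of $h$ on $(-\infty,3)$. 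With that lemma in place your proof is complete and elementary.
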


Finally, we should warn the readers that in this paper, often $X$
is a vector and the subscripts are used for coordinates. In that case,
the random process $X$ is denoted $X(t),$ and $X_{i}(t)$
denotes the evolution of each individual coordinate.

\section{Controlled Process\label{sec:control}}

Fix a function $f:\{-1,1\}^{n}\to\{0,1\}$, and we view $f:\RR^{n}\to\RR$
by considering its multilinear extension. We assume that $f$ is not
a constant function. Therefore $f(0)>0$. In this section, we will
consider three different discrete random processes. 
The first one is the uniform process $X(t)\in\{-1,0,1\}^n$ for $t\in\{0,1,\ldots,n\}$.
It's called the uniform process because $X(n)$ will be a uniformly random
string from $\DC$. The second process $Y(t)$ is obtained from $X(t)$ by 
conditioning on $f(X(n))=1$. Therefore, we call $Y(t)$ the conditioned
process. The third process is in effect the same as the second process. They have
identical distributions. However, we will take the control theory perspective, 
and give a player a small number of random coordinates to control. We show that the player will
be able to alter a process to the conditioned process, which is otherwise
the uniform process. Therefore we sometimes call the third process the controlled
process.

First, we consider the following uniform process $X(t)$
for $t=0,1,2,\ldots,n$, such that $X(t)\in\{-1,0,1\}^{n}$, and $X(0)=0^{n}$. 

\vspace{1.5mm}
\noindent\fbox{\begin{minipage}[t]{1\textwidth - 2\fboxsep - 2\fboxrule}%
\textbf{\uline{Procedure 1}} (To generate the discrete uniform
process $X(t)$): \vspace{0.5mm}

Sample a uniformly random permutation $\pi:[n]\to[n].$

For time $t=1,2,\ldots,n$ 
\begin{itemize}
\item Let $i=\pi(t).$ Set $X_{i}(t)$ to be $-1$ or $1$ uniformly at
random. 
\item For all $j\in[n]\setminus\{i\}$, set $X_{j}(t):=X_{j}(t-1)$. 
\end{itemize}
\end{minipage}}

\vspace{1.5mm}
Clearly, the above process is just another way to sample a random
element from $\{-1,1\}^{n}$. We use the notation $P$ to denote the
probability measure over the paths of the above process. The subscript
$P$ will be used to emphasize the underlying process and the corresponding
measure. For example, $\Exp_{P}[f],\Pr_{P}[\cE]$ are the expectation
of the function $f$ and the probability of the event $\cE$, respectively,
both defined over the space of the paths of the above process $X(t)$.
A crucial component of our analysis is that all the partial derivatives
of $f(X(t))$ will be small with high probability for $t$ even very
close to $n$. We formulate it as the following lemma, whose proof
requires some technical preparations, and is therefore deferred to Section~\ref{subsec:proof-influence-small}.
\begin{lem}
\label{lem:influence-remain-small-P}Let $\epsilon>0$\footnote{Throughout this section, let's assume that $\epsilon n$ is a positive
integer.} be such that 
\[
\frac{16}{\epsilon}\ln\frac{4}{\epsilon}\le\ln\frac{1}{\mINF(f)}.
\]
Then for any $\theta\in(0,1)$,
\begin{align*}
\Pr_{P}\left[\max_{0\le t\le(1-\epsilon)n}|\partial_{i}f(X(t))|\ge\theta\right] & \le\theta^{-3}\mINF(f)^{\frac{\epsilon}{16}}+\exp(-\epsilon n/8).
\end{align*}
\end{lem}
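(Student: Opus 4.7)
The plan is as follows. First, observe that the multilinear extension of $\partial_i f$ satisfies $\partial_i f(X(t)) = \EE[\partial_i f(X(n)) \mid X(t)]$, so for each coordinate $i$ the process $M_i(t) := \partial_i f(X(t))$ is a martingale with respect to the natural filtration of $X(t)$. By convexity of $|x|^k$, the process $(|M_i(t)|^k)_t$ is then a nonnegative submartingale for any $k \ge 1$. Applying Doob's maximal inequality (Theorem \ref{thm:Doob-ineq}) at the terminal time $T := (1-\epsilon)n$ yields
\[
\Pr_{P}\!\left[\max_{0 \le t \le T} |M_i(t)| \ge \theta\right] \;\le\; \frac{\EE[|M_i(T)|^k]}{\theta^k}.
\]
The natural choice is $k = 2 + \epsilon'$ for some $\epsilon' \in (0,\epsilon)$, so the exponent on $\theta$ slightly exceeds $2$; the target $\theta^{-3}$ in the statement leaves plenty of room.

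The second step estimates $\EE[|M_i(T)|^k]$ using the hypercontractive inequality advertised in the introduction and proved in Section \ref{sec:hypercontractive}. That inequality is formulated for the i.i.d.\ random restriction $\cR_p$, whereas at time $T$ the process $X(T)$ fixes exactly $T$ coordinates; I would bridge this mismatch by coupling. Concretely, sample $\cR_p$ for a parameter $p$ slightly larger than $1-\epsilon$, then reveal the fixed coordinates of $\cR_p$ first (in a uniformly random order) and the remaining coordinates afterwards, so that $\mu(\partial_i f|_{\cR_p}) = M_i(|S|)$ where $|S|$ is the number of fixed coordinates of $\cR_p$. Since conditional on $|S|=t\ge T$ the variable $X(T)$ has the same marginal distribution as in the original process, the submartingale property of $|M_i(t)|^k$ gives
\[
\Pr[|S|\ge T]\cdot\EE\bigl[|M_i(T)|^k\bigr] \;\le\; \EE\bigl[|\mu(\partial_i f|_{\cR_p})|^k\bigr].
\]
A Chernoff bound controls $\Pr[|S| < T]$ by $\exp(-\Omega(\epsilon n))$, which is the source of the $\exp(-\epsilon n/8)$ additive term. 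The hypercontractive inequality then bounds the right-hand side by $\INF_i(f)^{k/2} \le \mINF(f)^{k/2}$, provided $\epsilon' \le p - (1-\epsilon)$. Combining these estimates, optimizing $p$ and $\epsilon'$ as small multiples of $\epsilon$, and using the hypothesis $(16/\epsilon)\ln(4/\epsilon) \le \ln(1/\mINF(f))$ to absorb constants, yields the claimed bound $\theta^{-3}\mINF(f)^{\epsilon/16} + \exp(-\epsilon n / 8)$ (the exponent $1+\epsilon'/2$ on $\mINF(f)$ is weakened to $\epsilon/16$ to produce a clean statement, using only $\mINF(f)\le 1$ and $\theta<1$).

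The principal obstacle is the hypercontractive inequality itself, whose derivation the authors defer to Section \ref{sec:hypercontractive}. That inequality is exactly what upgrades the trivial variance bound $\EE[|M_i(T)|^2] \le \mINF(f)$ to a higher-moment estimate strong enough to survive Doob's inequality at an exponent exceeding $2$; without it, Markov plus a $k=2$ moment bound loses a polynomial factor in $\theta$. A secondary, more bookkeeping-heavy obstacle is the coupling between the deterministic-time process $X(t)$ and the i.i.d.\ restriction $\cR_p$: one has to verify that the submartingale inequality transfers correctly when the ``stopping size'' $|S|$ is random, and that the Chernoff tail event $\{|S| < T\}$ interacts cleanly enough with $|M_i(T)|^k$ that the two error terms in the final bound decouple.
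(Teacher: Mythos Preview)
Your overall architecture---Doob's maximal inequality on the submartingale $|\partial_i f(X(t))|^{2+\epsilon'}$, then a hypercontractive bound on the $(2+\epsilon')$-moment, then a Chernoff-based coupling between the fixed-size process and the i.i.d.\ restriction---is exactly the paper's approach. The paper phrases the hypercontractivity and Doob steps for a continuous-time process (each coordinate is revealed at an independent uniform time in $[0,1]$) and only couples back to the discrete process $X(t)$ at the very end; your direct coupling to $\cR_p$ is essentially equivalent.

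There is, however, a genuine gap. The statement is about $\max_i\max_t|\partial_i f(X(t))|$ (the $i$ is implicitly quantified; this is how the lemma is used downstream), but your argument bounds a single coordinate and then simply ``weakens'' the exponent $1+\epsilon'/2$ to $\epsilon/16$. A union bound over $i$ gives
\[
\sum_{i=1}^{n}\INF_i(f)^{1+\epsilon'/2},
\]
and this sum is \emph{not} controlled by $\mINF(f)^{\epsilon/16}$ in general. For the majority function, $\INF_i(f)\asymp n^{-1/2}$, so the sum is of order $n^{1/2-\epsilon'/4}$, which diverges for every $\epsilon'\in(0,1)$; the lemma carries no hypothesis on $n$ or on the total influence that would absorb this.

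The paper's fix is to stop the hypercontractive step short of time $1$. It passes from the $(2+\epsilon)$-moment at time $t$ to the $2$-moment at an intermediate time $T=(1+t)/2<1$, and then writes
\[
\bigl(\EE[\partial_i f(X(T))^2]\bigr)^{1+\epsilon/2}\le \INF_i(f)^{\epsilon/2}\cdot\EE[\partial_i f(X(T))^2].
\]
The point is that $\sum_i\EE[\partial_i f(X(T))^2]=\EE[|\nabla f(X(T))|^2]$ is bounded by $1/(1-T)$ via Parseval on the restricted function (this is Proposition~\ref{prop:nabla-f}\ref{enu:nabla-f-length}), so the sum over $i$ costs only a factor $1/(1-T)$ rather than $n$, and this factor is then absorbed using the hypothesis $(16/\epsilon)\ln(4/\epsilon)\le\ln(1/\mINF(f))$. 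In your language: apply hypercontractivity with the larger parameter $q$ strictly below $1$, keep one factor of the second moment at $\cR_q$ unexpanded, and sum it over $i$ using Parseval before bounding the remaining $\epsilon'/2$ power by $\mINF(f)^{\epsilon'/2}$.
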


Next, we modify Procedure 1 to generate what we call the conditioned process.
The goal is to guarantee that
the new process ends up being a random element sampled from $f^{-1}(1)$.
We use $Y(t)$ to distinguish this new process from $X(t)$. Let $Q$
be a new probability measure defined by the equation
\begin{equation}
\Pr_{Q}[Y_{i}(t)=\pm1\mid Y(t-1),\pi(t)]:=\frac{1}{2}\pm\frac{\partial_{i}f(Y(t-1))}{2f(Y(t-1))}.\label{eq:defQ}
\end{equation}
A calculation shows that the Radon-Nykodym derivative of the two measures
satisfies that for any realization $y(1),y(2),\ldots,y(s)\in\{-1,0,1\}^{n}$
of the process $Y(t)$ up to time $s$,
\begin{align}
\frac{dQ\bigl((y(t))_{1\leq t\leq s}\bigr)}{dP\bigl((y(t))_{1\leq t\leq s}\bigr)} & =\prod_{t=1}^{s}2\Pr_{Q}[Y_{\pi(t)}(t)=y_{\pi(t)}(t)\mid Y(t-1)=y(t-1)]\nonumber \\
 & =\prod_{t=1}^{s}\left(1+y_{\pi(t)}(t)\frac{\partial_{\pi(t)}f(y(t-1))}{f(y(t-1))}\right)\nonumber \\
 & =\prod_{t=1}^{s}\frac{f(y(t))}{f(y(t-1))}\nonumber \\
 & =\frac{f(y(s))}{f(0)}.\label{eq:Radon-Nykodym-derivative}
\end{align}
\iffalse Note that $f(X(t))$ is a martingale for any multilinear
function $f$. Thus~(\ref{eq:Radon-Nykodym-derivative}) defines
a valid change of measure. In particular, fix some time $t\in[n]$
and let $i=\pi(t),$ then 
\begin{align*}
\Pr_{Q}[X_{i}(t) & =1\mid X(t-1),\pi(t)]\\
 & =\frac{1}{2}\cdot\frac{f(X(t-1)+e_{i})}{f(X(t-1)}\\
 & =\frac{f(X(t-1))+\partial_{i}f(X(t-1))}{2f(X(t-1))}\\
 & =\frac{1}{2}+\frac{\partial_{i}f(X(t-1)}{2f(X(t-1))}.
\end{align*}
Similarly, 
\[
\Pr_{Q}[X_{i}(t)=-1\mid X(t-1),\pi(t)]=\frac{1}{2}-\frac{\partial_{i}f(X(t-1))}{2f(X(t-1))}.
\]
In words, in the new process at any time $t=1,2,\ldots,n$, the probability
of getting $Y(t)$ is proportional to the size of $f^{-1}(1)\cap C(t)$,
where $C(t)=\{x\in\{-1,1\}^{n}:x_{i}\cdot Y_{i}(t)\ge0\}$ is the
subcube fixed by $Y(t)$. \fi By taking $s=n$ above, we see that
the process $Y(t)$ according to $Q$ is equivalent to the same process
$X(t)$ according to $P$, only conditioned on the event that $f(X(n))=1$.
In particular, according to $Q$, $Y(n)$ is just a uniformly random
element from $f^{-1}(1)$. Further, if we sample $Y(t)$ and let $\cR(t)$ be
the restriction induced by $Y(t)$, then $(f|_{\cR(t)})^{-1}(1)$
is nonempty for any $t$ as long as $f$ is not the constant $0$
function. We record this simple but useful observation that $Q$ is a mild change
of measure of $P$.
\begin{claim}
\label{claim:closeness-P-Q}Let $\cE_{t}$ be some event that depends
only on the paths of the random process up to time $t$, e.g., $X(1),X(2),\ldots,X(t)$
according to $P$ or $Y(1),Y(2),\ldots,Y(t)$ according to $Q$. Then
for any $t\in[n],$
\[
\Pr_{Q}[\cE_{t}]\le\frac{\Pr_{P}[\cE_{t}]}{f(0)}.
\]
\end{claim}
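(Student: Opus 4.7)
The plan is to derive the claim directly from the Radon--Nikodym formula \eqref{eq:Radon-Nykodym-derivative}, which was established in the preceding display. Since the event $\cE_t$ depends only on the first $t$ steps of the path, the corresponding restriction of the Radon--Nikodym derivative is the relevant object, and this restriction has an explicit closed form: $dQ/dP$ restricted to paths of length $t$ equals $f(y(t))/f(0)$. So the whole job reduces to controlling $f(y(t))$ uniformly from above.

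The key observation is that, because $f \colon \DC \to \{0,1\}$ and we have extended $f$ to $\{-1,0,1\}^n$ by multilinearity, the value $f(y)$ for $y \in \{-1,0,1\}^n$ equals the expectation of $f$ under a uniformly random completion of the coordinates where $y_i = 0$. Since $f$ is $\{0,1\}$-valued, this expectation lies in $[0,1]$. In particular, $f(Y(t)) \le 1$ along every realization of the process.

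Combining these two ingredients, I would write
\begin{align*}
\Pr_Q[\cE_t]
 &= \EE_P\!\left[\mathbf{1}_{\cE_t}\cdot \frac{dQ}{dP}\right]
  = \EE_P\!\left[\mathbf{1}_{\cE_t}\cdot \frac{f(X(t))}{f(0)}\right]
  \le \frac{1}{f(0)}\,\EE_P[\mathbf{1}_{\cE_t}]
  = \frac{\Pr_P[\cE_t]}{f(0)},
\end{align*}
which is exactly the desired inequality. There is no real obstacle; the only thing to verify carefully is that it is legitimate to use the length-$t$ version of \eqref{eq:Radon-Nykodym-derivative} (as opposed to the length-$n$ version). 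This is justified either by the fact that \eqref{eq:Radon-Nykodym-derivative} was proved for arbitrary $s$, or equivalently by the tower property: integrating $f(Y(n))/f(0)$ against the $P$-conditional law of the remaining coordinates given $(Y(1),\dots,Y(t))$ yields $f(Y(t))/f(0)$, because $f(X(s))$ is a $P$-martingale (Fact~\ref{fact:martingales}).
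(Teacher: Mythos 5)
Your proof is correct and is essentially the paper's own argument, just spelled out in more detail: both use the Radon--Nikodym identity \eqref{eq:Radon-Nykodym-derivative} to write $\Pr_Q[\cE_t]=\EE_P[\II\{\cE_t\}\cdot dQ/dP]$ and then bound $dQ/dP = f(X(t))/f(0)\le 1/f(0)$. Your extra remark about the tower property / martingale justification for the length-$t$ version is a nice clarification, but the paper already proved \eqref{eq:Radon-Nykodym-derivative} for arbitrary $s$, so it is not strictly needed.
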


\begin{proof}
This is immediate from~(\ref{eq:Radon-Nykodym-derivative}), 
\[
\Pr_{Q}[\cE_{t}]=\Exp_{P}\left[\II\{\cE_{t}\}\cdot\frac{dQ}{dP}\right]\le\frac{\Pr_{P}[\cE_{t}]}{f(0)}.\qedhere
\]
\end{proof}
\noindent We summarize the distribution of the ``conditioned'' process $Y(t)$ according
to $Q$:

\vspace{1.5mm}
\noindent\fbox{\begin{minipage}[t]{1\columnwidth - 2\fboxsep - 2\fboxrule}%
\textbf{\uline{Procedure 2}} (To generate the conditioned process $Y(t)$): \vspace{0.5mm}

Sample a uniformly random permutation $\pi:[n]\to[n].$

For time $t=1,2,\ldots,n$: 
\begin{itemize}
\item Let $i=\pi(t).$ Set $Y_{i}(t)$ according to the following distribution
\begin{align*}
 & \PP[Y_{i}(t)=\pm1]=\frac{1}{2}\pm\frac{\partial_{i}f(Y(t-1))}{2f(Y(t-1))},
\end{align*}
\item For all $j\in[n]\setminus\{i\}$, set $Y_{j}(t):=Y_{j}(t-1)$. 
\end{itemize}
\end{minipage}}

\vspace{1.5mm}

\subsection*{A control-theory point of view}

\noindent The next step will be to consider the above process as a
\emph{controlled version} of the conditioned process $Y(t)$. Fix $\epsilon>0$ and
consider the control problem where at each time $t$, with probability
$1-\epsilon$, $Y(t)$ does a uniformly random step (according to Procedure
1), and with probability $\epsilon$ a player gets to determine the sign
of $Y_{\pi(t)}$ according to her own choosing. 

The key observation of this section is that as long as the player
can control a small fraction of random coordinates, she can simulate
the conditioned process exactly. The motivation to study this controlled 
version of $Y(t)$ is the following:
The randomly fixed coordinates out of the player's control induces a
random restriction of the function $f$. If the player can assign the values
to the coordinates of her control, that is the alive coordinates of the
corresponding random restriction, to end up in $f^{-1}(1)$,
this means the restricted function has a nonempty preimage of $1$.

To this end, we consider the following procedure
(see \hyperref[Procedure-Y]{Procedure $\Pi$}) that generates the conditioned
process $Y(t)$ as well as the uniform process $X(t)$. 

The Procedure $\Pi$ starts with a sampling subroutine as the preparation
stage, then followed by two phases that generate $Y(t)$ for time
$t$ from $0$ to $n$. The first phase corresponds to that described
in the first paragraph of this section. During this phase the player
needs to cherish her rare opportunity and play ``aggressively.''
The second phase starts at a point of time $\tau$ when the aggressive
strategy no longer works. However, we will show that $\tau$ is very
close $n$ with high probability. As a result, it would not be a problem
to give the player full control from now on and let her play ``safely'' till the end.

\vspace{1.5mm}

\noindent\fbox{\begin{minipage}[t]{1\textwidth - 2\fboxsep - 2\fboxrule}%
\textbf{\uline{Procedure \mbox{$\Pi$}\label{Procedure-Y}}} (The
controlled version of processes $Y(t)$ and $X(t)$): \vspace{1mm}

\emph{\# Sampling Subroutine}
\begin{itemize}
\item Sample a uniformly random permutation $\pi:[n]\to[n].$
\item Sample a set $T\subseteq\{1,2,\ldots,n\},$ such that independently for each $i\in[n],$
\[
\PP[i\in T]=\epsilon.
\]
$T$ will be the set of times when the player gets to determine the
value of the coordinate. 
\item Sample a uniformly random $z\in\mopo^{\pi\bar{T}},$ the random assignment
to the variables not controlled by the player.
\end{itemize}
\vspace{3mm}

\emph{\# Phase 1}

Set $Y(0)=X(0)=0^{n}$. 

For time $t=1,2,\ldots,n$:
\begin{itemize}
\item Let $i=\pi(t).$
\item (Coordinate picked uniformly) If $t\not\in T$, set $Y_{i}(t)=X_{i}(t)=z_{i}$. 
\item (Coordinate determined by player) If $t\in T$, set $Y_{i}(t)$ and
$X_{i}(t)$ according to the following distributions 
\begin{align*}
 & \PP[Y_{i}(t)=\pm1]=\frac{1}{2}\pm\frac{1}{2\epsilon}\cdot\frac{\partial_{i}f(Y(t-1))}{f(Y(t-1))},\\
 & \PP[X_{i}(t)=\pm1]=\frac{1}{2}.
\end{align*}
\item For all $j\in[n]\setminus\{i\}$, set $Y_{j}(t):=Y_{j}(t-1)$, $X_{j}(t):=X_{j}(t-1)$.
\item If either of the following holds, \textbf{exit} this loop 
\begin{align*}
 & \max_{i\in[n]}|\partial_{i}f(Y(t))|>\epsilon\delta,\quad f(Y(t))<\delta. & \text{\emph{\#\text{ the breaking condition}}}
\end{align*}
\end{itemize}
\vspace{3mm}

\emph{\# Phase 2}

While $t<n$:
\begin{itemize}
\item $t=t+1.$
\item Let $i=\pi(t).$ Set $Y_{i}(t)$ and $X_{i}(t)$ according to the
following distributions 
\begin{align*}
 & \PP[Y_{i}(t)=\pm1]=\frac{1}{2}\pm\frac{\partial_{i}f(Y(t-1))}{2f(Y(t-1))},\\
 & \PP[X_{i}(t)=\pm1]=\frac{1}{2}.
\end{align*}
\item For all $j\in[n]\setminus\{i\}$, set $Y_{j}(t):=Y_{j}(t-1)$, $X_{j}(t):=X_{j}(t-1)$.
\end{itemize}
\vspace{3mm}

\textbf{Output} $\{Y(t)\}_{t\in\{0,1,\ldots,n\}},\{X(t)\}_{t\in\{0,1,\ldots,n\}}.$%
\end{minipage}}

\vspace{1.5mm}

The process $Y(t)$ will be the main process with which our analysis
concerns, whereas the process $X(t)$ is only defined for the sake
of entropy comparison: We will later argue that the KL-divergence
between the two processes is not too large. It is evident that in
Phase 1 the distribution of $Y(1),Y(2),...$ according to Procedure
$\Pi$ is identical to its distribution according to measure $Q$
as long as 
\begin{equation}
|\partial_{\pi(t)}f(Y(t-1))|\le\epsilon f(Y(t-1)).\label{eq:stopping-condition}
\end{equation}
Indeed, if~(\ref{eq:stopping-condition}) holds, we have 
\begin{align*}
\PP[Y_{\pi(t)}(t)=\pm1] & =(1-\epsilon)\frac{1}{2}+\epsilon\left(\frac{1}{2}\pm\frac{1}{2\epsilon}\cdot\frac{\partial_{\pi(t)}f(Y(t-1))}{f(Y(t-1))}\right)\\
 & =\frac{1}{2}\pm\frac{\partial_{\pi(t)}f(Y(t-1))}{2f(Y(t-1))}.
\end{align*}
Let time $\tau$ be $n+1$ if the breaking condition is never hit,
otherwise let $\tau$ be the time when the breaking condition is hit.
The reader may wonder that a more natural choice of the ``breaking''
condition would be the violation of~(\ref{eq:stopping-condition}).
Our definition forces that (i) $f(Y(\tau-1))$ is large, in addition
to that (ii) all derivatives $|\partial_{i}f(Y(\tau-1))|$ is small
compared to the magnitude of $f(Y(\tau-1))$. Both facts will be very
useful in later sections. Formally, we summarize our definition of
$\tau$ as below, 
\begin{equation}
\tau=\tau_{1}\land\tau_{2}\land(n+1),\label{eq:stopping-time}
\end{equation}
where
\begin{align*}
 & \tau_{1}=\min\{t:\max_{i\in[n]}|\partial_{i}f(Y(t))|>\epsilon\delta\}.\\
 & \tau_{2}=\min\{t:f(Y(t))<\delta\}.
\end{align*}
The values of the parameters $\epsilon,\delta$ will be specified
later on. By definition, the condition~(\ref{eq:stopping-condition})
holds for $t<\tau$. We should think $\tau$ as a stopping time of
Phase 1. After the stopping time $\tau$, the player gets to control
each coordinates left. She simply assigns the values according $Q$
as in Procedure 2. Since in both phases Procedure $\Pi$ has the same
law as that of $Q$, the controlled process $Y(t)$ defined in procedure $\Pi$
is identical in distribution to the conditioned process $Y(t)$ defined in Procedure
$2$. The same is clearly true for the uniform process $X(t)$ in its two
versions (Procedure $1$ and Procedure $\Pi$).

In the preparation stage, Procedure $\Pi$ samples a random
permutation $\pi,$ a set $T$ of times controlled by the player and
$z$ the random assignment to the variables not controlled by the
player. For every $m\in[n]$, let $\cG_{m}$ be the $\sigma$-algebra generated by
\[
\pi|_{\{1,2,\ldots,m\}},\quad T\cap\{1,2,\ldots,m\},\quad\text{and }z|_{\pi\{1,2,\ldots,m\}}.
\]
Thus, $\cG_{m}$ contains all the information in a run of Procedure
$\Pi$, excluding the player's choices, up to time $m$. Also, $\cG_m$ induces
a restriction of $T$:\footnote{We can also consider any restriction
$\cR=(S,z)$ for $S\subseteq \pi(\{1,2,\ldots,m\}\setminus T)$. We will use this observation in later sections.}
\[\cR=(\pi(\{1,2,\ldots,m\}\setminus T)),z).\]
A moment's thought reveals that if the controlled process $Y(t)$ in a run of Procedure $\Pi$ satisfies
that $\tau>m$, then $f|_{\cR}$ contains a nonempty preimage of 1.
\begin{claim}
If $\PP[ \tau > m \mid \cG_m] >0$, then
$(f|_{\cR})^{-1}(1) \not=\varnothing.$
\end{claim}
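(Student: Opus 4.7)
The plan is to translate the positivity of the conditional probability $\PP[\tau > m \mid \cG_m] > 0$ into the existence of a specific trajectory of $Y(\cdot)$ up to time $m$, and then extract a preimage in $(f|_\cR)^{-1}(1)$ from the multilinear interpretation of $f(Y(m))$.

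First I would unpack what $\cG_m$ encodes: the permutation $\pi|_{\{1,2,\ldots,m\}}$, the set $T\cap\{1,2,\ldots,m\}$, and the values $z_{\pi(t)}$ for $t\in\{1,2,\ldots,m\}\setminus T$. Given $\cG_m$, the only remaining randomness in $Y(1),\ldots,Y(m)$ consists of the player's coin flips at times in $T\cap\{1,2,\ldots,m\}$, and the event $\{\tau>m\}$ is measurable with respect to $Y(1),\ldots,Y(m)$. So the assumed positivity guarantees at least one realization of those player coin flips producing a trajectory with $\tau>m$. Fix such a trajectory and set $y^\star:=Y(m)\in\{-1,0,1\}^n$; by construction $y^\star$ agrees with $z$ on $\pi(\{1,2,\ldots,m\}\setminus T)$.

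Next, because $\tau>m$ in particular rules out $\tau_2\le m$, we have $f(y^\star)\ge\delta>0$ where $f$ is the multilinear extension. Writing $A:=\pi(\{1,2,\ldots,m\})$ for the set of coordinates already revealed in $y^\star$, the probabilistic interpretation of the multilinear extension gives
\[
f(y^\star) \;=\; \EE\bigl[f(X) \bigm| X_i = y^\star_i \text{ for all } i \in A\bigr],
\]
with $X$ uniform on $\DC$. Since this conditional expectation is strictly positive and $f$ is $\{0,1\}$-valued, there must exist some $x^\star\in\DC$ with $x^\star|_A=y^\star|_A$ and $f(x^\star)=1$. Because $\pi(\{1,2,\ldots,m\}\setminus T)\subseteq A$, this $x^\star$ also matches $z$ on $\pi(\{1,2,\ldots,m\}\setminus T)$, so it lies in the subcube determined by $\cR$, giving $x^\star\in(f|_\cR)^{-1}(1)$.

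There is no serious obstacle here; the claim essentially just combines the breaking-condition bound ($f(Y(t))\ge\delta$ while $t<\tau$) with the probabilistic reading of the multilinear extension. The one step that warrants care is cleanly separating which coordinates are pinned by $\cG_m$ (those in $\pi(\{1,2,\ldots,m\}\setminus T)$, matching $z$) from those chosen by the player (which are fixed in $y^\star$ but remain ``alive'' under $\cR$), so that the completion $x^\star$ delivered by the multilinear-extension argument is still consistent with the restriction $\cR$.
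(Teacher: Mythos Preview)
Your argument is correct. The paper does not give a proof of this claim at all (it only says ``a moment's thought reveals'' the statement), and your explicit verification---extracting a Phase~1 trajectory with $\tau>m$ from the positive conditional probability, using $\tau_2>m$ to get $f(Y(m))\ge\delta>0$, and then reading off a completion $x^\star\in f^{-1}(1)$ consistent with $\cR$ via the multilinear-extension interpretation---is exactly the intended reasoning. An equivalent phrasing, hinted at by the paper's earlier remark that $(f|_{\cR(t)})^{-1}(1)$ is always nonempty, is simply to continue the fixed trajectory to time $n$ and take $x^\star=Y(n)\in f^{-1}(1)$; this avoids invoking $\tau_2$ but is the same argument.
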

Therefore, if we can argue that $\tau>m$ running Procedure $\Pi$ on $f$ and $1-f$ with the same $\cG_m$,
then we actually proved that $f|_{\cR}$ is nonconstant. To give a quantitative bound on the variance
of $f|_\cR$ requires some more work. The above discussion sets two tasks for the remainder of
this section. First, to analyze the stopping time $\tau$ and second, to provide the necessary tools
to bound the variance of the restricted function.

\subsection{Stopping time \texorpdfstring{$\tau$ of the process $Y(t)$}{of the controlled process}  }

Next, we prove that with high probability $\tau>(1-\epsilon)n$ for
very small $\epsilon$. Therefore, Phase 2 in Procedure $\Pi$ can
not be too long. 
\begin{lem}[Stopping time $\tau$ of the process $Y(t)$]
\label{lem:stopping-time-tau} Let $f:\DC\to\{0,1\}$ be such that
$\Var[f]\ge2^{-o(n)}.$ Further, let $\epsilon>0$ and $\delta$ be
such that
\begin{align*}
 & \frac{16}{\epsilon}\ln\frac{4}{\epsilon}\le\ln\frac{1}{\mINF(f)},\\
 & \delta\ge\frac{\mINF(f)^{\epsilon/80}}{\epsilon}.
\end{align*}
Then for sufficiently large $n$, we have 
\[
\Pr_{Q}[\tau\le(1-\epsilon)n]\le\frac{3\delta}{f(0)}.
\]
\end{lem}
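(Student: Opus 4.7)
The plan is to bound the two constituent events $\{\tau_1\le(1-\epsilon)n\}$ and $\{\tau_2\le(1-\epsilon)n\}$ separately and combine them by a union bound, since $\tau=\tau_1\wedge\tau_2\wedge(n+1)$.

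For the first event I would transfer from the law $Q$ back to the uniform law $P$ via Claim~\ref{claim:closeness-P-Q}, paying a factor $1/f(0)$, and then invoke Lemma~\ref{lem:influence-remain-small-P} with $\theta=\epsilon\delta$ (union-bounded over the $n$ coordinates). The hypothesis $\delta\ge\mINF(f)^{\epsilon/80}/\epsilon$ is calibrated precisely so that
\[
(\epsilon\delta)^{-3}\mINF(f)^{\epsilon/16}\le\mINF(f)^{\epsilon/40}\le\epsilon\delta\le\delta,
\]
and both the factor $n$ from the union bound and the tail $\exp(-\epsilon n/8)$ are absorbed: by Fact~\ref{fact:mINF-bound} the hypothesis $\Var[f]\ge 2^{-o(n)}$ forces $\mINF(f)\ge 2^{-o(n)}$, so $n$ is sub-polynomial in $1/\mINF(f)$ and $\exp(-\epsilon n/8)$ is dominated by any fixed positive power of $\mINF(f)$ for large $n$. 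This yields $\Pr_Q[\tau_1\le(1-\epsilon)n]\le 2\delta/f(0)$.

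For the second event, the crucial observation is that under $Q$ the nonnegative process $1/f(Y(t))$ is a \emph{martingale}. Writing $a=f(Y(t-1))$ and $b=\partial_{\pi(t)}f(Y(t-1))$, multilinearity of $f$ gives $f(Y(t))=a\pm b$, while (\ref{eq:defQ}) assigns the two signs probabilities $(a\pm b)/(2a)$, so
\[
\Exp_Q\!\left[\frac{1}{f(Y(t))}\,\Big|\,Y(t-1),\pi(t)\right]=\frac{a+b}{2a(a+b)}+\frac{a-b}{2a(a-b)}=\frac{1}{a}.
\]
Doob's maximal inequality (Theorem~\ref{thm:Doob-ineq}) then gives
\[
\Pr_Q[\tau_2\le(1-\epsilon)n]=\Pr_Q\!\left[\max_{t\le(1-\epsilon)n}\frac{1}{f(Y(t))}>\frac{1}{\delta}\right]\le \delta\cdot\Exp_Q\!\left[\frac{1}{f(Y((1-\epsilon)n))}\right]=\frac{\delta}{f(0)},
\]
the last equality by the martingale property together with $Y(0)=0^n$. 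Summing the two bounds yields $3\delta/f(0)$.

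The conceptual step—and the likely main obstacle—is spotting the $1/f$-martingale identity under $Q$; in hindsight it is natural, since the change of measure in (\ref{eq:defQ}) is precisely the one that makes $1/f$ have zero conditional drift. Once this identity is in hand, the rest reduces to routine bookkeeping: a careful union bound over coordinates in the application of Lemma~\ref{lem:influence-remain-small-P}, and verification that the calibration of $\delta$ in terms of $\mINF(f)$ swallows both that factor of $n$ and the exponential tail.
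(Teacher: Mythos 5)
Your proposal is essentially correct, and the two halves deserve separate comments.

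For $\cE_2=\{\tau_2\le(1-\epsilon)n\}$, you take a genuinely different (and arguably cleaner) route than the paper. The paper invokes Claim~\ref{claim:closeness-P-Q} and asserts $\Pr_Q[\cE_2]\le\delta/f(0)$ ``immediately,'' but the claim as literally stated only gives $\Pr_Q[\cE_2]\le\Pr_P[\cE_2]/f(0)$, and $\Pr_P[\cE_2]$ is certainly not $\le\delta$ in general (under $P$, the martingale $f(X(t))$ ends at $0$ with probability $1-f(0)$); what the paper implicitly uses is the Radon--Nikodym derivative evaluated at the stopping time $\tau_2\wedge(1-\epsilon)n$, which on $\cE_2$ equals $f(X(\tau_2))/f(0)<\delta/f(0)$. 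Your observation that $1/f(Y(t))$ is a nonnegative $Q$-martingale (it is exactly $f(0)\cdot dP/dQ$ on $\cF_t$) followed by Doob's maximal inequality is the same computation unwound, but it is self-contained and avoids having to refine the stated claim, so it is a nice alternative packaging.

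For $\cE_1$, your approach matches the paper's (transfer from $Q$ to $P$ by Claim~\ref{claim:closeness-P-Q} and invoke Lemma~\ref{lem:influence-remain-small-P}), and your exponent bookkeeping $(\epsilon\delta)^{-3}\mINF(f)^{\epsilon/16}\le\mINF(f)^{\epsilon/40}\le\epsilon\delta\le\delta$ is correct, but there is one wrinkle you should fix. You union-bound over the $n$ coordinates and then claim the factor of $n$ is absorbed because ``$n$ is sub-polynomial in $1/\mINF(f)$.'' That does not follow from $\mINF(f)=2^{-o(n)}$: if, say, $\log(1/\mINF(f))=\Theta(\log\log n)$, then $(1/\mINF(f))^{c}=(\log n)^{\Theta(c)}\ll n$, so no fixed power of $1/\mINF(f)$ dominates $n$. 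Fortunately the union bound is not needed: as the restatement at the end of Section~\ref{sec:hypercontractive} and its proof via $\beta(s)=\max_{i}|\partial_i f(X(s))|$ make clear, the $\max$ over $i$ is already built into Lemma~\ref{lem:influence-remain-small-P} (its displayed statement just has a typographical omission of the $\max_{i\in[n]}$). With that correction the spurious $n$ disappears, $\Pr_P[\cE_1]\le\mINF(f)^{\epsilon/40}+\exp(-\epsilon n/8)\le 2\delta$ for large $n$, and the rest of your argument goes through.
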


\begin{proof}
The proof relies on the fact that $Q$ is a mild change of measure
with respect to $P$. Consider the following two bad events, 
\begin{align*}
\cE_{1}:\quad\tau_{1}\le(1-\epsilon)n,\\
\cE_{2}:\quad\tau_{2}\le(1-\epsilon)n.
\end{align*}
We first bound $\PP_{Q}[\cE_{1}]$. Note that
\begin{align*}
\Pr_{P}[\cE_{1}] & =\Pr_{P}\left[\max_{0\le s\le(1-\epsilon)n}|\partial_{i}f(X(s))|\ge\epsilon\delta\right]\\
 & \le\Pr_{P}\left[\max_{0\le s\le(1-\epsilon)n}|\partial_{i}f(X(s))|\ge\mINF(f)^{\epsilon/60}\right]\\
 & \le\mINF(f)^{\epsilon/80}+\exp(-\epsilon n/8),
\end{align*}
where the second step holds as $\epsilon\delta\ge\mINF(f)^{\epsilon/80}\ge\mINF(f)^{\epsilon/60}$;
the final step applies Lemma~\ref{lem:influence-remain-small-P}.
The above bound in turn by Claim~\ref{claim:closeness-P-Q} implies
that
\begin{align}
\Pr_{Q}[\cE_{1}] & \le\frac{\Pr_{P}[\cE_{1}]}{f(0)}\nonumber \\
 & \le\frac{\mINF(f)^{\epsilon/80}+\exp(-\epsilon n/8)}{f(0)}.\label{eq:controlled-bound-E1}
\end{align}
Next we move to bound $\PP_{Q}[\cE_{2}]$. It is immediate from Claim~\ref{claim:closeness-P-Q}:
\begin{equation}
\Pr_{Q}[\cE_{2}]\le\frac{\delta}{f(0)}.\label{eq:controlled-bound-E2}
\end{equation}
Apply union bound to~(\ref{eq:controlled-bound-E1}) and~(\ref{eq:controlled-bound-E2}),
then for large enough $n$,
\begin{align*}
\Pr_{Q}[\tau\le(1-\epsilon)n] & =\Pr_{Q}[\cE_{1}\lor\cE_{2}]\\
 & \le\frac{1}{f(0)}\cdot\left(\delta+\mINF(f)^{\epsilon/80}+\exp(-\epsilon n/8)\right)\\
 & \le\frac{3\delta}{f(0)}
\end{align*}
where in the final step we have $\delta\ge\mINF(f)^{\epsilon/80}=\exp(-o(\epsilon n))$
by Fact~\ref{fact:mINF-bound}. 
\end{proof}

\subsection{\label{subsec:KL-control-process}The KL-divergence between \texorpdfstring{$Y(t)$
and $X(t)$}{the controlled and the discrete uniform process}}

The purpose
of this subsection is to show that for any $m\in[n]$, $Y(n)$ given $\cG_{m}$
is close to uniform with high probability over the random choices
associated with $\cG_{m}$. In particular, we will show that the KL-divergence
between $Y(n)$ and $X(n)$ given $\cG_{m}$ is small with high probability
over the random choices associated with $\cG_{m}.$ Recall that the
coordinates of $X(n)$ not fixed by $\cG_{m}$ are uniform.
\begin{lem}
\label{lem:Q-KL-bound}For any $m\in [n]$, abbreviate
\begin{align*}
 & \cG_{m}=(\pi|_{\{1,2,\ldots,m\}},T\cap\{1,2,\ldots,m\},z|_{\pi\{1,2,\ldots,m\}}).
\end{align*}
Then for some universal constant $C$, and $\epsilon$, $\delta$ in the
breaking condition in~\hyperref[Procedure-Y]{Procedure $\Pi$},
\begin{align*}
 & \PP_{\cG_{m}}\left[\KLfrac{Y(n)\mid\cG_{m}}{X(n)\mid\cG_{m}}\ge\frac{C}{\epsilon}\ln\frac{e n}{n-m+1}\log\frac{e}{\delta}\right]\le\delta.
\end{align*}
\end{lem}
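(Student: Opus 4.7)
The plan is to reduce $\KL{Y(n)\mid\cG_{m}}{X(n)\mid\cG_{m}}$ to a simpler quantity, namely $-\log\mu(f|_{\cR})$ for the restriction $\cR$ induced by $\cG_m$, and then bound this via martingale concentration using the Level-1 inequality.

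\textbf{Step 1 (KL reduction).} Recall that under $Q$ the random variable $Y(n)$ is uniformly distributed on $f^{-1}(1)$. On the high-probability event $\{\tau>m\}$ (furnished by Lemma~\ref{lem:stopping-time-tau} when $m<(1-\epsilon)n$), every $t\leq m$ falls in Phase~1, so $\cG_m$ constrains $Y(n)$ only through the equalities $Y_{\pi(t)}=z_{\pi(t)}$ for $t\leq m$ with $t\notin T$; the remaining entries of $\cG_m$ (namely $\pi|_{1:m}$, $T\cap\{1{:}m\}$, and $z$ on $\pi(\{1{:}m\}\cap T)$) are conditionally independent of $Y(n)$ given these equalities. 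Hence $Y(n)\mid\cG_m$ is uniform on $\{y\in f^{-1}(1):y|_{\pi(\{1:m\}\setminus T)}=z|_{\pi(\{1:m\}\setminus T)}\}$, while $X(n)\mid\cG_m$ is uniform on the surrounding subcube. A direct count gives
\[
\KL{Y(n)\mid\cG_m}{X(n)\mid\cG_m}\;=\;-\log\mu\bigl(f|_{\cR}\bigr),
\]
where $\cR=(\pi(\{1{:}m\}\setminus T),\,z|_{\pi(\{1:m\}\setminus T)})$. The complementary event $\{\tau\leq m\}$ has probability $O(\delta/f(0))$ and is absorbed into the $\delta$-budget.

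\textbf{Step 2 (Martingale bound on $-\log\mu(f|_{\cR})$).} Since $\cR$ fixes $s:=m-|T\cap\{1{:}m\}|$ coordinates to uniform values, $\mu(f|_{\cR})=f(\widetilde X(s))$ where $\widetilde X$ is the uniform process of Procedure~1; here $s\sim\mathrm{Bin}(m,1-\epsilon)$ concentrates around $m(1-\epsilon)$. It suffices to upper-bound $L_t:=-\log f(\widetilde X(t))$ at $t=s$ with high probability. Since $f(\widetilde X(t))$ is a $P$-martingale and $\log$ is concave, $L_t$ is a submartingale. By Lemma~\ref{lem:influence-remain-small-P}, with high probability $\max_i|\partial_i f(\widetilde X(t))|\leq\epsilon\delta$ for all $t\leq(1-\epsilon)n$; combined with $f(\widetilde X(t))\geq\delta$ (via a stopping-time argument analogous to Lemma~\ref{lem:stopping-time-tau}), this yields $|u_t|:=|\partial_{\pi(t)}f/f|\leq\epsilon$ and bounded increments $|L_t-L_{t-1}|\leq O(\epsilon)$. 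Applying Theorem~\ref{thm:level-1-ineq} to the restricted function at each step gives
\[
\EE_{\pi(t)}\bigl[u_t^2\mid\cF_{t-1}\bigr]\;\leq\;\frac{C\log(e/f(\widetilde X(t-1)))}{n-t+1}\;\leq\;\frac{C\log(e/\delta)}{n-t+1},
\]
bounding both the expected drift $D_t=-\tfrac{1}{2}\log(1-u_t^2)$ and the predictable quadratic variation. Summing over $t\leq s$ gives a total expected drift of $O(\log(e/\delta)\cdot\ln(en/(n-s+1)))$, and Freedman's inequality (Theorem~\ref{thm:concentration}) applied to the centered martingale $L_t-L_0-\sum_{s\leq t}D_s$ concentrates $L_s$ to
\[
L_s\;\leq\;O\bigl(\log(e/\delta)\cdot\ln(en/(n-s+1))\bigr)
\]
with probability $\geq 1-\delta$. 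This implies the claimed bound, with the $1/\epsilon$ in the lemma appearing as conservative slack.

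\textbf{Main obstacle.} The delicate part is threading several high-probability events via a union bound while keeping the total bad event within $\delta$: (i) the small-influence event of Lemma~\ref{lem:influence-remain-small-P}; (ii) $f(\widetilde X(t))\geq\delta$ from a stopping-time argument; (iii) the Phase-1 event $\{\tau>m\}$ underlying the reduction in Step~1; and (iv) a pathwise (not merely in-expectation) concentration of $\sum_t u_t^2$, needed because Freedman's inequality uses the predictable quadratic variation along the path. Careful bookkeeping is required to absorb all of these into a single $\delta$-probability bad event.
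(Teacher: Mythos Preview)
Your Step 1 reduction is incorrect, and this is where the argument breaks. You claim that on $\{\tau>m\}$ the conditional law $Y(n)\mid\cG_m$ is uniform on $f^{-1}(1)\cap\mathrm{cube}(\cR)$, so that the KL collapses to $-\log\mu(f|_{\cR})$. But $\cG_m$ contains $T\cap\{1{:}m\}$, and once you condition on which steps are player-controlled, the law of $Y$ is no longer that of Procedure~2. Concretely, for $t\in T$ the law of $Y_{\pi(t)}(t)$ given $Y(t-1)$ is $\tfrac12\pm\tfrac{1}{2\epsilon}\tfrac{\partial_{\pi(t)}f}{f}$, not $\tfrac12\pm\tfrac12\tfrac{\partial_{\pi(t)}f}{f}$; only after averaging over the Bernoulli$(\epsilon)$ coin for $T$ do these coincide. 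Hence the Radon--Nikodym derivative of $Y(n)\mid(\pi,T,z)$ with respect to uniform on the subcube is $\prod_{t\in T\cap\{1:m\}}\bigl(1+\tfrac{y_{\pi(t)}}{\epsilon}\tfrac{\partial_{\pi(t)}f(y(t-1))}{f(y(t-1))}\bigr)\cdot(\text{tail})$, which is path-dependent and not equal to $f(y)/\mu(f|_{\cR})$. (Also, $\{\tau>m\}$ is not $\cG_m$-measurable---it depends on the player's choices---so it cannot simply be invoked to simplify a quantity that is a function of $\cG_m$ alone.)

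A sanity check confirms the error: your Step~2, were Step~1 valid, would yield $K=O\bigl(\log(e/\delta)\ln(en/(n-s+1))\bigr)$ with no $1/\epsilon$ factor, which you dismiss as ``conservative slack.'' But for the majority function one has $-\log\mu(\mathrm{MAJ}|_{\cR})=\Theta(1/\rho)\approx\Theta(1/\epsilon)$ with constant probability, so the $1/\epsilon$ is genuinely necessary and your bound would be too strong to be true.

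The paper never tries to identify $Y(n)\mid\cG_m$ explicitly. Instead it applies the chain rule along the path: steps $t<\tau'$ with $t\notin T$ contribute zero (both processes equal $z_{\pi(t)}$), the block beyond $\tau'$ contributes at most $\log(1/\delta)$ via the explicit density, and each step $t\in T$, $t<\tau'$ contributes $1-H\bigl(\tfrac12+\tfrac{1}{2\epsilon}\tfrac{\partial f}{f}\bigr)\le\tfrac{1}{\epsilon^2}\bigl(\tfrac{\partial f}{f}\bigr)^2$. This is where the $1/\epsilon^2$ enters; the Level-1 inequality bounds $\EE_{\pi(t)}\bigl[(\tfrac{\partial f}{f})^2\bigr]\le\tfrac{C\log(e/\delta)}{n-t+1}$, and since $\II\{t\in T\}$ contributes a factor $\epsilon$, the sum becomes $\tfrac{C}{\epsilon}\log(e/\delta)\ln\tfrac{en}{n-m+1}$. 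Freedman's inequality then gives the concentration---much as you envisioned in Step~2, but applied to the increments $Z_t=\II\{t\in T\}(\tfrac{\partial f}{f})^2$ along the $Y$-process rather than to $-\log f(\widetilde X(t))$ along the uniform one.
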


\begin{proof}
Let $\tau'=\tau\land(m+1)$. By definition of the stopping time $\tau$~(\ref{eq:stopping-time}),
for $t<\tau',$
\begin{align}
 & f(Y(t))\ge\delta,\label{eq:KL-bound-mean}\\
 & |\partial_{i}f(Y(t))|\le\epsilon f(Y(t)).\label{eq:KL-bound-derivative}
\end{align}
We calculate the KL-divergence between $Y(n)\mid\cG_{m}$ and $X(n)\mid\cG_{m}$.
By the chain rule,
\begin{align*}
\KLfrac{Y(n)\mid\cG_{m}}{X(n)\mid\cG_{m}} & =\EE_{\cG_{m}}\left[\sum_{t=1}^{\tau'-1}\II\{t\in T\}\KLfrac{Y_{\pi(t)}(t)\mid Y(t-1)}{X_{\pi(t)}}\right.\\
 & \qquad\qquad\left.+\KLfrac{Y(n)|_{\pi\{\tau',\tau'+1,\ldots,n\}}\mid Y(\tau'-1)}{X(n)|_{\pi\{\tau',\tau'+1,\ldots,n\}}}\right],
\end{align*}
where the equality holds because for any $t\in T$, 
\[
(Y_{\pi(t)}\mid Y(t-1))=(Y_{\pi(t)}\mid Y(t-1),\cG_{m}),
\]
namely, any variable $Y_{\pi(t)}(t)$ controlled by the player is
independent of the variables in the future that she has no control
of; and all coordinates in $X(n)$ are independent. 

Next, using formula~\eqref{eq:Radon-Nykodym-derivative}, combined
with~(\ref{eq:KL-bound-mean}), it follows that for any $t\le\tau'$,
\iffalse $Y(n)|_{\pi\{t,t+1,\ldots,n\}}$ is uniform over an $f(Y(t-1))$-fraction
of the subcube $\mopo^{\pi\{t,t,\ldots,n\}}$ and $X(n)|_{\pi\{t,t+1,\ldots,n\}}$
is uniform over $\mopo^{\pi\{t,t,\ldots,n\}}$. By~(\ref{eq:KL-bound-mean}),
it follows that for any $t\le\tau'$, \fi 
\begin{align}
 & \KLfrac{\left.\left(Y(n)|_{\pi\{t,t+1,\ldots,n\}}\right)\;\right|Y(t-1)}{X(n)|_{\pi\{t,t+1,\ldots,n\}}}\nonumber \\
 & \qquad\qquad\qquad=\log\frac{dQ((Y(i))_{t\leq i\leq n})}{dP((X(i))_{t\leq i\leq n})}=\log\frac{1}{f(Y(t-1))}\le\log\frac{1}{\delta}.\label{eq:KL-after-stopping-time}
\end{align}
Combining the above two displays,
\begin{align*}
 & \KLfrac{Y(n)\mid\cG_{m}}{X(n)\mid\cG_{m}}-\log\frac{1}{\delta}\\
 & \qquad\le\sum_{t=1}^{n}\EE_{Y(t-1)\mid\cG_{m}}\left[\II\{t\in T\}\II\{t<\tau'\}\KLfrac{Y_{\pi(t)}(t)\mid Y(t-1)}{X_{\pi(t)}(t)}\right]\\
 & \qquad=\sum_{t=1}^{n}\EE\left[\II\{t\in T\}\II\{t<\tau'\}\left(1-H\!\left(\frac{1}{2}+\frac{\partial_{\pi(t)}f(Y(t-1))}{2\epsilon f(Y(t-1))}\right)\right)\right]\\
 & \qquad\le\sum_{t=1}^{n}\EE\left[\II\{t\in T\}\II\{t<\tau'\}\frac{1}{\epsilon^{2}}\left(\frac{\partial_{\pi(t)}f(Y(t-1))}{f(Y(t-1))}\right)^{2}\right],
\end{align*}
where the second step is by the definition of the KL-divergence; and
the final step is due to~\eqref{eq:entropy-upper-bound}. Abbreviate
\[
Z_{t}:=\II\{t\in T\}\II\{t<\tau'\}\left(\frac{\partial_{\pi(t)}f(Y(t-1))}{f(Y(t-1))}\right)^{2}.
\]
\begin{claim}
There is some universal constant $C\ge1,$ such that for any $t<\tau',$
\begin{align}
 & Z_{t}\mid Y(t-1)\in[0,\epsilon^{2}],\label{eq:Z-bounded}\\
 & \EE[Z_{t}\mid Y(t-1)]\le\frac{C\epsilon}{n-t+1}\log\frac{e}{\delta},\label{eq:Z-mean}\\
 & \Var[Z_{t}\mid Y(t-1)]\le\epsilon^{2}\EE[Z_{t}\mid Y(t-1)].\label{eq:Z-variance}
\end{align}
\end{claim}

\begin{proof}
(\ref{eq:Z-bounded}) follows from~(\ref{eq:KL-bound-derivative}).
Let 
\[
v(t):=\frac{(\nabla f(Y(t-1)))|_{\pi\{t,t+1,\ldots,n\}}}{f(Y(t-1))}.
\]
Then, 
\begin{align*}
\EE[Z_{t}\mid Y(t-1)] & =\EE_{\pi(t)}[\epsilon v(t)_{\pi(t)}^{2}\mid Y(t-1)]\\
 & =\frac{\epsilon|v(t)|^{2}}{n-t+1}\\
 & \le\frac{C\epsilon}{n-t+1}\log\frac{e}{f(Y(t-1))}\\
 & \le\frac{C\epsilon}{n-t+1}\log\frac{e}{\delta},
\end{align*}
where the first step holds as $t\in T$ with probability $\epsilon$;
in the second step, $\pi(t)$ is random within the $n-t+1$ alive
coordinates given $Y(t-1)$; the third step follows the Level-1 inequality
of Theorem~\ref{thm:level-1-ineq}; and the final step is due to~(\ref{eq:KL-bound-mean}). 

The variance of $Z_{t}\mid Y(t-1)$ can be bounded as follows:
\begin{align*}
(\epsilon^{2}-\EE[Z_{t}])\EE[Z_{t}]-\Var[Z_{t}] & =\EE[(\epsilon^{2}-Z_{t})Z_{t}]\ge0.
\end{align*}
We comment that such a bound is sometimes referred to as the Bhatia-Davis
inequality. 
\end{proof}
By~(\ref{eq:Z-mean})-(\ref{eq:Z-variance}), the definition that $\tau'\le m+1,$ and the
following elementary fact that
\[
\ln(n+1)\le\sum_{i=1}^{n}\frac{1}{n}\le\ln en,
\]
we have
\begin{align}
 & \sum_{t=1}^{n}\EE[Z_{t}\mid Y(t-1)]\le \lambda,\label{eq:sum-Z-mean}\\
 & \sum_{t=1}^{n}\Var[Z_{t}\mid Y(t-1)] \le \epsilon^2\lambda, \label{eq:sum-Z-var}
\end{align}
where
\[
\lambda=C\epsilon\ln\frac{en}{n-m+1}\log\frac{e}{\delta}.
\]
The lemma is concluded by estimating,
\begin{align*}
 & \PP\left[\KLfrac{Y(n)\mid\cG_{m}}{X(n)\mid\cG_{m}}\ge\frac{3\lambda}{\epsilon^{2}}+\log\frac{1}{\delta}\right]\\
 & \qquad\qquad\le\PP\left[\sum_{t=1}^{n}Z_{t}\mid Y(t-1)\ge3\lambda\right]\\
 & \qquad\qquad\le\exp\left(-\frac{(2\lambda)^{2}}{2\epsilon^{2}\lambda+4\epsilon^{2}\lambda/3}\right)\\
 & \qquad\qquad\le\exp\left(-\frac{C}{\epsilon}\ln\frac{en}{n-m+1}\log\frac{e}{\delta}\right)\le\delta,
\end{align*}
where the second step invokes the concentration inequality of Theorem~\ref{thm:concentration}
since $Z_{t}-\EE[Z_{t}\mid Y(t-1)]$ is a martingale with respect
to $Y(0),Y(1),\ldots,Y(t-1)$. This finishes our proof to Lemma~\ref{lem:Q-KL-bound}
with a change of the constant $C$.
\end{proof}

\section{Proofs of the Main Results\label{sec:appl}}

In this section, we prove a sharp ``it ain't over till it's over'' theorem, i.e., 
the  nonasymptotic version of Theorem~\ref{thm:mainintro}. Then we comment on its optimality,
and discuss its applications to block sensitivity and decision tree complexity. 
\begin{thm}[``It ain't over till it's over'']
\label{thm:main}There are absolute constant $C>1.$ Given $f:\{-1,1\}^{n}\to\{0,1\}$,
such that $\mINF(f)<1/C$ and $\Var[f]=2^{-o(n)}$. Let $\cR$ be
a random restriction that keeps exactly $\lceil\rho n\rceil$ variables
alive, where 
\[
\frac{C}{\Var[f]}\cdot\frac{\ln\ln(1/\mINF(f))}{\ln(1/\mINF(f))}\le\rho\le1.
\]
Let $p$ be such that
\begin{equation}
\frac{8\mINF(f)^{\rho/C}}{\rho\Var[f]}\le p\le1.\label{eq:main-delta-bound}
\end{equation}
Then for large enough $n,$
\begin{align}
 & \PP\left[\Var[f|_{\cR}]\le\exp\left(-\frac{C}{\rho}\ln\frac{e}{\rho}\cdot\log\frac{8e}{p\Var[f]}\right)\right]\le p.\label{eq:aintover-variance-bound}
\end{align}
\end{thm}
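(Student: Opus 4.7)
The plan is to deploy the machinery from Section~\ref{sec:control} twice on a common probability space. Run~\hyperref[Procedure-Y]{Procedure~$\Pi$} once with target $f$ to obtain a controlled process $Y^{+}(t)$ steering toward $f^{-1}(1)$, and once with target $1-f$ to obtain $Y^{-}(t)$ steering toward $f^{-1}(0)$; crucially, share the \emph{same} random permutation $\pi$, set $T$ of player-controlled times, and random assignment $z$ across the two runs. Both runs then induce the same random restriction $\cR$, with fixed coordinate set $S:=\pi(\bar{T}\cap\{1,\ldots,m\})$ and assignment $z|_{S}$. I will take $\epsilon\asymp\rho$ and $m\asymp(1-\rho)n/(1-\epsilon)$ so that $|[n]\setminus S|$ concentrates at $\rho n$; reducing to the exactly-$\lceil\rho n\rceil$-alive restriction in the statement is a routine Chernoff tail plus a Bernoulli-vs.-hypergeometric coupling, whose error is absorbed into the $\exp(-\epsilon n/8)$ slack already present in Lemma~\ref{lem:influence-remain-small-P}.

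Choose $\delta:=c'p\Var[f]$ for a small absolute constant $c'$. The hypothesis~(\ref{eq:main-delta-bound}) exactly underwrites the premise $\delta\ge\mINF(f)^{\epsilon/80}/\epsilon$ of Lemma~\ref{lem:stopping-time-tau}; the lower bound on $\rho$ gives $(16/\epsilon)\ln(4/\epsilon)\le\ln(1/\mINF(f))$; and Fact~\ref{fact:mINF-bound} makes the $\exp(-\epsilon n/8)$ term negligible for large $n$. Applying Lemma~\ref{lem:stopping-time-tau} to both $f$ and $1-f$, and using $\min(f(0),\,1-f(0))\ge\Var[f]$,
\[
\PP[\tau^{+}\le m]+\PP[\tau^{-}\le m]\le\frac{3\delta}{f(0)}+\frac{3\delta}{1-f(0)}=\frac{3\delta}{\Var[f]}\le p/2.
\]
Applying Lemma~\ref{lem:Q-KL-bound} with the common $\cG_{m}$ to both runs, with probability at least $1-2\delta\ge 1-p/2$,
\[
\KLfrac{Y^{\pm}(n)\mid\cG_{m}}{X(n)\mid\cG_{m}}\le K:=\frac{C}{\epsilon}\ln\frac{en}{n-m+1}\log\frac{e}{\delta}=O\!\left(\frac{1}{\rho}\log\frac{e}{\rho}\cdot\log\frac{8e}{p\Var[f]}\right).
\]

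Off both bad events, conclude as follows. By the definition of $Q$, the conditional law of $Y^{+}(n)$ given $\cG_{m}$ is supported on $\{y:f(y)=1\text{ and }y|_{S}=z|_{S}\}$, while $X(n)\mid\cG_{m}$ is uniform on $\{y:y|_{S}=z|_{S}\}$; since both distributions are deterministic on the fixed coordinates, the above KL-divergence equals the KL-divergence between the marginals on the alive set $[n]\setminus S$. The data-processing inequality applied to the indicator of the support of $Y^{+}(n)\mid\cG_{m}$ then yields $\mu(f|_{\cR})\ge 2^{-K}$; the symmetric argument for $Y^{-}$ gives $1-\mu(f|_{\cR})\ge 2^{-K}$. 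Therefore
\[
\Var[f|_{\cR}]=\mu(f|_{\cR})\bigl(1-\mu(f|_{\cR})\bigr)\ge 2^{-2K}=\exp\bigl(-O(K)\bigr),
\]
matching~(\ref{eq:aintover-variance-bound}) after absorbing the factor of $2$ into the constant.

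The main obstacle is a clean accounting of the parameter choices: $\delta$ must be small enough that $K$ delivers the \emph{sharp} $\log(1/(p\Var[f]))$ exponent in~(\ref{eq:aintover-variance-bound}), yet large enough so that the four failure events (two stopping-time violations and two KL-divergence violations) together fit in probability $p$; simultaneously, the hypotheses on $\rho$ and $p$ in the statement must align exactly with the premises of Lemmas~\ref{lem:stopping-time-tau} and~\ref{lem:Q-KL-bound}. The conceptual step of coupling the two runs through a common $\cG_{m}$ so that \emph{the same} random restriction witnesses both the lower bound on $\mu(f|_{\cR})$ and the lower bound on $1-\mu(f|_{\cR})$ is what upgrades the mere nonconstancy statement to the sharp two-sided variance bound in the theorem.
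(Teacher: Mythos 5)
Your proposal is correct and follows essentially the same route as the paper: run Procedure~$\Pi$ on $f$ and $1-f$ over a shared $(\pi,T,z)$, bound the stopping time via Lemma~\ref{lem:stopping-time-tau}, bound the KL via Lemma~\ref{lem:Q-KL-bound}, and convert KL-to-mean via Proposition~\ref{fact:KL-to-mean} (your data-processing step is the $\gamma(f)=1$ case of that proposition). One small but substantive point worth tightening: the paper takes $m=(1-\epsilon)n$ with $\epsilon\le\rho/3$, which places $|U|$ concentrated strictly \emph{above} $\lfloor(1-\rho)n\rfloor$, so that a set $S$ of the exact required size can always be subsampled from $U$, and the passage from $U$ to $S$ is then handled by a clean one-line Markov argument (since $\mu(f|_{(\pi S,y)})$ is an average over $\mu(f|_{(\pi U,z)})$). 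Your choice $m\asymp(1-\rho)n/(1-\epsilon)$ centers $|U|$ exactly at $(1-\rho)n$, giving a two-sided fluctuation in which $|U|<\lfloor(1-\rho)n\rfloor$ with nonnegligible probability; in that event one would need to \emph{enlarge} $S$ beyond $U$, which does not obviously preserve the lower bound on $\mu(f|_\cR)$, so the ``Bernoulli-vs.-hypergeometric coupling'' as stated leaves a gap. Shifting $m$ as in the paper and invoking the Markov/subsampling step in Claim~\ref{claim:restriction-mean-bound} closes it without altering any of your other estimates.
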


Before proving the above theorem, we record the following simple fact.
\begin{prop}
\label{fact:KL-to-mean}Let $f:\mopo^{n}\to\{0,1\}$ be a Boolean
function. Let $\mu$ be the uniform distribution and $\gamma$ be
some arbitrary distribution over $\mopo^{n}.$ If
\begin{align*}
 & \gamma(f)\ge\delta,\quad\KL{\gamma}{\mu}\le K.
\end{align*}
Then
\[
\mu(f)\ge2^{-(K+H(\delta))/\delta}.
\]
In particular, if $\gamma(f)=1,$ then
\[
\mu(f)\ge2^{-K}.
\]
\end{prop}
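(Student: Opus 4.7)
The plan is to reduce the statement to a computation on a two-point space via the data-processing inequality, and then verify a one-dimensional monotonicity.

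First I would push $\gamma$ and $\mu$ forward along the deterministic map $f\colon\mopo^n\to\{0,1\}$. Let $\gamma'$, $\mu'$ denote the pushforwards, and set $a=\gamma(f)\ge\delta$ and $b=\mu(f)$. The data-processing inequality gives $\KL{\gamma'}{\mu'}\le\KL{\gamma}{\mu}\le K$, which when written out on $\{0,1\}$ becomes
\[
K\;\ge\;a\log\frac{a}{b}+(1-a)\log\frac{1-a}{1-b}\;=\;-H(a)-a\log b-(1-a)\log(1-b).
\]
Since $b\in[0,1]$ forces $\log(1-b)\le 0$, the last term is nonnegative and can be dropped, yielding $a\log(1/b)\le K+H(a)$, i.e.\
\[
\mu(f)\;=\;b\;\ge\;2^{-(K+H(a))/a}.
\]
This already delivers the stated special case: when $\gamma(f)=1$ we have $a=1$, $H(a)=0$, and the bound collapses to $\mu(f)\ge 2^{-K}$.

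For the general bound I need to replace $a$ by the weaker hypothesis $a\ge\delta$, which amounts to showing that $g(x):=(K+H(x))/x$ is nonincreasing on $(0,1]$. Differentiating and using $H'(x)=\log((1-x)/x)$, one gets
\[
x^2 g'(x)\;=\;xH'(x)-K-H(x).
\]
A short algebraic simplification gives the identity $xH'(x)-H(x)=\log(1-x)$, so $x^2 g'(x)=\log(1-x)-K\le 0$ for every $K\ge 0$. Hence $g$ is decreasing and $g(a)\le g(\delta)$, which combined with the previous display yields $\mu(f)\ge 2^{-g(\delta)}=2^{-(K+H(\delta))/\delta}$.

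There is no real obstacle: the argument is two standard ingredients (data processing on $\{0,1\}$, dropping a sign-definite term) plus the elementary monotonicity check, which hinges on the one-line identity $xH'(x)-H(x)=\log(1-x)$.
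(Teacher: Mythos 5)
Your proof is correct, and it is essentially the same argument as the paper's: reduce the $n$-dimensional KL to the two-point KL between the Bernoulli distributions with means $\gamma(f)$ and $\mu(f)$, write it out, and drop the nonnegative cross term. The only cosmetic differences are that you invoke the data-processing inequality directly for the pushforward along $f$, whereas the paper symmetrizes $\gamma$ over $f$-preserving permutations and appeals to convexity of KL (these are equivalent routes to the same two-point computation), and that you verify the monotonicity of $x\mapsto(K+H(x))/x$ via the clean identity $xH'(x)-H(x)=\log(1-x)$, a fact the paper merely asserts as ``elementary calculus'' when it reduces to $\gamma(f)=\delta$ exactly.
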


\begin{proof}
Assume that $\gamma(f)=\delta,$ and $\KL{\gamma}{\mu}=K.$ This is
without loss of generality because $2^{-(K+H(\delta))/\delta}$ is
decreasing in $K$ and increasing in $\delta$ by elementary calculus.
Let $\gamma_{0},\gamma_{1}$ be the uniform distributions over $f^{-1}(0)$
and $f^{-1}(1)$, respectively. Note $\delta\gamma_{1}+(1-\delta)\gamma_{0}=\EE_{\pi}[\gamma\circ\pi],$
where $\pi$ is taken over the product of permutations on $f^{-1}(0)$
and $f^{-1}(1).$ Thus by convexity,
\begin{align*}
\KL{\delta\gamma_{1}+(1-\delta)\gamma_{0}}{\mu} & \le\KL{\gamma}{\mu}.
\end{align*}
Consequently, let $\eta=\mu(f)$, then
\begin{align*}
 & \delta\log\frac{\delta}{\eta}+(1-\delta)\log\frac{1-\delta}{1-\eta}\le K\\
\implies\quad & \delta\log\frac{1}{\eta}+(1-\delta)\log\frac{1}{1-\eta}\le K+H(\delta)\\
\implies\quad & \delta\log\frac{1}{\eta}\le K+H(\delta)\\
\implies\quad & \eta\ge2^{-(K+H(\delta))/\delta}.\qedhere
\end{align*}
\end{proof}
Next, we set forth to prove Theorem~\ref{thm:main}. Set
\begin{align}
 & \epsilon=\max\left\{ \eta:\eta\le\frac{\rho}{3},\eta n\text{ is an integer}\right\} ,\label{eq:main-eps}\\
 & \delta=p\Var[f]/8.\label{eq:main-delta}
\end{align}
It's straightforward to verify that for large enough $n$, and large
enough constant $C$, we have
\begin{align}
 & \frac{16}{\epsilon}\ln\frac{4}{\epsilon}\le\ln\frac{1}{\mINF(f)},\label{eq:main-eps-mINF}\\
 & \delta \ge\frac{\mINF(f)^{\epsilon/80}}{\epsilon}.\label{eq:main-delta-inf}
\end{align}

We will run Procedure $\Pi$ described in Section~\ref{subsec:KL-control-process}
with the above setting of parameters $\epsilon$ and $\delta$. Recall that Procedure $\Pi$
first samples the random permutation $\pi$, the set $T$ of times
controlled by the player and $z\in\mopo^{\pi\bar{T}}$ is the random
assignment for $t\not\in T$ in Phase 1. \iffalse We run Procedure
$\Pi$ with respect to $f$ and $1-f.$ To make a distinction, we
let $Y'(t)$ denote the first process outputted by $\Pi$ with respect
to $1-f,$ and $Y(t)$ still denote the first process outputted by
$\Pi$ with respect to $f$. \fi Let $m=(1-\epsilon)n$, $U=\{1,2,\ldots,m\}\setminus T.$
Note that by a Chernoff bound, the probability that $|U|$ is less
than $\lfloor(1-\rho)n\rfloor$ is at most $\exp(-\epsilon n/2).$
Conditioning on that $|U|\ge\lfloor(1-\rho)n\rfloor,$ we randomly
sample a set $S$ of $\lfloor(1-\rho)n\rfloor$ elements from $U$. 

\begin{figure}[h]
\includegraphics[scale=0.35]{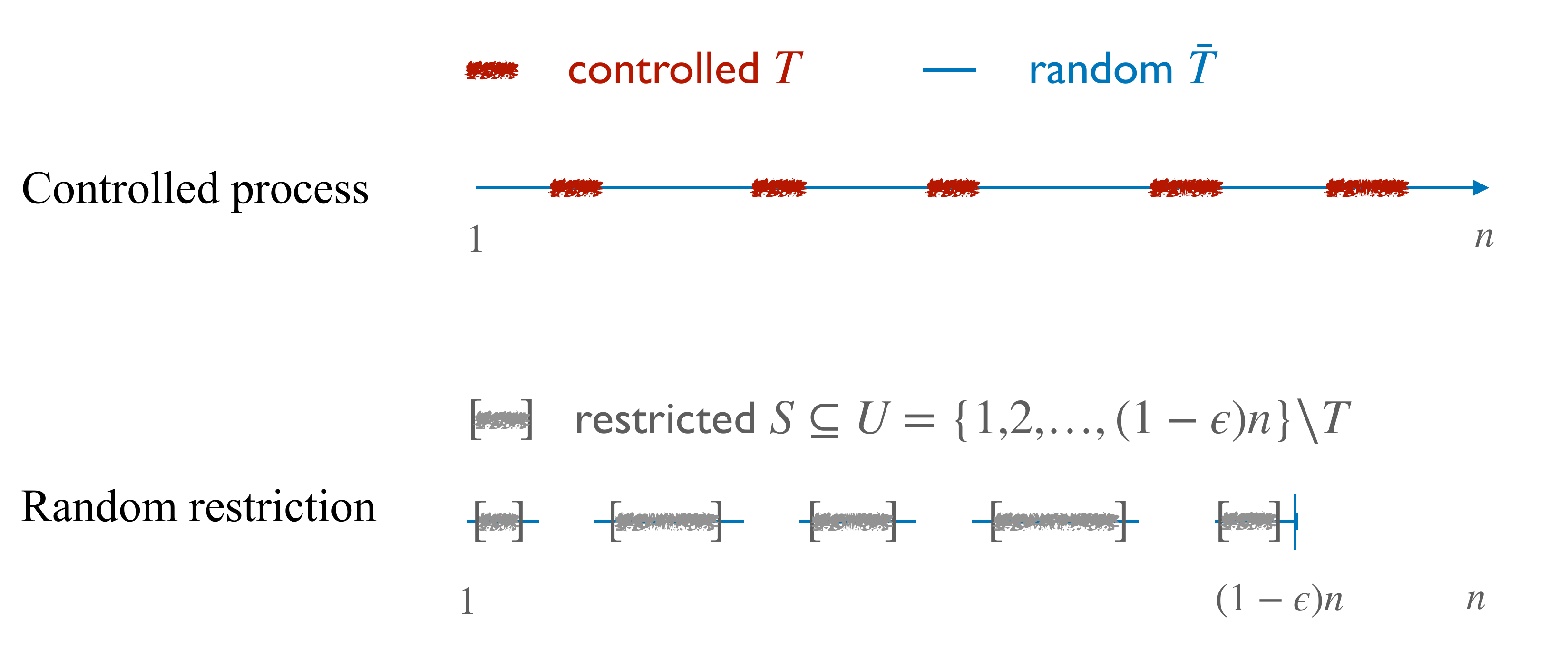}\caption{Random process and random restrictions}
\end{figure}
Consider the following event
\begin{align*}
 & \cE:=\{\tau>m\}\cap\{|U|\ge\lfloor(1-\rho)n\rfloor\}.
\end{align*}
The first event in this intersection can be bounded by Lemma~\ref{lem:stopping-time-tau}. Hence,
by union bound, 
\begin{align}
\PP[\neg\cE] & \le\frac{3\delta}{f(0)}+\exp(-\epsilon n/2), \nonumber\\
 & \le\frac{4\delta}{f(0)}, \label{eq:bad-event-main}
\end{align}
where 
the second step applies Fact~\ref{fact:mINF-bound} with~(\ref{eq:main-delta-inf}).
Conditioning on $\cE,$ $\cR=(S,Y(n))$ is distributed as a random
restriction that keeps $\lceil\rho n\rceil$ variables alive. Furthermore,
the restricted function $f|_{\cR}$ satisfies that its mean is bounded
away from 0 with high probability.
\begin{claim}
\label{claim:restriction-mean-bound}For some universal constant $C'>1$,
\begin{align}
 & \PP[\mu(f|_{\cR})<2^{-K}\mid\cE]\le\frac{2\delta}{\PP[\cE]},\label{eq:f-mean}
\end{align}
where 
\begin{equation}
K=\frac{C'}{\epsilon}\ln\frac{e}{\epsilon}\log\frac{e}{\delta}.\label{eq:main-K}
\end{equation}
\end{claim}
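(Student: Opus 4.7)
The plan is to apply Proposition~\ref{fact:KL-to-mean} together with Lemma~\ref{lem:Q-KL-bound} to the finer restriction $\cR_U := (\pi U,\, z|_{\pi U})$, and then to transfer the conclusion from $\cR_U$ back to $\cR$ by averaging over the ``buffer'' coordinates $A := \pi U \setminus \pi S$. Setting $B := [n]\setminus \pi U$, the alive coordinates of $\cR$ decompose as $A\sqcup B$; for each $y_A\in\mopo^A$ write $\cR_U(y_A) := (\pi U,\, z|_{\pi S}\cup y_A)$ for the restriction that refines $\cR$ by additionally fixing the coordinates of $A$ to $y_A$. Because $z|_A$ is uniform on $\mopo^A$ independently of the remainder of $\cG_m$, and only $z|_{\pi S}$ enters into $\cR$ itself, the uniform measure on the $\cR$-subcube factorises as
\begin{equation*}
\mu(f|_\cR)\;=\;\EE_{y_A}\bigl[\mu(f|_{\cR_U(y_A)})\bigr],
\end{equation*}
where $y_A$ is uniform on $\mopo^A$.

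For each $y_A$, let $\gamma_*(y_A)$ denote the law of $Y(n)|_B$ under Procedure~$\Pi$ conditioned on $\cG_m$ with $z|_A$ replaced by $y_A$. Since $Y(n)\sim Q$ is supported on $f^{-1}(1)$ and $Y(n)|_{\pi U}=z|_{\pi S}\cup y_A$ under this conditioning, the restricted function $f|_{\cR_U(y_A)}\colon\mopo^B\to\{0,1\}$ is identically~$1$ on the support of $\gamma_*(y_A)$. Proposition~\ref{fact:KL-to-mean} therefore produces
\begin{equation*}
\mu(f|_{\cR_U(y_A)})\;\ge\;2^{-K_*(y_A)},\qquad K_*(y_A) := \KL{\gamma_*(y_A)}{\mathrm{unif}(\mopo^B)}.
\end{equation*}

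Applying Lemma~\ref{lem:Q-KL-bound} with the present value $m=(1-\epsilon)n$, so that $\ln\frac{en}{n-m+1}$ is comparable to $\ln(e/\epsilon)$, yields $K_*(z|_A)\le K_0$ except on a subset of $\cG_m$ of measure at most $\delta$, where $K_0:=(C/\epsilon)\ln(e/\epsilon)\log(e/\delta)$. Decomposing $\cG_m = (\cG_m'',\,z|_A)$ with $\cG_m''$ collecting everything in $\cG_m$ other than $z|_A$ (together with the sub-sampling set $S$), and using that $z|_A$ is uniform on $\mopo^A$ conditional on $\cG_m''$, this bound rewrites as
\begin{equation*}
\EE_{\cG_m''}\!\left[\PP_{y_A}\bigl[K_*(y_A) > K_0 \mid \cG_m''\bigr]\right]\;\le\;\delta.
\end{equation*}
Markov's inequality then produces an event depending only on $\cG_m''$, of probability at least $1-2\delta$, on which at most half of the $y_A\in\mopo^A$ fail the bound $K_*(y_A)\le K_0$.

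On this event, the factorisation of $\mu(f|_\cR)$ combines with the pointwise lower bound to give $\mu(f|_\cR)\ge \tfrac12\cdot 2^{-K_0} = 2^{-(K_0+1)}$; and since $\mu(f|_\cR)$ is measurable with respect to $\cG_m''$ alone, we conclude $\PP[\mu(f|_\cR) < 2^{-(K_0+1)}]\le 2\delta$, so that~\eqref{eq:f-mean} follows after choosing the absolute constant $C'$ large enough that $K\ge K_0+1$ and intersecting with $\cE$. The main obstacle the argument navigates is that a direct application of Proposition~\ref{fact:KL-to-mean} to the pair $\bigl(Y(n)|_{[n]\setminus\pi S}\mid\cG_m,\;\mathrm{unif}(\mopo^{[n]\setminus\pi S})\bigr)$ would contribute a fatal additive $|A|\asymp\rho n$ term to the KL, because $Y(n)|_A$ is deterministic once $\cG_m$ is fixed; the Markov--then--average step above exploits the freshness of $z|_A$ with respect to $\cR$ and trades this doomed bound for a harmless factor of $2$ in the failure probability.
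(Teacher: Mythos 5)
Your proof is correct and follows essentially the same approach as the paper's: apply Proposition~\ref{fact:KL-to-mean} together with Lemma~\ref{lem:Q-KL-bound} to the finer restriction $\cR_U=(\pi U,z|_{\pi U})$, and then pass from $\cR_U$ to $\cR$ via a Markov averaging argument over the buffer coordinates $A$. The paper phrases the averaging step through the auxiliary conditional distribution $\zeta(S,y)$ of $(U,z)$ given $\{S\subseteq U\}\wedge\{z|_{\pi S}=y\}$, whereas you fix $U$ inside $\cG_m''$ and average only over fresh values $y_A$ — these are equivalent decompositions, and your concluding remark about why a direct application of Proposition~\ref{fact:KL-to-mean} at the coarse restriction would incur a fatal $|A|\asymp\rho n$ penalty is a faithful account of exactly the obstacle the paper's $\zeta(S,y)$ step is designed to avoid.
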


Our theorem follows immediately from the above claim. Indeed, if $\cE'$ is
defined analogously to $\cE$ where $f$ is replaced by $1-f$, we
have
\begin{align*}
 & \PP[\Var[f|_{\cR}]<2^{-K-1}]\\
 & \qquad\le\PP[\neg\cE\lor\neg\cE']+\PP[\mu(f|_{\cR})<2^{-K}\mid\cE]+\PP[\mu(f|_{\cR})>1-2^{-K}\mid\cE']\\
 & \qquad\le\frac{4\delta}{f(0)}+\frac{4\delta}{1-f(0)}+\frac{2\delta}{\PP[\cE]}+\frac{2\delta}{\PP[\cE']}\\
 & \qquad\le\frac{8\delta}{\Var[f]},
\end{align*}
where in the first step, note that $2^{-K}<1/2$; the second step plugs in~(\ref{eq:bad-event-main})-(\ref{eq:f-mean});
in the final step, note that by~(\ref{eq:main-delta}),
\[
\PP[\cE],\PP[\cE']\ge1-\frac{4\delta}{\Var[f]}=1-\frac{p}{2}\ge\frac{1}{2}>\Var[f].
\]
 In view of~(\ref{eq:aintover-variance-bound}) and~(\ref{eq:main-K}),
the proof to Theorem~\ref{thm:main} is finished. It remains to prove
Claim~\ref{claim:restriction-mean-bound}.
\begin{proof}[Proof of Claim~\emph{\ref{claim:restriction-mean-bound}}]
 Recall that $m=(1-\epsilon)n$ and $U=\{1,2,\ldots,m\}\setminus T.$
Abbreviate 
\[
\cG_{m}=(\pi|_{\{1,2,\ldots,m\}},T\cap\{1,2,\ldots,m\},z|_{\pi\{1,2,\ldots,m\}}),
\]
the information of the random process generated by Procedure $\Pi$
excluding the player's choices up to time $m$. Further, let
$\gamma$ be the distribution of $Y(n)|_{\pi\bar{U}}$ given $\cG_{m}$.
Then by definition of Procedure $\text{\ensuremath{\Pi}}$ running
with with respect to function $f$, $\gamma(f|_{(\pi U,z)})=1$. Hence,
\begin{align*}
 & \PP\left[\mu(f|_{(\pi U,z)})<2^{-K}\mid\cE\right]\\
 & \qquad\le\PP\left[\KLfrac{Y(n)\mid\cG_{m}}{X(n)\mid\cG_{m}}>K\;\middle|\;\cE\right]\\
 & \qquad\le\frac{\delta}{\PP[\cE]},
\end{align*}
where the first step is due to Proposition~\ref{fact:KL-to-mean}; the
second step follows Lemma~\ref{lem:Q-KL-bound} for a suitable constant
$C'$ in~(\ref{eq:main-K}). Now for any 
\[
S\in\binom{\{1,2,\ldots,m\}}{\lfloor(1-\rho)n\rfloor},\quad\text{and }y\in\mopo^{\pi S},
\]
let $\zeta(S,y)$ be the distribution of $(U,z)\mid\{(S\subseteq U)\land(z|_{\pi S}=y)\}.$
Then
\begin{align*}
 & \mu(f|_{(\pi S,y)})=\EE_{(U,z)\sim\zeta(S,y)}[\mu(f|_{(\pi U,z)})].
\end{align*}
Thus, by Markov's inequality $\mu(f|_{(\pi S,y)})<2^{-K-1}$ implies
that 
\[
\PP_{(U,z)\sim\zeta(S,y)}\left[\mu(f|_{(\pi U,z)})<2^{-K}\right]>\frac{1}{2}.
\]
Consequently,
\begin{align*}
 & \frac{1}{2}\PP_{\pi,S,y}[\mu(f|_{(\pi S,y)})<2^{-K-1}]\\
 & \quad\le\PP_{\pi,S,y,(U,z)\sim\zeta(S,y)}\left[\mu(f|_{(\pi U,z)})<2^{-K}\right]\\
 & \quad=\PP_{\pi,T,z}\left[\left.\mu(f|_{(\pi U,z)})<2^{-K}\;\right|\;\{|U|\ge\lfloor(1-\rho)n\rfloor\}\right]\\
 & \quad\le\frac{\delta}{\PP[\cE]}.
\end{align*}
In view of~(\ref{eq:f-mean}), we are done. 
\end{proof}

\begin{rem}
\label{rem:main}If we consider the random restriction that keeps
each variable alive independently with probability $\rho$, the same
statement holds with a slight
modification on the proof to the corresponding version of Claim~\ref{claim:restriction-mean-bound}.
\end{rem}

\subsection*{Optimality of our result}

Our Theorem~\ref{thm:main} is essentially optimal with respect to
$p$ and $\rho$. Consider the $(1-\rho)$-random restriction $\cR_{1-\rho}$.
First, we check the optimality in the regime when $\rho=\Omega((\log(1/\mINF(f)))^{-1}).$
Consider the majority function $\MAJ_{n}:\{-1,1\}^{n}\to\{0,1\},$
\[
\MAJ_{n}(x)=\begin{cases}
0 & \sum_{i\in n}x_{i}>0,\\
1 & \text{otherwise.}
\end{cases}
\]
It's well-known that $\mINF(\MAJ_{n})=\Theta(1/\sqrt{n}).$ For $\rho=\Omega(1/\log n),$
let $\cR_{1-\rho}=(S,X)$ be the random restriction. Say $|S|=n-k.$
With probability at least $1-\exp(-\Theta(\rho n))$, $k\in(0.5\rho n,2\rho n).$
Then by the Berry-Esseen Theorem, for $\lambda = O(\sqrt{(n-k)\log(n-k)})$,
\begin{align*}
 & \PP\left[\left|\sum_{i\in S}X_{i}\right| \ge\lambda \right]=\exp\left(-\Theta\left(\frac{\lambda^{2}}{n-k}\right)\right),\\
 & \Var\left[\MAJ_{n}|_{\cR_{1-\rho}} \;\middle\vert\;  \left\{ \left|\sum_{i\in S}X_{i}\right|\ge\lambda\right\}\right]\le\exp\left(-\Theta\left(\frac{\lambda^{2}}{k}\right)\right).
\end{align*}
Thus for $p=\Omega(1/\sqrt{n-k})$,
\[
\PP\left[\Var[\MAJ_{n}|_{\cR_{1-\rho}}]\le p^{\Theta\left(\frac{1}{\rho}\right)}\right]= p.
\]
Our bound on the variance is tight up to a $\log(1/\rho)$
factor in the exponent with respect to $\rho$.

Second, we check the optimality in the regime when $\rho=O((\log(1/\mINF(f)))^{-1}).$
Consider the tribes function $\TR_{n}:\{-1,1\}^{n}\to\{0,1\},$ 
\[
\TR{}_{n}:x\mapsto\AND_{n/w}\left(\cdots,\bigvee_{j=1}^{w}x_{ij},\cdots\right),
\]
where for any positive integer $w$, $n$ is the smallest integral
multiple of $w$ such that $\Pr[\TR_{n}(x)=1]\le1/2$; $\AND_{n/w}:\{0,1\}^{n/w}\to\{0,1\}$
is the standard logic and function. In particular, $n\approx\ln2\cdot w2^{w},$
$w=\log n-\log\ln n+o(1).$ Then $\mINF(\TR{}_{n})=\Theta(\log n/n)$,
and $\mu(\TR_{n})=\Theta(1).$ Apply random restriction $\cR$ that
fixes a variable with probability $1-1/w=1-\Theta(\log1/\mINF(\TR{}_{n})).$
Then for large enough $n$, 
\[
\PP[\TR_{n}|_{\cR}\equiv1]=\left(1-\left(\frac{1}{2}+\frac{1}{2w}\right)^{w}\right)^{n/w}=\Omega(1).
\]
Therefore, in this regime with constant
probability, there is no variance left under random restrictions for
the tribes function. Our bound is tight up to a $\log\log$
factor in the sense that it gives a bound up to the minimum $\rho$
where there is still some variance left after the random restriction.

\subsection{Block sensitivity is large almost everywhere}

We now move on to our second theorem, concerning block sensitivity.
The following is a nonasymptotic version of Theorem~\ref{thm:blockintro}. 
\begin{thm}
There are absolute constant $C>1.$ For any Boolean function $f:\{-1,1\}^{n}\to\{0,1\}$,
let $\tau=\mINF(f)<1/C$, $\Var[f]=2^{-o(n)}.$ Then for large enough
$n$, 
\begin{align*}
\PP_{x}\left[\mathbf{\bs}_{f}(x)\ge\frac{\Var[f]\ln1/\tau}{C\ln\ln1/\tau}\right]\ge1-\exp\left(-\Theta\left(\frac{1}{\Var[f]}\ln\ln\frac{1}{\tau}\right)\right).
\end{align*}
\end{thm}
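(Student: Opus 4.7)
My plan is to derive the block sensitivity lower bound by reducing to Theorem~\ref{thm:main}, exploiting the elementary fact that whenever a restriction of $f$ to a subcube containing $x$ is nonconstant, there is a sensitive block of $x$ supported in the alive coordinates. Concretely, for any $B\subseteq[n]$ let $\cR_{B,x}$ denote the restriction fixing $\bar{B}$ to $x|_{\bar{B}}$; if $f|_{\cR_{B,x}}$ is nonconstant, then any $y$ in the subcube with $f(y)\neq f(x)$ yields the sensitive block $\{j\in B:y_j\neq x_j\}\subseteq B$ at $x$.

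Write $V:=\Var[f]$. The first step is to choose $\rho=\Theta\!\left(\frac{\ln\ln(1/\tau)}{V\ln(1/\tau)}\right)$ and $p=\exp\!\left(-\Theta(\ln\ln(1/\tau)/V)\right)$ so that the hypotheses of Theorem~\ref{thm:main} are satisfied; the key quantitative observation is that $\tau^{\rho/C}=\exp(-\Theta(\ln\ln(1/\tau)/V))$, so by slightly inflating the constant in front of $\rho$ one can absorb the polylogarithmic $1/(\rho V)$ prefactor appearing in~(\ref{eq:main-delta-bound}). Theorem~\ref{thm:main} then tells us that a uniformly random restriction keeping exactly $m':=\lceil\rho n\rceil$ variables alive leaves $f$ nonconstant with probability at least $1-p$.

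Next, define $q(x):=\PP_{S}[f|_{\cR_{S,x}}\text{ is nonconstant}]$, where $S$ is uniform over size-$m'$ subsets of $[n]$. The pair $(S,x|_{\bar{S}})$ with $x$ and $S$ independent uniform has exactly the law of a uniform random restriction, so $\EE_{x} q(x)\ge 1-p$, and Markov's inequality gives $\PP_x[q(x)<1/2]\le 2p$. For any $x$ with $q(x)\ge 1/2$, sample a uniformly random partition of $[n]$ into $m=\lfloor 1/\rho\rfloor$ blocks $P_1,\ldots,P_m$ of size $m'$, and set $Z_i:=\II\{f|_{\cR_{P_i,x}}\text{ nonconstant}\}$. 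Each $P_i$ is marginally uniform over size-$m'$ subsets, hence $\EE[Z_i\mid x]=q(x)\ge 1/2$. The disjointness of the $P_i$'s forces the resulting sensitive blocks to be disjoint, so $\bs_f(x)\ge\sum_i Z_i$ for \emph{every} realization of the partition; since $\bs_f(x)$ does not depend on the partition, taking expectation yields $\bs_f(x)\ge \EE\sum_i Z_i=m\,q(x)\ge m/2=\Theta\!\left(\frac{V\ln(1/\tau)}{\ln\ln(1/\tau)}\right)$. Combined with the Markov step, $\bs_f(x)$ meets the target size except on an event of measure at most $2p=\exp(-\Theta(\ln\ln(1/\tau)/V))$, matching the claim.

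The main obstacle is the parameter bookkeeping in the first step: the constraint $p\ge 8\tau^{\rho/C}/(\rho V)$ from Theorem~\ref{thm:main} carries a $1/(\rho V)$ factor that is polylogarithmic in $1/\tau$, so the constants have to be tuned so that this overhead is dominated by the exponentially small $\tau^{\rho/C}$ term. Once $\rho$ and $p$ are correctly set, the remainder is a short Markov-plus-averaging reduction that never invokes concentration, since the deterministic inequality $\bs_f(x)\ge\max_P\sum_i Z_i\ge\EE_P\sum_i Z_i$ already converts a first-moment bound into a pointwise lower bound.
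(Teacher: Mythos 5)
Your proof is correct and follows essentially the same approach as the paper: both fix a target block count $M=\Theta(1/\rho)$, partition $[n]$ into $M$ blocks of size roughly $\rho n$, observe that each block paired with $x|_{\bar{B}}$ is a uniform random restriction of the size handled by Theorem~\ref{thm:main}, use disjointness of the blocks to turn nonconstant restrictions into disjoint sensitive blocks, and close with a first-moment/Markov argument. The only organizational difference is that the paper phrases the last step as a single double-counting inequality $\tfrac{1}{2}\PP_x[\bs_f(x)<M/2]\le\PP_{\cR}[f|_{\cR}\text{ constant}]$ over the joint randomness $(x,\text{partition})$, whereas you first isolate the per-point quantity $q(x)=\PP_S[f|_{\cR_{S,x}}\text{ nonconstant}]$, apply Markov to $1-q(x)$, and then use the pointwise estimate $\bs_f(x)\ge\EE_P\sum_i Z_i=m\,q(x)$; these two arrangements are interchangeable. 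Your parameter bookkeeping for the hypothesis $p\ge 8\tau^{\rho/C}/(\rho V)$ is also sound --- the prefactor $1/(\rho V)$ is indeed only polylogarithmic in $1/\tau$ and is absorbed by enlarging the constant in $\rho$, exactly as you observe.
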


\begin{proof}
Let 
\[
M=\left\lceil \frac{2\Var[f]\ln1/\tau}{C\ln\ln1/\tau}\right\rceil .
\]
Let $X\in\{-1,1\}^{n}$ be random. Randomly partition $[n]$ into
$M$ sets, $S_1,S_2,\ldots,S_M$, each of size $\lfloor n/M\rfloor$ with maybe a small number
of remaining indices. Note that for any $i\in[M]$, $\cR=(S_i, X)$ is a random restriction
of fixed size. Then by Theorem~\ref{thm:main}, with probability
at least $1-\exp(-\Theta(\log\log(1/\tau)/\Var[f])),$ $\Var[f|_{\cR}]>0$.
In that case, exists $T_{i}\subseteq S_{i}$, such that $f(X\oplus(-1)^{\1_{T_{i}}})\not=f(X).$
The statement thus holds by the following double-counting principle,
\begin{align*}
 & \frac{1}{2}\PP_{x}[\bs_{f}(x)<M/2]\le\PP_{\cR}[f|_{\cR}\text{ is constant}].\qedhere
\end{align*}
\end{proof}

\subsection{Decision tree complexity of random restriction to monotone functions}

We record another application of our main result regarding the decision
tree complexity of the restricted function, which is in some sense
a reverse statement to the famous H{\aa}stad's switching lemma. Let $\DT(f)$
denote the deterministic decision tree complexity of $f$.\footnote{Although
the theorem is stated with respect to the deterministic decision tree complexity,
one can replace the deterministic decision tree complexity by many other
complexity measures, for example, the randomized decision tree complexity, as
they are polynomially related for total functions.}
\begin{thm}[Decision tree complexity of random restrictions]
There are absolute constant $C>1.$ For any monotone function $f:\{-1,1\}^{n}\to\{0,1\}$,
such that 
\begin{align}
 %& \mINF(f)\le\frac{1}{C}, \label{eq:bounds-I-dt}\\
 & \log\left(\frac{1}{\mINF(f)}\right) \ge C\log\left(\frac{1}{\Var[f]}\right). \label{eq:bounds-V-vs-I-dt}
\end{align}
Let $\cR$ be a random restriction that keeps exactly $\lceil\rho n\rceil$
variables alive, where 
\[
\rho=\Omega\left(\sqrt{\frac{\log\Var[f]}{\log\mINF(f)}\log\frac{\log\mINF(f)}{\log\Var[f]}}\right).
\]
Then for large enough $n,$
\begin{equation}
\PP\left[\DT(f|_{\cR})\ge\mINF(f)^{-\Theta(\rho)}\right]\ge \frac{1}{2}. \label{eq:prob-dt-thm}
\end{equation}
\end{thm}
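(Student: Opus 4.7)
The plan is to combine Theorem~\ref{thm:main} with the O'Donnell-Saks-Schramm-Servedio (OSSS) inequality, which for a monotone Boolean $g$ computed by a depth-$d$ decision tree $T$ gives $\Var[g] \le \sum_i \delta_i(T)\, \INF_i(g) \le d \cdot \mINF(g)$, hence $\DT(g) \ge \Var[g]/\mINF(g)$. Since restrictions of monotone functions remain monotone, it suffices to simultaneously lower bound $\Var[f|_\cR]$ and upper bound $\mINF(f|_\cR)$ with probability at least $1/2$, and then divide.

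For the variance, apply Theorem~\ref{thm:main} with $p = 1/4$: with probability at least $3/4$,
\[
\Var[f|_\cR] \ge \exp\!\left(-\frac{C_1 \log(e/\rho)\, \log(1/\Var[f])}{\rho}\right).
\]
For the max influence, monotonicity is crucial. The multilinear extension of a monotone Boolean $f$ satisfies $|\partial_i f(x)| \le 1/2$ for every $x$, and $\partial_i f$ has a fixed sign (depending only on $i$) throughout $[-1,1]^n$. Hence $(\partial_i f)^2 \le |\partial_i f|/2$ pointwise, and letting $X \in \{-1,0,1\}^n$ encode $\cR$ (with alive coordinates set to $0$),
\[
\INF_i(f|_\cR) = \EE_y\!\left[(\partial_i f|_\cR(y))^2\right] \le \tfrac{1}{2}\, \EE_y\!\left[|\partial_i f|_\cR(y)|\right] = \tfrac{1}{2}\, |\partial_i f(X)|,
\]
the final equality using the constant sign. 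Since $X$ has the law of $X((1-\rho)n)$ under the uniform process $P$, Lemma~\ref{lem:influence-remain-small-P} with $\epsilon = \rho$ and $\theta = \mINF(f)^{\rho/C_2}$ yields $\mINF(f|_\cR) \le \mINF(f)^{\rho/C_2}$ with probability $1 - o(1)$; the hypothesis on $\rho$ in the theorem is stronger than the regularity condition $\tfrac{16}{\rho}\ln\tfrac{4}{\rho} \le \ln\tfrac{1}{\mINF(f)}$ required by the lemma.

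Combining by a union bound and applying OSSS to $f|_\cR$, with probability at least $1/2$,
\[
\DT(f|_\cR) \ge \frac{\Var[f|_\cR]}{\mINF(f|_\cR)} \ge \mINF(f)^{-\rho/C_2} \cdot \exp\!\left(-\frac{C_1 \log(e/\rho)\, \log(1/\Var[f])}{\rho}\right).
\]
The hypothesis $\rho \gtrsim \sqrt{\bigl(\log(1/\Var[f])/\log(1/\mINF(f))\bigr)\, \log\bigl(\log(1/\mINF(f))/\log(1/\Var[f])\bigr)}$ is precisely calibrated so that $\rho^2 \log(1/\mINF(f))$ dominates $\log(e/\rho)\, \log(1/\Var[f])$, making the exponential penalty at most $\mINF(f)^{\rho/(2C_2)}$. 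The product is therefore $\ge \mINF(f)^{-\Omega(\rho)}$, as claimed.

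The main obstacle is the max-influence bound, which exploits monotonicity in an essential way. For general $f$, $\partial_i f$ changes sign on the cube, so one cannot collapse the level-$2$ quantity $\INF_i(f|_\cR) = \mu((\partial_i f)^2|_\cR)$ to the level-$1$ quantity $\mu(\partial_i f|_\cR)$ that the uniform-process analysis of Section~\ref{sec:control} already controls; doing so would require a genuine level-$2$ concentration bound. In the monotone case this collapse is free, which is precisely why the theorem is stated only for monotone $f$.
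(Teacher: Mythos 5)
Your proof is correct and follows essentially the same route as the paper's: combine the OSSS inequality $\DT(g)\ge\Var[g]/\mINF(g)$ with Theorem~\ref{thm:main} (for the variance lower bound) and Lemma~\ref{lem:influence-remain-small-P} together with monotonicity (to identify $\INF_i(f|_{\cR})$ with a constant times the first-order coefficient $|\partial_i f(X)|$), then calibrate $\rho$ so the variance penalty is dominated by $\mINF(f)^{-\Theta(\rho)}$. One minor remark: the OSSS inequality $\Var[g]\le\sum_i\delta_i(T)\INF_i(g)$ holds for all Boolean $g$, not only monotone ones; monotonicity is used only in the step identifying $\INF_i(f|_{\cR})$ with $\tfrac12|\partial_i f(X)|$, exactly as you explain in your final paragraph.
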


\begin{proof}
For simplicity, assume that $\rho n$ is a positive integer. We need
the following well-known result due to O'Donnell et al.~\cite{osss2005dt-maxinf}:
For any Boolean function $h,$
\begin{equation}
\mINF(h)\cdot\DT(h)\ge\Var[h].\label{eq:osss-ineq}
\end{equation}
Consider the  uniform  process $X(t)$. By definition,
$X((1-\rho)n)$ induces a random restriction $\cR$ that keeps exactly
$\rho n$ variables alive. By Lemma~\ref{lem:influence-remain-small-P},
\begin{align*}
\Pr_{P}\left[\max_{0\le t\le(1-\rho)n}|\partial_{i}f(X(t))|\ge\mINF(f)^{\frac{\rho}{30}}\right] & \le\mINF(f)^{\frac{\rho}{40}}+\exp(-\rho n/8).
\end{align*}
Since $f$ is monotone, the influence $\INF_{i}(f|_{\cR})=|\partial_{i}f(X((1-\rho)n))|$
for any alive coordinate $i$. The above formula then implies
\begin{equation}
\PP_{\cR}\left[\mINF(f|_{\cR})\ge\mINF(f)^{\frac{\rho}{30}}\right]\le\mINF(f)^{\frac{\rho}{40}}+\exp(-\rho n/8).\label{eq:maxInf-random-restriction}
\end{equation}
By Theorem~\ref{thm:main},
\begin{equation}
\PP\left[\Var[f|_{\cR}]\le\exp\left(-\frac{C}{\rho}\log\frac{e}{\rho}\cdot\log\frac{8e}{\Var[f]p}\right)\right]\le p.\label{eq:var-random-restriction}
\end{equation}
Set $p=1/3$. Then combining~(\ref{eq:osss-ineq})-(\ref{eq:var-random-restriction}),
\begin{align*}
\PP & \left[\DT(f)\ge\mINF(f)^{-\frac{\rho}{60}}\right]\\
 & \ge\PP\left[\DT(f)\ge\exp\left(-\frac{C}{\rho}\log\frac{e}{\rho}\cdot\log\frac{8e}{\Var[f]p}+\frac{\rho}{30}\ln\frac{1}{\mINF(f)}\right)\right]\\
 & \ge \PP\left[\Var[f|_\cR] \ge  \exp\left(-\frac{C}{\rho}\log\frac{e}{\rho}\cdot\log\frac{8e}{\Var[f]p}\right) 
  \land \left(\mINF(f|_{\cR})\le\mINF(f)^{\frac{\rho}{30}}\right)
 \right]\\
 & \ge 1-p-\mINF(f)^{\frac{\rho}{40}}-\exp(-\rho n/8)\\
 & \ge 1 - p - 2\mINF(f)^{\frac{\rho}{40}},
\end{align*}
where the first step holds for our choice of $\rho$ and~(\ref{eq:bounds-V-vs-I-dt}); the final step holds by Fact~\ref{fact:mINF-bound}. Finally, by our choice of $\rho$ and the bound on $\mINF(f)$, we have $2\mINF(f)^{\frac{\rho}{40}}<1/6$. In view of~(\ref{eq:prob-dt-thm}), we have finished the proof.
\end{proof}

\section{Random Restrictions and Hypercontractivity\label{sec:hypercontractive} }

In this section, we consider the continuous random process revealing information
about the inputs $X\in\{-1,1\}^{n}$ gradually in a bit by bit manner.
We establish a hypercontractivity theorem for this ``operator,''
and then use the new hypercontractivity theorem to show that the first-order
Fourier coefficients remain small under random restriction given that
the original function has small individual influences.

\subsection{A martingale setup for random restrictions}

Consider the following random process. Let $x\in\DC$ be a uniformly
random element. Let $(\tau_{i})_{i\in[n]}$ be random variables uniformly
distributed in the interval $[0,1]$. $\tau$ induces a permutation
on $[n]$. This is essentially the only relevant information. For
technical reasons we prefer this continuous description in this section. Define
$S(t)=\{i:\tau_{i}\le t\}$, and define process $X(t)\in\CC$ as follows
\[
X_{i}(t)=\begin{cases}
0 & \tau_{i}>t,\\
x_{i} & \tau_{i}\le t.
\end{cases}
\]
In another word, a random $\pm1$ variable is revealed with probability
$t$ at time $t$. This random process induces a random restriction
$\cR(t)=(S(t),Y)$ of function $f$, that all the variables in $S(t)$
is set according to $Y$ while the other variables are left alive.
Below, we collect some properties of a function $f$ with respect
to the above process. 
\begin{prop}
\label{prop:nabla-f}For any multilinear function $f:[-1,1]^{n}\to\RR$
and any $t\ge0,$ 
\begin{enumerate}
\item \label{enu:nabla-f-length}$\EE[|\nabla f(X(t))|^{2}]\le\|f\|_{\infty}^{2}/(1-t),$ 
\item \label{enu:del-vs-inf}$\EE[\partial_{i}f(X(t))^{2}]\le\INF_{i}[f],$
for $i=1,2,\ldots,n.$ 
\end{enumerate}
\end{prop}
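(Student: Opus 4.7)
The plan is to evaluate both quantities exactly via the Fourier expansion and then bound them using only elementary observations; I do not expect any serious obstacle here.

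First I would handle part \eqref{enu:del-vs-inf}. Writing $\partial_i f(x)=\sum_{S\ni i}\hat f(S)\chi_{S\setminus\{i\}}(x)$, I would evaluate this at $X(t)$. A character $\chi_{S\setminus\{i\}}(X(t))$ equals $\prod_{j\in S\setminus\{i\}}X_j(t)$, which is zero unless every $j\in S\setminus\{i\}$ has already been revealed (i.e.\ $\tau_j\le t$), in which case it is $\prod_{j\in S\setminus\{i\}}x_j$. Taking expectation of $\partial_i f(X(t))^2$ over the uniformly random signs $x\in\DC$ kills all off-diagonal Fourier terms, and then averaging over the independent revelation times $\tau_j\sim\mathrm{Unif}[0,1]$ yields
\[
\EE[\partial_i f(X(t))^2]=\sum_{S\ni i}\hat f(S)^2\, t^{|S|-1}.
\]
Since $t\in[0,1]$ and $|S|-1\ge 0$ for $S\ni i$, each factor $t^{|S|-1}\le 1$, so the right-hand side is bounded by $\sum_{S\ni i}\hat f(S)^2=\INF_i(f)$, proving \eqref{enu:del-vs-inf}.

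For part \eqref{enu:nabla-f-length}, I would sum the identity above over $i\in[n]$:
\[
\EE[|\nabla f(X(t))|^2]=\sum_{S}\hat f(S)^2\,|S|\,t^{|S|-1}=g'(t),\qquad\text{where }g(t):=\sum_{S}\hat f(S)^2\,t^{|S|}.
\]
The key observation is that $g$ is a polynomial with nonnegative coefficients in $t$, hence convex and increasing on $[0,1]$, and $g(1)=\EE_x[f(x)^2]\le\|f\|_\infty^2$. Convexity of $g$ yields the tangent-line inequality $g(1)\ge g(t)+g'(t)(1-t)$, so using $g(t)\ge 0$,
\[
g'(t)\le\frac{g(1)-g(t)}{1-t}\le\frac{g(1)}{1-t}\le\frac{\|f\|_\infty^2}{1-t},
\]
which is exactly \eqref{enu:nabla-f-length}. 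The only conceptual step is recognizing $\EE[|\nabla f(X(t))|^2]$ as the derivative of the noise-stability-like quantity $g(t)$, after which convexity does the work; no obstacle is anticipated.
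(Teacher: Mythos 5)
Your proof is correct. Part \eqref{enu:del-vs-inf} is essentially identical to the paper's: both expand $\partial_i f$ in the Fourier basis and observe that the cross terms vanish and the diagonal terms are weighted by $\EE[\II\{\tau_j\le t,\ \forall j\in S\setminus\{i\}\}]=t^{|S|-1}\le 1$. For part \eqref{enu:nabla-f-length}, however, you take a genuinely different route. The paper argues via Parseval on the restricted function: for $i\notin S(t)$ one has $\partial_i f(X(t))=\widehat{f|_{\cR(t)}}(i)$, so $\sum_i \II\{\tau_i>t\}\,\partial_i f(X(t))^2\le \|f\|_\infty^2$ pointwise, and then independence of $\II\{\tau_i>t\}$ from $\partial_i f(X(t))^2$ peels off a factor $(1-t)$ in expectation. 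You instead compute $\EE[|\nabla f(X(t))|^2]$ exactly as $g'(t)$ where $g(t)=\sum_S\hat f(S)^2 t^{|S|}=\EE[f(X(t))^2]$, and then exploit convexity of $g$ (nonnegative Fourier weights) together with $g(1)\le\|f\|_\infty^2$ to get the tangent-line bound $g'(t)\le g(1)/(1-t)$. Both are clean; the paper's version localizes the $(1-t)$ factor to the probability that a given coordinate is still alive and only needs $\EE[(f|_{\cR(t)})^2]\le\|f\|_\infty^2$, which generalizes readily to other moment bounds on $f$, while your version is more calculational and surfaces the pleasant structural fact that $t\mapsto\EE[f(X(t))^2]$ is convex, which could be reused elsewhere.
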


\begin{proof}
\ref{enu:nabla-f-length} Note that for $i\not\in S(t),$ by definition
\[
\partial_{i}f(X(t))=\widehat{f|_{\cR(t)}}(i).
\]
Thus, by Parseval's identity, 
\begin{align*}
 & \sum_{i=1}^{n}\II\{\tau_{i}>t\}\partial_{i}f(X(t))^{2}\le\EE[(f|_{\cR(t)})^{2}]\le\|f\|_{\infty}^{2}.
\end{align*}
Since $\II\{\tau_{i}>t\}$ and $\partial_{i}f(X(t))^{2}$ are independent,
we have 
\begin{align*}
\|f\|_{\infty}^{2}\ge\EE\left[\sum_{i=1}^{n}\II\{\tau_{i}>t\}\partial_{i}f(X(t))^{2}\right] & =(1-t)\EE[|\nabla f(X(t))|^{2}].
\end{align*}

\ref{enu:del-vs-inf} By Fourier expansion of $\partial_{i}f,$ 
\begin{align*}
\EE\left[\partial_{i}f(X(t))^{2}\right] & =\EE\left[\left(\sum_{S\ni i}\hat{f}(S)\chi_{S\setminus\{i\}}(X(t))\right)^{2}\right]\\
 & =\EE\left[\sum_{S\ni i}\hat{f}(S)^{2}\II\{\tau_{j}\le t,\,\forall j\in S\setminus\{i\}\}\right]\\
 & \le\INF_{i}(f).\qedhere
\end{align*}
\end{proof}

\subsection{A hypercontractive inequality for random restrictions}

As the time $t$ increases, the process $X(t)$ reveals more information
about the location of $X(1)$. Thus, for $0\le t\le T\le1,$ we may
view $f(X(t))$ as a ``noisy'' version of $f(X(T))$. It is therefore
expected that some hypercontractive inequality holds for those two
expressions. This intuition can be made concrete by the following
theorem. 
\begin{thm}[A hypercontractive inequality]
\label{thm:HC-ineq}For any $0\le t\le T\le1,$ and any multilinear
$f:[-1,1]^{n}\to\RR,$ we have for the random process $X$ defined
in the previous section, 
\begin{equation}
\left(\EE|f(X(t))|^{2+\epsilon}\right)^{\frac{1}{2+\epsilon}}\le\left(\EE|f(X(T))|^{2}\right)^{1/2},\label{eq:HC-ineq}
\end{equation}
where 
\[
\epsilon=T-t.
\]
\end{thm}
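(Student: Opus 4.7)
The plan is to reduce the theorem to a one-variable (two-point) inequality and then tensorize via Minkowski's integral inequality. The coordinates of $X(t)$ are independent, and each $X_i(t)$ factors as $X_i(t) = B_i Y_i$, where $B_i \sim \mathrm{Bernoulli}(t)$ and $Y_i$ is uniform on $\{\pm 1\}$, all mutually independent; this decomposition lets us interface with the classical Bonami--Beckner inequality on the Rademacher variables $Y_i$.

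\textbf{One-variable step.} For $g(x) = a + bx$ and $q = 2+\epsilon$, conditioning on $B$ gives
\[
\EE|a + bX(t)|^{q} = (1-t)|a|^{q} + t\,\EE_Y|a + bY|^{q}.
\]
The Bonami--Beckner two-point bound $\EE_Y|a + bY|^{q} \le (a^2 + (q-1)b^2)^{q/2}$ reduces the one-variable version of the theorem to the scalar inequality
\[
(1-t)|a|^{q} + t(a^2 + (q-1)b^2)^{q/2} \le (a^2 + Tb^2)^{q/2}.
\]
After normalizing $a = 1$ and setting $s = (q-1)b^2$, this becomes $(1-t) + t(1+s)^{q/2} \le (1+\gamma s)^{q/2}$ with $\gamma = T/(q-1) = (t+\epsilon)/(1+\epsilon)$. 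I would prove this by showing that $h(s) := (1+\gamma s)^{q/2} - t(1+s)^{q/2}$ satisfies $h(0) = 1-t$ and $h'(s) \ge 0$ for all $s \ge 0$: the monotonicity reduces to $\gamma/t \ge \bigl((1+s)/(1+\gamma s)\bigr)^{q/2 - 1}$, whose right-hand side is increasing in $s$ with supremum $\gamma^{1-q/2}$, so the task boils down to verifying the endpoint inequality $\gamma^{q/2} \ge t$, which follows from showing that $\phi(t) := t+\epsilon - (1+\epsilon) t^{2/(2+\epsilon)}$ is nonnegative on $[0,1]$ by a short monotonicity argument.

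\textbf{Tensorization.} I induct on $n$. Writing $f(x) = f_0(x_{>1}) + f_1(x_{>1}) \, x_1$ with $f_0, f_1$ multilinear in $x_{>1}$, the one-variable step applied in the $x_1$-coordinate yields, for every fixed $x_{>1}$,
\[
\bigl(\EE_{x_1}|f(x)|^{2+\epsilon}\bigr)^{2/(2+\epsilon)} \le f_0(x_{>1})^2 + T\, f_1(x_{>1})^2.
\]
Taking expectation over $x_{>1}$ distributed as $X_{>1}(t)$, the right-hand side becomes $\EE_{x_{>1}}\bigl[(f_0^2 + T f_1^2)^{(2+\epsilon)/2}\bigr]$. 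Recognizing $(f_0^2 + T f_1^2)^{1/2}$ as the weighted $\ell^2$-norm of the two-dimensional vector $(f_0, \sqrt{T}\, f_1)$, Minkowski's integral inequality (valid since $2+\epsilon \ge 2$) bounds this by $\bigl(\|f_0\|_{2+\epsilon}^2 + T \|f_1\|_{2+\epsilon}^2\bigr)^{(2+\epsilon)/2}$, where the norms are under the law of $X_{>1}(t)$. The inductive hypothesis then replaces each $\|f_j\|_{2+\epsilon}$ under the law of $X_{>1}(t)$ by the $L^2$-norm under the law of $X_{>1}(T)$, and these reassemble into $\EE|f(X(T))|^2$ using that $\EE[f(X(T))^2] = \EE_{x_{>1}}[f_0^2 + T f_1^2]$ under the law of $X_{>1}(T)$.

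The main obstacle, I expect, is the scalar inequality in the one-variable step, where the monotonicity of $h$ and the endpoint condition $\gamma^{q/2} \ge t$ both require honest (if elementary) calculus. Once the base case is secured, the Minkowski-based tensorization is clean: the appearance of $(f_0^2 + T f_1^2)^{1/2}$ as an $\ell^2$-norm is precisely what prevents the induction from stalling, because even though this object is not itself multilinear, Minkowski routes the induction through the multilinear pieces $f_0$ and $f_1$.
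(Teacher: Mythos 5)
Your proposal is correct, and the tensorization phase (Minkowski to split off one coordinate, then the inductive hypothesis on the rest) is essentially the same skeleton as the paper's proof. The genuine difference is in the base case. The paper handles $n=1$ by a direct, from-scratch Taylor expansion of $(1+a)^{2+\epsilon}$, exploiting the sign pattern of $\binom{2+\epsilon}{2k}$, and is forced to split into two scenarios (nonnegative $f$ and sign-changing $f$) because the two normalizations $f = ax+1$ with $|a|\le 1$ and $f = x+b$ with $|b|<1$ behave differently under that expansion. Your route instead plugs in the classical two-point Bonami--Beckner bound $\EE_Y|a+bY|^q \le (a^2+(q-1)b^2)^{q/2}$, which collapses everything to a single scalar inequality $(1-t) + t(1+s)^{q/2} \le (1+\gamma s)^{q/2}$ with $\gamma = T/(q-1)$, $s\ge 0$; the monotonicity argument reducing this to $\gamma^{q/2}\ge t$ and then to $\phi(t) = t+\epsilon - (1+\epsilon)t^{2/(2+\epsilon)}\ge 0$ (via $\phi(0)=\epsilon>0$, $\phi(1)=0$, $\phi'<0$ on $(0,1]$) is correct and avoids the case split entirely. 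The trade-off is self-containment: the paper's calculation derives the needed two-point fact inline, while yours outsources it to Bonami--Beckner and pays with a short calculus lemma; in exchange you get a cleaner, unified base case that also makes transparent exactly which numerical coincidence ($\gamma^{q/2}\ge t$, i.e., $T$ exactly absorbs the ``$(q-1)$'' loss from hypercontractivity) makes the theorem tight. One small point worth making explicit in a write-up: the $a=0$ normalization case isn't directly covered by setting $a=1$, but it is exactly the $s\to\infty$ limit of your scalar inequality, i.e., the endpoint $\gamma^{q/2}\ge t$ itself, so nothing is lost.
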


\begin{proof}
The proof is by induction on $n$. Once we establish the base case,
the inductive step follows from a standard argument. We first show
the inductive step since the base case is more involved.

\uline{Inductive step.} Let $f(z,x)=zg(x)+h(x),$ where $x\in[-1,1]^{n}$,
and $z\in[-1,1]$. Then 
\begin{align*}
\EE[|f & (Z(t),X(t))|^{2+\epsilon}]^{\frac{1}{2+\epsilon}}\\
 & =\left(\EE_{X(t)}\left[\EE_{Z(t)}[|Z(t)g(X(t))+h(X(t))|^{2+\epsilon}]\right]\right)^{\frac{1}{2+\epsilon}}\\
 & \stackrel{(\mathrm{i})}{\le}\left(\EE_{X(t)}\left[\EE_{Z(T)}[|Z(T)g(X(t))+h(X(t))|^{2}]^{\frac{2+\epsilon}{2}}\right]\right)^{\frac{1}{2+\epsilon}}\\
 & \stackrel{(\mathrm{ii})}{\le}\left(\EE_{Z(T)}\left[\EE_{X(t)}[|Z(T)g(X(t))+h(X(t))|^{2+\epsilon}]^{\frac{2}{2+\epsilon}}\right]\right)^{\frac{1}{2}}\\
 & \stackrel{(\mathrm{iii})}{\le}\left(\EE_{Z(T)}\left[\EE_{X(T)}[(Z(T)g(X(T))+h(X(T)))^{2}]\right]\right)^{\frac{1}{2}},
\end{align*}
where (i) holds because for any fixed $X(t),$ $f=z\cdot g(X(t))+h(X(t))$
is a multilinear function on $z$, thus we can apply the inductive hypothesis;
inequality (iii) is true, again because for any fixed $Z(T),$ $f$
is a multilinear function on $x$ and we apply the inductive hypothesis; (ii) follows by the Minkowski inequality,
in particular, 
\begin{align*}
\left(\EE_{x}\left[\EE_{z}[f(z,x)^{2}]^{\frac{2+\epsilon}{2}}\right]\right)^{\frac{2}{2+\epsilon}} & \le\EE_{z}\left[\EE_{x}[|f(z,x)|^{2+\epsilon}]^{\frac{2}{2+\epsilon}}\right].
\end{align*}

\uline{Base case}. For the base case we consider two scenarios
separately. (i) $f$ is nonnegative (or, nonpositive) function. Let
$f:[-1,1]\to[0,\infty)$, say $f=ax+b$. It suffices to consider the
special case when $f=ax+1$ for some $0<a<1$ after normalization.
The reason is as follows: Since $f$ is nonnegative, $0\le|a|\le b.$
Thus, we can assume $a\ge0,$ this assumption does not change $\EE[|f(X(t))|^{p}]$.
For $b=0,$ there is nothing to prove. So we can assume $b=1$ by
normalization. For $a=0,$ $f$ is constant function. The statement
is clearly true. Finally, for the case $a=1,$ it follows from continuity.
After the above simplification, we make the actual analysis. 
\begin{align*}
\EE[(aX(t)+1)^{2+\epsilon}] & =(1-t)+\frac{t}{2}((1+a)^{2+\epsilon}+(1-a)^{2+\epsilon})\\
 & =1-t+t\sum_{k\ge0}a^{2k}\binom{2+\epsilon}{2k},\\
 & =1+t\sum_{k>0}a^{2k}\binom{2+\epsilon}{2k},
\end{align*}
where the second step uses Taylor expansion of $(1+x)^{p}$ for $|x|<1$.
Note that for $\epsilon\in[0,1]$ and any $k\ge2,$ 
\[
\binom{2+\epsilon}{2k}\le0.
\]
Hence, 
\begin{align}
\EE[(aX(t)+1)^{2+\epsilon}] & \le1+t(1+\epsilon/2)(1+\epsilon)a^{2}.\label{eq:single-bit-2+eps}
\end{align}

On the other hand, 
\begin{align}
\EE[(aX(T)+1)^{2}]^{\frac{2+\epsilon}{2}}%
 & =(1+Ta^{2})^{\frac{2+\epsilon}{2}}\nonumber \\
 & \ge1+(1+\epsilon/2)Ta^{2},\label{eq:single-bit-2}
\end{align}
where the last step follows from Fact~\ref{fact:inequality-a}. Compare~(\ref{eq:single-bit-2+eps})
and~(\ref{eq:single-bit-2}), we get that 
\[
\EE[(aX(t)+1)^{2+\epsilon}]\le\EE[(aX(T)+1)^{2}]^{1+\epsilon/2}
\]
as long as 
\begin{equation}
\epsilon\le\frac{T-t}{t}\label{eq:eps-hypercontractivity-gain}
\end{equation}
and $\epsilon\in[0,1].$

(ii) $f$ takes both positive and negative values, say $f=ax+b$.
This time it suffices to consider the special case when $f=x+b,$
for $0<b<1$. Since if $a$ is not 1, we can consider the function
$f/a.$ In addition, changing $b$ to $|b|$ does not affect $\EE[|f|^{p}]$.
Then 
\begin{align}
\EE[|X(t & )+b|^{2+\epsilon}]\nonumber \\
 & =(1-t)b^{2+\epsilon}+t/2((1+b)^{2+\epsilon}+(1-b)^{2+\epsilon})\nonumber \\
 & =(1-t)b^{2+\epsilon}+t\sum_{k\ge0}b^{2k}\binom{2+\epsilon}{2k}\nonumber \\
 & \le(1-t)b^{2}+t(1+b^{2}(1+\epsilon/2)(1+\epsilon))\nonumber \\
 & =1+(1-t)(b^{2}-1)+t(1+\epsilon/2)(1+\epsilon)b^{2},\label{eq:single-bit-2+eps-general}
\end{align}
where the third step uses the facts that $\binom{2+\epsilon}{2k}\le0$
for $k\ge2$ and $\epsilon\in[0,1],$ and that $b^{x}$ is decreasing
on $x$ for $0<b<1$. On the other hand, 
\begin{align}
(\EE[(X & (T)+b)^{2}])^{\frac{2+\epsilon}{2}}\nonumber \\
 & =(T+b^{2})^{1+\epsilon/2}\nonumber \\
 & =(1+b^{2})^{1+\epsilon/2}\left(1-\frac{1-T}{1+b^{2}}\right)^{1+\epsilon/2}\nonumber \\
 & \ge(1+(1+\epsilon/2)b^{2})\left(1-(1+\epsilon/2)\frac{1-T}{1+b^{2}}\right)\nonumber \\
 & =1+(1+\epsilon/2)b^{2}-(1+\epsilon/2)(1+b^{2}+\epsilon b^{2}/2)\frac{1-T}{1+b^{2}}\nonumber \\
 & =1+(1+\epsilon/2)b^{2}-(1+\epsilon/2)(1-T)-(1+\epsilon/2)b^{2}\epsilon(1-T)/(2+2b^{2})\nonumber \\
 & \ge1+(1+\epsilon/2)b^{2}-(1+\epsilon/2)(1-T)-(1+\epsilon/2)b^{2}\epsilon(1-T)/2,\label{eq:single-bit-2-general}
\end{align}
where the third step invokes Fact~\ref{fact:inequality-a} twice.
Let $R,L$ denote~(\ref{eq:single-bit-2-general}) and~(\ref{eq:single-bit-2+eps-general}),
respectively. Further, let $B=b^{2},$ then $R-L$ is a linear function
in $B$. To verify that $R\ge L$, one only needs verify the cases
when $B=0$ and $B=1$. Recall that $\epsilon=T-t,$ therefore 
\begin{align*}
B=0:\qquad\qquad R-L & =1-(1+\epsilon/2)(1-T)-t\\
 & =\epsilon-(1-T)\epsilon/2\\
 & =\epsilon(1+T)/2\\
 & \ge0,
\end{align*}
and 
\begin{align*}
B=1:\qquad\qquad R-L & =(1+\epsilon/2)(T-(1-T)\epsilon/2-t(1+\epsilon))\\
 & =(1+\epsilon/2)(\epsilon/2+T\epsilon/2-\epsilon t)\\
 & =(1+\epsilon/2)\epsilon/2(1+T-2t)\\
 & \ge0.
\end{align*}
This concludes our proof. 
\end{proof}
\begin{rem}
One can also prove a hypercontractive inequality of the $p$-norm
vs. 2-norm for $1<p<2$. The proof is analogous. 
\end{rem}

\subsection{
\texorpdfstring{$\ell_{\infty}$-Fourier mass of $f|_{\cR(t)}$ of the first order}{maximum first-order Fourier mass of the restricted function}\label{sec:beta-uniform}}

A key quantity in our analysis is the $\ell_{\infty}$-Fourier mass
of $f|_{\cR(t)}$ of the first order. Namely, 
\begin{equation}
\beta^{*}(t)=\max_{i\not\in S(t)}|\partial_{i}f(X(t))|.\label{eq:beta}
\end{equation}
In some sense, $\beta^{*}(t)$ represents the maximal influence of
$f|_{\cR(t)}.$ In particular, for the special case when $f$ is a
monotone Boolean function, $\beta^{*}(t)$ is exactly $\mINF(f|_{\cR(t)})$.
The importance of $\beta^{*}(t)$ will become clear in later sections.
Next, we show that with high probability $\beta^{*}(t)$ remains small
for $t$ even very close to 1. In fact, what we will show is that
\[
\beta=\max_{i\in[n]}|\partial_{i}f(X(t))|
\]
remains small with high probability. In particular, we establish the
following lemma using the hypercontractive inequality from the last
section. 
\begin{lem}[``influence'' remains small under random restriction]
\label{lem:influence-remains-small}Given $f:\{-1,1\}^{n}\to[-1,1].$
For any $0\le t<1$ such that 
\begin{equation}
\frac{8}{1-t}\ln\frac{2}{1-t}\le\ln\frac{1}{\mINF(f)}.\label{eq:lem-t-bound}
\end{equation}
Then for any $\theta\in(0,1)$, 
\[
\PP\left[\sup_{0\le s\le t}\beta(s)\ge\theta\right]\le\theta^{-3}\mINF(f)^{\frac{1-t}{8}}.
\]
\begin{comment}
\begin{lem}
In particular, for any $\theta\ge\mINF(f)^{\frac{1-t}{30}},$
\[
\PP\left[\sup_{0\le s\le t}\beta(s)\ge\theta\right]\le\mINF(f)^{\frac{1-t}{40}}.
\]
\end{lem}

\end{comment}
\end{lem}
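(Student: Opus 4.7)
The plan is to combine Doob's maximal inequality with the hypercontractive inequality of Theorem~\ref{thm:HC-ineq}, applied coordinate by coordinate. For each fixed $i \in [n]$, the key observation is that $M_i(s) := \partial_i f(X(s))$ is a martingale: since $\partial_i f$ is multilinear on $\RR^n$ and each coordinate of $X(s)$ is a martingale with respect to the natural filtration of the revealing times $\tau_j$, Fact~\ref{fact:martingales} implies that $M_i(s)$ is itself a martingale. Consequently, the process $|M_i(s)|^{2+\epsilon}$ is a nonnegative submartingale for every $\epsilon \ge 0$.

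I would then set $\epsilon := 1-t$ and apply Doob's inequality (Theorem~\ref{thm:Doob-ineq}) to this submartingale, obtaining
\[
\PP\!\left[\sup_{0 \le s \le t} |M_i(s)| \ge \theta\right] \le \frac{\EE|M_i(t)|^{2+\epsilon}}{\theta^{2+\epsilon}}.
\]
The right-hand side is controlled via the hypercontractive inequality applied to the multilinear function $\partial_i f$ with times $t$ and $T := 1$, so that $T - t = \epsilon$:
\[
\EE|M_i(t)|^{2+\epsilon} \le \left(\EE|M_i(1)|^2\right)^{(2+\epsilon)/2} = \INF_i(f)^{(2+\epsilon)/2} \le \mINF(f)^{1 + (1-t)/2}.
\]
Since $\theta \in (0,1)$ and $2+\epsilon \le 3$, we have $\theta^{-(2+\epsilon)} \le \theta^{-3}$, yielding the per-coordinate tail bound $\PP[\sup_{0 \le s \le t} |\partial_i f(X(s))| \ge \theta] \le \mINF(f)^{1+(1-t)/2}/\theta^3$.

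Finally, I would union bound over $i \in [n]$ to get $\PP[\sup_{s \le t}\beta(s) \ge \theta] \le n \cdot \mINF(f)^{1+(1-t)/2}/\theta^3$, and invoke the hypothesis to absorb the factor of $n$ into the excess exponent on $\mINF(f)$. Reaching the desired form $\theta^{-3}\mINF(f)^{(1-t)/8}$ reduces to showing $n \le \mINF(f)^{-1-3(1-t)/8}$, and the hypothesis $\frac{8}{1-t}\ln\frac{2}{1-t} \le \ln\frac{1}{\mINF(f)}$ delivers exactly this: it implies
\[
\mINF(f)^{-1-3(1-t)/8} \ge \left(\tfrac{2}{1-t}\right)^{8/(1-t)+3},
\]
which comfortably dominates $n$ in the regime $1-t = O(1/\log n)$ used by the downstream applications. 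The main obstacle is precisely this final bookkeeping step: verifying that the gap between the exponent $1+(1-t)/2$ coming from hypercontractivity and the target exponent $(1-t)/8$ is calibrated to swallow the union-bound factor, leaving just the slack dictated by the hypothesis.
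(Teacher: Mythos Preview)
Your Doob-plus-hypercontractivity skeleton is exactly right, but the final union-bound step has a genuine gap: the hypothesis \eqref{eq:lem-t-bound} relates only $t$ and $\mINF(f)$ and says nothing about $n$. Your proposal needs $n\le\mINF(f)^{-1-3(1-t)/8}$, and this simply need not hold. Take $f=\MAJ_n$ with $t$ constant (say $t=1/2$): then $\mINF(f)\asymp n^{-1/2}$, the hypothesis is satisfied for large $n$, yet your bound after the union step is $n\cdot\mINF(f)^{1+(1-t)/2}\asymp n^{3/8}$, which blows up rather than being at most $\mINF(f)^{(1-t)/8}<1$. The appeal to ``the regime $1-t=O(1/\log n)$ used by downstream applications'' does not rescue the lemma as stated, and even in that regime the reasoning is external to the hypothesis.

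The paper's proof avoids the dimension dependence by a different bookkeeping. It takes $T=(1+t)/2$ (so $T<1$) and $\epsilon=T-t$, applies hypercontractivity to go from time $t$ to time $T$ rather than to time $1$, and then splits
\[
\bigl(\EE[\partial_i f(X(T))^2]\bigr)^{1+\epsilon/2}\le \INF_i(f)^{\epsilon/2}\cdot \EE[\partial_i f(X(T))^2].
\]
Pulling out the common factor $\mINF(f)^{\epsilon/2}$ and summing the remaining $\EE[\partial_i f(X(T))^2]$ via Proposition~\ref{prop:nabla-f}\ref{enu:nabla-f-length} gives $\sum_i \EE[\partial_i f(X(T))^2]\le 1/(1-T)$, a quantity depending only on $t$, not on $n$. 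The hypothesis \eqref{eq:lem-t-bound} is then exactly what is needed to absorb the factor $1/(1-T)=2/(1-t)$ into $\mINF(f)^{\epsilon/4}$. In short, the missing idea in your argument is to stop the hypercontractivity at an intermediate time $T<1$ and use the Parseval-type bound on $\EE|\nabla f(X(T))|^2$ to replace the union-bound factor $n$ by $1/(1-T)$.
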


\begin{proof}
Take $T=(1+t)/2$ and let 
\begin{equation}
\epsilon=T-t.
\end{equation}
Then 
\begin{align}
\PP\left[\sup_{0\le s\le t}\beta(s)\ge\theta\right] & \le\PP\left[\sup_{0\le s\le t}\sum_{i=1}^{n}|\partial_{i}f(X(s))|^{2+\epsilon}\ge\theta^{2+\epsilon}\right]\nonumber \\
 & \stackrel{(\mathrm{i})}{\le}\theta^{-2-\epsilon}\sum_{i}\EE[|\partial_{i}f(X(t))|^{2+\epsilon}]\nonumber \\
 & \stackrel{(\mathrm{ii})}{\le}\theta^{-2-\epsilon}\sum_{i}(\EE[\partial_{i}f(X(T)){}^{2}])^{1+\epsilon/2}\nonumber \\
 & \stackrel{(\mathrm{iii})}{\le}\theta^{-2-\epsilon}\sum_{i}\INF_{i}(f)^{\epsilon/2}\EE[\partial_{i}f(X(T))^{2}]\nonumber \\
 & \stackrel{(\mathrm{iv})}{\le}\theta^{-2-\epsilon}\frac{\mINF(f)^{\epsilon/2}}{1-T}\nonumber \\
 & \stackrel{(\mathrm{v})}{\le}\theta^{-2-\epsilon}\mINF(f)^{\epsilon/4}%
\label{eq:beta-fine-bound}
\end{align}
where (i) is true due to Fact~\ref{fact:martingales} and Theorem~\ref{thm:Doob-ineq};
(ii) follows from Theorem~\ref{thm:HC-ineq}, (iii) follows from
Proposition~\ref{prop:nabla-f}~\ref{enu:del-vs-inf} , (iv) follows
from Proposition~\ref{prop:nabla-f}~\ref{enu:nabla-f-length} and
(v) follows by our choice of $T$, and~(\ref{eq:lem-t-bound}). 
\end{proof}

\subsection{\label{subsec:proof-influence-small}Proof of Lemma~\ref{lem:influence-remain-small-P}}

At this point, we have almost proved Lemma~\ref{lem:influence-remain-small-P} 
except that in the previous section we proved the version with the continuous
random process instead of the discrete one. Next, we show that the
continuous random process and the corresponding probability measure
$\tilde{P}$ used in Lemma~\ref{lem:influence-remains-small} is
close to the discrete uniform process generated by Procedure 1 in
Section~\ref{sec:control} with measure $P$ in the following sense. 
\begin{claim}
\label{claim:closeness-P-tildeP}Let $\cE_{t}$ be some event that
depends only on $X(t)$. Then for any  $\epsilon\in(0,1),$
\[
\Pr_{P}\left[\bigvee_{0\le t\le(1-\epsilon)n}\cE_{t}\right]\le\Pr_{\tilde{P}}\left[\bigvee_{0\le\tilde{t}\le(1-\epsilon/2)}\cE_{\tilde{t}}\right]+\exp(-\epsilon n/8).
\]
\end{claim}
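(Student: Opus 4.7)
The plan is to couple the two processes via a common source of randomness and then absorb the small defect with a single Chernoff estimate; conceptually, the discrete process will be recovered by sampling the continuous one at its jump times, so any state-dependent event transfers almost verbatim.

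Concretely, let $\tau_1,\ldots,\tau_n$ be independent uniform random variables on $[0,1]$ and let $x\in\{-1,1\}^n$ be uniform and independent of them. Define the continuous process by $\tilde X_i(\tilde t)=x_i\II\{\tau_i\le \tilde t\}$, which realizes the law $\tilde P$. To realize $P$ on the same probability space, let $\pi$ be the random permutation that sorts the $\tau_i$'s in increasing order and run Procedure~1 using $\pi$ together with the values $x_{\pi(1)},x_{\pi(2)},\ldots$ in this order. This coupling yields the identity
\[
X(t)\;=\;\tilde X\!\bigl(\tau_{\pi(t)}\bigr)\qquad\text{for every integer } 0\le t\le n,
\]
since both sides fix exactly the coordinates $\pi(1),\ldots,\pi(t)$ to the corresponding entries of $x$ and leave the remaining coordinates at $0$.

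Now set $N:=|\{i:\tau_i\le 1-\epsilon/2\}|$, which is binomial with parameters $n$ and $1-\epsilon/2$, and let $\cG:=\{N\ge(1-\epsilon)n\}$. On $\cG$, every integer $t\le(1-\epsilon)n$ satisfies $\tau_{\pi(t)}\le 1-\epsilon/2$. Consequently, viewing each $\cE_t$ as a subset of states that $X(t)$ may occupy, any witness to $\bigvee_{0\le t\le(1-\epsilon)n}\cE_t$ under $P$ automatically produces a witness — at the very same state, now reached at continuous time $\tau_{\pi(t)}\le 1-\epsilon/2$ — for $\bigvee_{0\le \tilde t\le 1-\epsilon/2}\cE_{\tilde t}$ under $\tilde P$. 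A union bound therefore gives
\[
\Pr_P\!\Bigl[\bigvee_{0\le t\le(1-\epsilon)n}\cE_t\Bigr]\;\le\;\Pr_{\tilde P}\!\Bigl[\bigvee_{0\le \tilde t\le 1-\epsilon/2}\cE_{\tilde t}\Bigr]\;+\;\Pr[\neg\cG].
\]

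The residual tail $\Pr[\neg\cG]$ is a routine binomial estimate: $M:=n-N$ is binomial with parameters $n$ and $\epsilon/2$, hence has mean $\epsilon n/2$, and the multiplicative Chernoff inequality yields $\Pr[M\ge 2\EE[M]]\le\exp(-\EE[M]/3)=\exp(-\epsilon n/6)\le\exp(-\epsilon n/8)$. There is no serious obstacle in this argument; the only delicate point is aligning the two time parameterizations via the order statistics $\tau_{\pi(t)}$, which is precisely what the coupling above arranges.
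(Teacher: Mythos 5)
Your proof is correct and follows essentially the same route as the paper's: realize both processes from the shared randomness $(\tau_i)_{i\in[n]}$ and $x$, observe that on the high-probability event the discrete trajectory up to step $(1-\epsilon)n$ is a sub-trajectory of the continuous one up to time $1-\epsilon/2$, and absorb the failure probability with a Chernoff bound on a $\mathrm{Bin}(n,\epsilon/2)$ variable. If anything, your write-up is slightly more careful than the paper's: the identity $X(t)=\tilde X(\tau_{\pi(t)})$ is stated explicitly, and the implication $\{\text{discrete event}\}\Rightarrow\{\text{continuous event}\}$ is given in the direction actually needed for the inequality (the paper's displayed $\impliedby$ appears to be a typo for $\implies$).
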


\begin{proof}
We couple the two processes in the obvious way. Recall that $(\tau_{i})_{i\in[n]}$
is the random variables uniformly distributed in the interval $[0,1]$
in the continuous process. $\tau$ induces a permutation $\pi$ on
$[n]$. As time $\tilde{t}$ goes from 0 to 1 in $\tilde{P}$, whenever
a variable is set to value $v\in\{-1,1\},$ the corresponding variable
in the discrete process is also set to $v$. Recall that we denote
the set of fixed variables at time $\tilde{t}$ in $\tilde{P}$ by
$S(\tilde{t}).$ Then at time $\tilde{t}=(1-\epsilon/2)$, by Chernoff
bound, 
\[
\Pr_{\tilde{P}}[|S(\tilde{t})|<(1-\epsilon)n]\le\exp(-\epsilon n/8).
\]
Conditioning on $|S(\tilde{t})|\ge(1-\epsilon)n$, 
\[
\left\{ \bigvee_{0\le t\le(1-\epsilon)n}\cE_{t}\right\} _{P}\impliedby\left\{ \bigvee_{0\le\tilde{t}\le(1-\epsilon/2)}\cE_{\tilde{t}}\right\} _{\tilde{P}}.
\]
The claim follows.
\end{proof}
Now, Lemma~\ref{lem:influence-remain-small-P} is an immediate corollary
of Lemma~\ref{lem:influence-remains-small} and Claim~\ref{claim:closeness-P-tildeP}.
\begin{cor}[Restatement of Lemma~\ref{lem:influence-remain-small-P}]
Let $\epsilon>0$ be such that 
\[
\frac{16}{\epsilon}\ln\frac{4}{\epsilon}\le\ln\frac{1}{\mINF(f)}.
\]
Then for any $\theta\in(0,1)$,
\begin{align*}
\Pr_{P}\left[\max_{0\le t\le(1-\epsilon)n}|\partial_{i}f(X(t))|\ge\theta\right] & \le\theta^{-3}\mINF(f)^{\frac{\epsilon}{16}}+\exp(-\epsilon n/8).
\end{align*}
\end{cor}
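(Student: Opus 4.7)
The plan is to combine the two ingredients developed in the preceding subsections: the continuous-time hypercontractive bound on $\beta(t)$ from Lemma~\ref{lem:influence-remains-small}, and the transfer principle between the continuous process $\tilde P$ and the discrete process $P$ from Claim~\ref{claim:closeness-P-tildeP}. The key observation is that the event $\{\max_{i\in[n]}|\partial_i f(X(t))|\ge\theta\}$ depends only on the current state $X(t)$ of the process, so it falls squarely into the form $\mathcal{E}_t$ assumed in Claim~\ref{claim:closeness-P-tildeP}.

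First I would set $t_0 = 1-\epsilon/2$ in the continuous process and verify that the hypothesis of Lemma~\ref{lem:influence-remains-small} is satisfied: we need
\[
\frac{8}{1-t_0}\ln\frac{2}{1-t_0} = \frac{16}{\epsilon}\ln\frac{4}{\epsilon}\le\ln\frac{1}{\mINF(f)},
\]
which is exactly the hypothesis placed on $\epsilon$ in the statement. Lemma~\ref{lem:influence-remains-small} then yields
\[
\Pr_{\tilde P}\!\left[\sup_{0\le \tilde t\le 1-\epsilon/2}\beta(\tilde t)\ge\theta\right]\le \theta^{-3}\mINF(f)^{(1-t_0)/8} = \theta^{-3}\mINF(f)^{\epsilon/16}.
\]

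Second, I would invoke Claim~\ref{claim:closeness-P-tildeP} with the same $\epsilon$, applied to the event $\mathcal{E}_t = \{\max_{i\in[n]}|\partial_i f(X(t))|\ge\theta\}$. This produces
\[
\Pr_{P}\!\left[\max_{0\le t\le (1-\epsilon)n}\beta(t)\ge\theta\right]\le\Pr_{\tilde P}\!\left[\sup_{0\le \tilde t\le 1-\epsilon/2}\beta(\tilde t)\ge\theta\right]+\exp(-\epsilon n/8),
\]
and chaining the two displays immediately gives the desired conclusion.

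There is essentially no obstacle here, since the continuous version has already absorbed the analytic content (the hypercontractive inequality Theorem~\ref{thm:HC-ineq} and the Doob-type maximal argument in equation~\eqref{eq:beta-fine-bound}), while the transfer claim is a straightforward Chernoff coupling between sampling coordinates one-by-one and revealing them at i.i.d.\ uniform times. The only minor point to be careful about is that after the coupling, the ``time'' in $P$ up to index $(1-\epsilon)n$ lies inside the window $[0,1-\epsilon/2]$ of $\tilde P$ with probability at least $1-\exp(-\epsilon n/8)$, which is exactly the slack budgeted in the factor of $2$ between $\epsilon$ and $\epsilon/2$ and is what accounts for the final exponent $\epsilon/16$ (rather than $\epsilon/8$) appearing in the bound.
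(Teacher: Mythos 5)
Your proof is correct and follows exactly the same route as the paper: take $t_0 = 1-\epsilon/2$, check that the stated hypothesis on $\epsilon$ is precisely what Lemma~\ref{lem:influence-remains-small} requires, apply that lemma to bound the continuous-process probability by $\theta^{-3}\mINF(f)^{\epsilon/16}$, and then transfer to the discrete process via Claim~\ref{claim:closeness-P-tildeP}. The paper's own proof is just a terser version of this same two-step chaining, so there is nothing to add.
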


\begin{proof}
Let $t=(1-\epsilon/2).$ The choice of $\epsilon$ guarantees that
we can apply Lemma~\ref{lem:influence-remains-small}. In view of
Claim~\ref{claim:closeness-P-tildeP}, we are done.
\end{proof}
\bibliographystyle{plain}
\bibliography{ref}

\end{document}